\newtheorem{theorem}{Theorem}[section]
\newtheorem{observation}[theorem]{Observation}
\newtheorem{claim}[theorem]{Claim}
\newtheorem{lemma}[theorem]{Lemma}
\theoremstyle{definition}
\newtheorem{definition}{Definition}[section]
\newtheorem*{remark}{Remark}
\newcommand{\abs}[1]{\ensuremath{\left|#1\right|}}
\newcommand{\defeq}[0]{\ensuremath{:=}}
\newcommand{\pdiff}[2][]{\ensuremath{\frac{\partial#1}{\partial#2}}}
\newcommand{\inb}[1]{\ensuremath{\left\{#1\right\}}}
\newcommand{\inp}[1]{\left(#1\right)}
\newcommand{\insq}[1]{\left[#1\right]}
\newcommand{\R}[0]{\ensuremath{\mathbb{R}}}
\newcommand{\etal}[0]{\emph{et al}}
\newcommand{\D}[1]{\ensuremath{\mathcal{D}_{#1}}}
\newcommand{\bip}[1]{\ensuremath{\mathrm{Bip}(#1)}}
\newcommand{\dis}[1]{\ensuremath{d\inp{#1}}}
\newcommand{\mon}[0]{{\sc Monotone 2-SAT}}
\newcommand{\avge}[1]{\langle #1\rangle}
\newcommand{\ignore}[1]{}
\begin{document}
\title{Lee-Yang theorems and the complexity of computing averages}
\newcommand{\sincgrant}{NSF grant CCF-1016896} 
\author{Alistair Sinclair\thanks{Computer Science Division, UC
    Berkeley. Email: sinclair@cs.berkeley.edu. Supported in part by
    \sincgrant.}\and Piyush Srivastava\thanks{Computer Science
    Division, UC Berkeley. Email:
    piyushsriva@gmail.com. Supported by the Berkeley Fellowship
    for Graduate Study and \sincgrant.}}



%
\date{}
\maketitle
\begin{abstract}
  We study the complexity of computing average quantities related to
  spin systems, such as the \emph{mean magnetization} and
  \emph{susceptibility} in the ferromagnetic Ising model, and the
  \emph{average dimer count} (or average size of a matching) in the
  monomer-dimer model.  By establishing connections between the
  complexity of computing these averages and the location of the
  complex zeros of the partition function, we show that these averages
  are \#P-hard to compute, and hence, under standard assumptions,
  computationally intractable.  In case of the Ising model, our approach
  requires us to prove an extension of the famous Lee-Yang Theorem
  from the 1950s.
\end{abstract}

\section{Introduction}
\label{sec:introduction}
\subsection{Background}
\label{sec:background}
Many natural computational problems in combinatorics, statistics and statistical
physics can be cast in the following framework.  We are given as input a graph
$G=(V,E)$ which implicitly defines a set $\Omega=\Omega(G)$ of combinatorial
structures, or {\it configurations\/} (such as matchings in~$G$, or $k$-colorings
of its vertices).  A weight function $w:\Omega\to\R^{+}$ assigns a positive weight
to every element $\sigma\in\Omega$, giving rise to a probability distribution
$\pi(\sigma)=w(\sigma)/Z$; here the normalizing factor
$Z:=\sum_{\sigma\in\Omega} w(\sigma)$ is called the {\it partition function}, and
is typically not known.  (While it is easy to compute the weight $w(\sigma)$ for
any given~$\sigma$, computing the sum~$Z$ is usually hard.  Note that the size
of~$\Omega$ is typically exponential in the size of~$G$.)  

The final ingredient is a non-negative function, or {\it observable}, 
$f:\Omega\to\R^{+}\cup\{0\}$, so that $f(\sigma)$ is also easily computable for
each $\sigma\in\Omega$.  Our goal is to compute the {\it average\/} of~$f$ with
respect to~$\pi$, i.e., 
\begin{equation*}\label{eq:avgedef}
   \avge{f} := \sum_{\sigma}\pi(\sigma)f(\sigma) = \frac{\sum_\sigma w(\sigma)f(\sigma)}{Z}.
\end{equation*}
As illustration we present two classical examples, both of which we will develop
extensively in our later results.
\par\medskip\noindent
{\it Example~1: The Ferromagnetic Ising Model.}  Here the configurations are 
assignments of spin values $\{+,-\}$ to the vertices of~$G$, i.e., $\Omega=\{+,-\}^V$.
The weight of a configuration~$\sigma$ is
\begin{equation}
  \label{eq:32}
   w_I(\sigma) \defeq \beta^{d(\sigma)} \lambda^{p(\sigma)},
 \end{equation}
where $d(\sigma)$ is the number of {\it disagreements\/} in~$\sigma$ (i.e., the number
of edges $\{u,v\}\in E$ with $\sigma(u)\ne\sigma(v)$), and $p(\sigma)$ is the number of 
vertices $v\in V$ with $\sigma(v)=+$.  The model has two parameters: the {\it edge
potential}~$\beta$, satisfying $0<\beta\le 1$, which governs the strength of the interaction
between neighboring spins; and the {\it vertex activity}~$\lambda>0$, which specifies
the tendency for spins to be~$+$.  The probability distribution $\pi(\sigma)=w_I(\sigma)/Z_I$
is the familiar {\it Gibbs distribution}, and $Z_I:=Z_I\inp{G,\beta,\lambda}$ is the associated
partition function.  Note that when $\beta < 1$, this distribution
favors agreement between neighboring spins.  Similarly, the
distribution favors `$+$' (respectively, `$-$') spins when $\lambda >
1$ (respectively, when $\lambda < 1$).

An important observable here is the {\it magnetization}~$p(\sigma)$,
which is just the
number of $+$-spins in~$\sigma$.  Its average, the {\it mean
  magnetization}, is a  
fundamental quantity in statistical physics: $$
   \avge{p} \defeq \frac{\sum_\sigma w_I(\sigma)p(\sigma)}{Z_I}.  $$
Other widely studied averages include the {\it mean energy}~$\avge{d}$ (the average
size of the cut between $+$-spins and $-$-spins) and the {\it
  susceptibility} $\chi \defeq \avge{p^2} -\avge{p}^2$
(the variance of the magnetization).\qed
\par\medskip\noindent
{\it Example~2: Matchings, or the Monomer-Dimer Model.}  The configurations~$\Omega$ 
are  all matchings (independent sets of edges) in~$G$.  The weight of a matching~$\sigma$
is
\begin{equation}
  \label{eq:33}
   w_M(\sigma) \defeq \lambda^{u(\sigma)} \prod_{e\in\sigma} \gamma_e,
 \end{equation}
where $u(\sigma)$ is the number of unmatched vertices (\emph{monomers}) in~$\sigma$.
The parameter $\lambda>0$ is the vertex weight (or \emph{monomer activity}), while for each
edge $e\in E$, $\gamma_e$ is an edge weight (or \emph{dimer activity}).  The Gibbs distribution
$\pi(\sigma) = w_M(\sigma)/Z_M$ is a natural weighted distribution on matchings, and the
partition function $Z_M:=Z_M(G,\{\gamma_e\}_{e\in E},\lambda)$ is the weighted matching polynomial
of~$G$.  

A natural observable here is $u(\sigma)$, the number of unmatched vertices (or monomers).
Note that $(|V|-\avge{u})/2$ is just the average size of a (weighted)
matching in~$G$ (or equivalently, the average number of dimers).\qed
\par\medskip\noindent
Observe that the Ising model partition function~$Z_I$ may be written as a polynomial in~$\lambda$
(actually a bivariate polynomial in~$\lambda$ and~$\beta$): $$
   Z_I = \sum_{k=0}^{|V|} \alpha_k\lambda^k,\qquad
               \hbox{\rm where ${\displaystyle\alpha_k = \sum_{\sigma:p(\sigma)=k}\beta^{d(\sigma)}}$.}$$
The mean magnetization then becomes
\begin{equation}\label{eq:magratio}
   \avge{p} = \frac{\sum_k k\alpha_k\lambda^k}{Z_I} = \frac{\mathcal{D}Z_I}{Z_I},
\end{equation}
where $\mathcal{D}$ denotes the differential operator $\lambda\pdiff{\lambda}$.  Similarly, the
mean energy and susceptibility $\chi$ can be written 
\begin{equation}\label{eq:ensuscratio}
   \avge{d} =  \beta\frac{\pdiff{\beta}Z_I}{Z_I};\qquad
   \chi \defeq \avge{p^2} - \avge{p}^2 =  \frac{\mathcal{D}^2Z_I}{Z_I} - \inp{\frac{\mathcal{D}Z_I}{Z_I}}^2.
\end{equation}
For matchings, the partition function~$Z_M$ is nothing other than the
matching polynomial $$
   Z_M = \sum_{k=0}^{|V|} \alpha_k\lambda^k, $$
where $\alpha_k = \sum_{\sigma:u(\sigma)=k} \prod_{e\in \sigma}\gamma_e$ is a weighted sum over
matchings with $k$ unmatched vertices.  The average number of monomers is then
\begin{equation}\label{eq:matchratio}
   \avge{u} = \frac{\mathcal{D}Z_M}{Z_M},
\end{equation}
where $\mathcal{D}$  again denotes the differential operator $\lambda\pdiff{\lambda}$.  

Equations~(\ref{eq:magratio})--(\ref{eq:matchratio}), which express averages as the ratio
of some derivative of the partition function to the partition function itself, are in fact no
accident; they are a consequence of the fact that the Gibbs distribution takes the
form $w(\sigma) = \exp(-H(\sigma))/Z$, where the \emph{Hamiltonian} $H(\sigma)$ is a sum of
natural observables.

The subject of this paper is the computational complexity of computing
natural averages such as (\ref{eq:magratio})--(\ref{eq:matchratio}).
While the complexity of computing partition functions has been widely studied in
the framework of Valiant's class {\#}P of counting
problems\footnote{\#P is the natural counting analog of the complexity
  class NP of decision problems.  See Appendix~\ref{sec:an-overv-compl} for details.}
 (see, e.g.,
\cite{bul06, bulatov05, cai_complexity_2012, cai_graph_2010, dyegre00,
  golgrojerthu09}),  
we are not aware of any corresponding results for the exact
computation of averages.  In the approximate setting, by contrast, it
is well known that (at least for the wide class of self-reducible
problems, which includes all the examples above) approximate
computation of the partition function is polynomial time equivalent to
sampling (approximately) from the Gibbs
distribution~$\pi$~\cite{jervalvaz86}; and sampling from~$\pi$ clearly
allows us to 
approximate averages to any desired accuracy (this is because
the observables one is concerned with in these situations admit
\emph{a priori} absolute bounds which are polynomial in the input
size).

What if we are interested in exact computation?  It is tempting to argue that computing
an average as in, say, (\ref{eq:magratio}) is at least as hard as computing the partition
function~$Z_I$, because (\ref{eq:magratio}) is a rational function and thus by evaluating
it at a small number of points we could recover the numerator and denominator polynomials
by rational interpolation.  Since the partition function is
{\#}P-hard\footnote{\#P-hard functions are as hard to compute as any
  function in \#P, and hence, widely believed to be intractable.  See
  Appendix~\ref{sec:an-overv-compl} for details.} in almost all cases of
interest (including $Z_I$ and $Z_M$ above at all but trivial values of the parameters), 
we would be done.

The problem with this argument is that, viewed as polynomials in the  
variable~$\lambda$, $Z_I$ and its derivative $\mathcal{D}Z_I$ may have
common factors (equivalently, viewed as polynomials in the complex variable
$\lambda$, they might have common zeros); and in this case we are
clearly not able to recover~$Z_I$ by rational interpolation.  Indeed it seems hard \emph{a priori}
to rule out the possibility that non-trivial interactions between $Z_I$ and its derivative
could conspire to make the average much easier to compute than~$Z_I$ itself.
Thus we are naturally led to the following question:
\par\smallskip\noindent
{\bf Question:} {\it Is it possible for the partition function~$Z$ and its derivative to have common 
zeros?}\footnote{Note that a common zero of $Z$ and $\mathcal{D}Z$ corresponds to a 
repeated zero of~$Z$, so this question is equivalent to the question of whether 
$Z$ may have repeated zeros.}
\par\smallskip\noindent
If the answer is no, then we will be able to conclude that computing
the average is as hard as computing~$Z$ itself, and thus {\#}P-hard in
all interesting cases.  

The main goal of this paper is to carry through this program using
resolutions of the above question in several interesting cases.  Before
proceeding, we mention a possible alternative approach to deal with
the issue of repeated zeros.  Since a generic polynomial does not have
repeated zeros, one could try to argue that any given graph $G$ can
be perturbed so that its partition function has distinct zeros,
and so that the magnetization of the perturbed graph is close
to the magnetization of the original graph.  One could then perform
the interpolation operations with respect to the perturbed graph, and
hope that if the perturbations are small enough, then the reduction
still goes through.  Indeed, this is our intuition for why
the magnetization (and other averages) should be hard to compute.

However, it is not clear how to convert this intuition into a formal
proof: in addition to a rather involved error analysis, this would
require showing that the partition function of a ``perturbed'' Ising model
behaves like a generic polynomial with respect to the structure of its
zeros, which seems no easier than answering the Question above.   Our
approach sidesteps this issue by tackling the question directly, and
in addition establishes a non-trivial property of the zeros of the
partition function that may be of independent interest. 

\subsection{Contributions}
\label{sec:contributions}
The question of common zeros actually turns out to be a deeper issue of wider interest
in statistical physics and complex analysis.  The study of the zeros of the partition function
dates back to the work of Lee and Yang in 1952: the famous {\it Lee-Yang
Circle Theorem\/}~\cite{leeyan52b}
proves the remarkable fact that the zeros of the ferromagnetic Ising partition function~$Z_I$
always lie on the unit circle in the complex plane.
This classical theorem, which has
since been re-proved many times in different ways~\cite{asano_lee-yang_1970,suzuki_zeros_1971,newman_zeros_1974},
was developed initially as an approach to studying phase transitions, but has
since spawned a more global theory connected with the Laguerre-P\'olya-Schur
theory of linear operators preserving stability of polynomials.  (See the Related Work section
below.)

Somewhat surprisingly, despite much activity in this area, the question of 
the location of the zeros of the derivative $\mathcal{D}Z_I$ (or equivalently, of repeated zeros of~$Z_I$ itself) remains
open.  The main technical contribution of this paper is to resolve this question as follows.
\begin{theorem}
  \label{thm:lee-yang-ext}
  Let $G = (V,E)$ be a connected graph, and suppose $0 < \beta <
  1$.  Then the zeros of the polynomial $\mathcal{D}Z_I(G,\beta,\lambda)$ (in $\lambda$)
  satisfy $\abs{\lambda} < 1$.
\end{theorem}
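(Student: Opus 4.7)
My plan is to reduce Theorem~\ref{thm:lee-yang-ext} to a statement about \emph{multiple} zeros of $Z_I$, and then to prove the latter by strengthening the multivariate Lee--Yang theorem. For the reduction, observe that the spin-flip $\sigma\mapsto-\sigma$ preserves $d(\sigma)$ and sends $p(\sigma)$ to $|V|-p(\sigma)$, so $\alpha_k=\alpha_{|V|-k}$ and $Z_I$ is self-reciprocal in $\lambda$: $\lambda^{|V|}Z_I\inp{1/\lambda}=Z_I\inp{\lambda}$. Applying $\mathcal{D}$ and rearranging gives the key identity
\begin{equation*}
  \mathcal{D}Z_I\inp{\lambda}+\lambda^{|V|}\,\mathcal{D}Z_I\inp{1/\lambda}=|V|\,Z_I\inp{\lambda}.
\end{equation*}
The classical Lee--Yang theorem places all zeros of $Z_I$ on the unit circle, so by Gauss--Lucas every zero of $Z_I'=\mathcal{D}Z_I/\lambda$ lies in the closed unit disk; in particular, every zero of $\mathcal{D}Z_I$ satisfies $\abs{\lambda}\le 1$. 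If $\mathcal{D}Z_I\inp{\lambda_0}=0$ with $\abs{\lambda_0}=1$, then reality of the coefficients and $1/\lambda_0=\overline{\lambda_0}$ give $\mathcal{D}Z_I\inp{1/\lambda_0}=\overline{\mathcal{D}Z_I\inp{\lambda_0}}=0$, and the identity then forces $Z_I\inp{\lambda_0}=0$ as well. Thus $\lambda_0$ would be a common zero of $Z_I$ and $Z_I'$, i.e.\ a multiple zero of $Z_I$, and the theorem reduces to showing that $Z_I\inp{G,\beta,\lambda}$ has only simple zeros whenever $G$ is connected and $0<\beta<1$.

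To establish simplicity I would pass to the multivariate Ising partition function $Z_I(G,\beta,\vec\lambda)$ in which each vertex $v$ carries its own activity $\lambda_v$. This is multilinear in $\vec\lambda$, and the classical multivariate Lee--Yang theorem asserts $Z_I\ne 0$ whenever every $\abs{\lambda_v}>1$. The key step is the following \emph{strict} strengthening: for connected $G$ and $0<\beta<1$, $Z_I(G,\beta,\vec\lambda)\ne 0$ whenever every $\abs{\lambda_v}\ge 1$ and at least one $\abs{\lambda_v}>1$. Granted this, a double zero of $Z_I(G,\beta,\lambda)$ at some $\lambda_0$ on the unit circle would, on the diagonal $\lambda_v\equiv\lambda$, admit a perturbation of a single coordinate $\lambda_v$ radially outward that either contradicts the strict form or---using the degree-$1$ dependence on each $\lambda_v$---forces $Z_I$ to vanish identically on an axis-aligned line through $\lambda_0\vec 1$, which is also ruled out. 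I would prove the strict form by an Asano/Ruelle-style contraction argument, starting from $\prod_v(1+\lambda_v)$ (which trivially has the strict property) and tracking the effect of adding edges (each edge multiplies the ``disagree'' part of $Z_I$ by $\beta<1$). A direct calculation for the single-edge case gives the base case: if $Z_I(\{u,v\},\beta,\lambda_u,\lambda_v)=0$ with $\abs{\lambda_u}>1$, then $\abs{\lambda_v}<1$ \emph{strictly} whenever $\beta<1$.

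The main obstacle is globalising this local strictness so that it reaches every coordinate of a connected graph. This will require a quantitative refinement of Asano's contraction lemma that measures how much the zero-free region is shrunk when a $\beta<1$ edge is introduced, together with a propagation argument (e.g.\ along a spanning tree of $G$) carrying the strictness to every vertex. Connectedness is genuinely essential here: a disconnected $G=G_1\sqcup G_2$ has $Z_I=Z_{I,1}\,Z_{I,2}$ and acquires a double zero at every point where $Z_{I,1}$ and $Z_{I,2}$ share a zero, so any successful argument must exploit the global combinatorial structure of $G$ rather than just local edge-by-edge estimates.
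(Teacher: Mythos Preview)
Your reduction via the self-reciprocity identity is correct and elegant: it does reduce the theorem to showing that $Z_I(G,\beta,\lambda)$ has only simple zeros when $G$ is connected and $0<\beta<1$ (the paper notes this equivalence in a footnote). Note, however, that the ``strict'' multivariate form you propose to establish is precisely Theorem~\ref{thm:lee-yang-2}, already due to Lee--Yang and Asano, so the real work is deducing simplicity from it---and here your argument has a gap. Perturbing a single coordinate $z_v$ away from $\lambda_0\vec 1$ gives, by multilinearity, $Z_I=(z_v-\lambda_0)\,\partial_{z_v}Z_I\big|_{\lambda_0\vec 1}$; one can indeed show each partial is nonzero (e.g.\ via Theorem~\ref{thm:lee-yang-2} applied to $G\setminus\{v\}$ with boosted activities, as in Lemma~\ref{lem:rec}), but this means the single-coordinate perturbation is \emph{nonzero} off the base point and yields no contradiction with the strict form, nor does $Z_I$ vanish on any axis line. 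The double-zero hypothesis only gives $\sum_v \partial_{z_v}Z_I\big|_{\lambda_0\vec 1}=0$, which does not force any individual partial to vanish; your dichotomy never engages this condition.

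The paper proceeds differently: it does not reduce to simplicity at all but proves directly that $\mathcal{D}_G Z_I\ne 0$ whenever all $|z_v|\ge 1$ (Theorem~\ref{thm:wsglp}), via an Asano-style induction on edges whose inductive step leans on Newman's correlation inequality (Theorem~\ref{thm:newman}). If you wish to salvage your route, one further ingredient suffices. The multivariate spin-flip identity $Z_I(\vec z)=(\prod_v z_v)\,Z_I(1/\vec z)$ shows that the implicit root $z_1=h(z_2,\ldots,z_n)$ satisfies $h(\vec z')\,h(1/\vec z')=1$; combined with real coefficients this gives $|h|\equiv 1$ on the torus $\{|z_j|=1\}$, while the strict form gives $|h|<1$ whenever some $|z_j|>1$. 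A first-order expansion of the diagonal specialization $\tilde h(\lambda):=h(\lambda,\ldots,\lambda)$ at a fixed point $\tilde h(\lambda_0)=\lambda_0$ then forces $\tilde h'(\lambda_0)$ to be real and nonpositive, hence $\ne 1$, which is exactly the simple-zero condition for $\lambda-\tilde h(\lambda)$ and thus for $Z_I$. Completed this way, your argument would bypass Newman's inequality entirely, which is a genuine gain over the paper's proof.
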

Since the Lee-Yang Theorem says that all zeros of~$Z_I$ satisfy $|\lambda|=1$, 
Theorem~\ref{thm:lee-yang-ext} immediately implies that $Z_I$ and $\mathcal{D}Z_I$ have no
common zeros.  
\begin{remark}The restriction that $G$ be connected is needed:
  there exist disconnected graphs for which
  the conclusion of the theorem does not hold.  A simple example is a
  graph consisting of two isomorphic disconnected subgraphs.  For the
  same reason we require 
  $\beta<1$.  We also note that standard facts from complex analysis
  (in particular, the Gauss-Lucas theorem) imply that the zeros of~$\mathcal{D}Z_I$ lie in the
  convex hull of those of~$Z_I$, and hence within the closed unit
  circle.  The content of Theorem~\ref{thm:lee-yang-ext} is that they
  must lie in the {\it interior\/} of the circle.  This refinement is of course
  crucial for our application.
\end{remark}

Before moving on, let us briefly mention our approach to the proof.  We actually 
prove a more general result concerning the zeros of the multivariate partition
function $Z_I(G,\beta,\allowbreak\{\lambda_v\}_{v\in V})$; see Theorem~\ref{thm:wsglp} 
in Section~\ref{sec:an-extended-lee}.
(The Lee-Yang Theorem itself is also often stated in multivariate form.)
Our proof is based on a delightful combinatorial proof of the Lee-Yang Theorem
due to Asano~\cite{asano_lee-yang_1970}, which begins with the empty graph (which 
trivially satisfies the theorem) and builds the desired graph~$G$ by repeatedly adding
edges one at a time; by a careful induction one can show that the Lee-Yang
property is preserved under each edge addition.  Our proof follows a similar induction,
but the argument is more delicate because we are working with the
more complicated polynomial~$\mathcal{D}Z_I$ rather than~$Z_I$.  In particular, in the inductive
step we need to evoke a non-trivial correlation inequality due to 
Newman~\cite{newman_zeros_1974}.

Our first computational complexity result follows as an almost
immediate corollary
of Theorem~\ref{thm:lee-yang-ext}.
\begin{theorem}\label{thm:ising-main}
  For any fixed $0<\beta<1$
  and any fixed $\lambda\neq 1$, the problem of computing the mean
  magnetization of the Ising model on connected graphs is {\#}P-hard.
  Moreover, the problem remains {\#}P-hard even when the input is
  restricted to graphs of maximum degree at most $\Delta$, for any fixed
  $\Delta \geq 4$.  
\end{theorem}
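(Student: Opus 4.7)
\emph{Proof plan.} The plan is a polynomial-time Turing reduction from the exact computation of the partition function $Z_I(G,\beta,\lambda)$---known to be \#P-hard at the stated parameter range on connected graphs of suitably bounded degree---to the computation of the mean magnetization $\avge{p}$. The cornerstone is the identity $\avge{p}=\lambda f'(\lambda)/f(\lambda)$ from (\ref{eq:magratio}), where $f(\lambda)\defeq Z_I(G,\beta,\lambda)$ is a polynomial of degree $|V|$ in $\lambda$. Theorem~\ref{thm:lee-yang-ext} places every zero of $\mathcal{D}f=\lambda f'$ strictly inside the unit disk while the Lee-Yang Circle Theorem places every zero of $f$ on the unit circle, so $f$ and $f'$ share no zero. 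Consequently, given values $g_j = \mu_j f'(\mu_j)/f(\mu_j)$ of the rational function $\lambda f'/f$ at $|V|+1$ distinct points $\mu_0,\ldots,\mu_{|V|}$, the homogeneous linear constraints $\mu_j f'(\mu_j) - g_j f(\mu_j)=0$ in the unknown coefficients of $f$, together with the normalization $f(0)=1$, determine $f$ uniquely; this is precisely the step where the zero-separation furnished by Theorem~\ref{thm:lee-yang-ext} is essential.

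To produce these evaluations from a single fixed oracle activity, I would use a path-pendant gadget. Let $G^{(\ell)}$ denote the graph obtained from $G$ by attaching a simple path of length $\ell$ to every vertex. Summing out the path spins conditionally on the spin of the attachment vertex, via the standard $2\times 2$ transfer matrix of the Ising path recursion, yields the factorization
\begin{equation*}
  Z_I\bigl(G^{(\ell)},\beta,\lambda\bigr)\;=\;\bigl(P_\ell^-(\lambda,\beta)\bigr)^{|V|}\,f(\lambda_\ell),\qquad \lambda_\ell \defeq \lambda\cdot P_\ell^+(\lambda,\beta)/P_\ell^-(\lambda,\beta),
\end{equation*}
where $P_\ell^\pm$ are explicitly computable polynomials in $\lambda,\beta$. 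Applying $\lambda\pdiff{\lambda}\log$ to both sides expresses the magnetization of $G^{(\ell)}$ as an explicitly computable function of $\ell,\beta,\lambda,|V|$ plus an explicitly computable multiple of $f'(\lambda_\ell)/f(\lambda_\ell)$, so a single oracle call on $G^{(\ell)}$ recovers that ratio. Running $\ell = 0,1,\ldots,O(|V|)$ produces the data required by the interpolation above. The gadget preserves connectivity, adds only new vertices of degree at most~$2$, and raises the degree of each original vertex by exactly one, so starting from a \#P-hard family with maximum degree $\Delta-1$ keeps $G^{(\ell)}$ within the target maximum degree $\Delta\geq 4$.

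The hypothesis $\lambda\neq 1$ enters decisively in verifying genericity of the interpolation nodes. When $\beta\in(0,1)$ and $\lambda\neq 1$ the path transfer matrix has two distinct positive eigenvalues, so the sequence $\{\lambda_\ell\}$ takes pairwise distinct values and $|V|+1$ of them may indeed be selected. The exclusion is tight: at $\lambda=1$ the spin-flip symmetry of the model forces $P_\ell^+\equiv P_\ell^-$, hence $\lambda_\ell\equiv 1$, and consistent with this $\avge{p}\equiv |V|/2$ is trivially computable there, so no reduction of this shape could exist. The main technical obstacle I expect lies in the remaining genericity bookkeeping: ensuring that none of the chosen $\lambda_\ell$ coincides with a zero of $f$ (making the ratio $f'/f$ a pole) and that the explicit multiplier of $f'(\lambda_\ell)/f(\lambda_\ell)$ in the identity above does not vanish. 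Both failure modes describe finite exceptional sets in $\ell$, so one enlarges the range of $\ell$ by only a polynomial factor and discards at most $O(|V|)$ bad indices; the non-singularity of the resulting linear system then follows from the zero-separation between $f$ and $f'$ furnished by Theorem~\ref{thm:lee-yang-ext}, and its solution recovers $f$, completing the reduction.
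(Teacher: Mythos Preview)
Your overall strategy is the same as the paper's: attach a path of length $\ell$ to every vertex (this is exactly the paper's bounded-degree gadget in Appendix~C), observe that the magnetization of $G^{(\ell)}$ encodes $f'(\lambda_\ell)/f(\lambda_\ell)$ for an explicitly computable effective activity $\lambda_\ell$, use Theorem~\ref{thm:lee-yang-ext} to guarantee that $f$ and $\mathcal{D}f$ are coprime, and recover $f$ by interpolation. The paper argues distinctness of the $\lambda_\ell$ by showing directly that the recursion $r_k\mapsto\lambda(\beta+r_{k-1})/(1+\beta r_{k-1})$ is strictly monotone when $\lambda\neq 1$, and that the multiplier $\dot r_{k+1}$ is strictly positive; your transfer-matrix heuristic reaches the same conclusion, though note that the eigenvalues are distinct for \emph{all} $\lambda>0$, so distinctness of eigenvalues is not by itself what forces $\lambda_\ell$ to vary---the role of $\lambda\neq1$ is that at $\lambda=1$ the sequence is the constant fixed point. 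Your worry about $\lambda_\ell$ hitting a zero of $f$ is moot: each $\lambda_\ell$ is a positive real, and $f(\lambda_\ell)=Z_I(G,\beta,\lambda_\ell)>0$ there.

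There is, however, a genuine gap in the interpolation step. You assert that $|V|+1$ evaluations of $\lambda f'/f$ together with $f(0)=1$ determine $f$, and that ``the non-singularity of the resulting linear system follows from the zero-separation between $f$ and $f'$.'' This does not follow. If $\tilde f$ is any other polynomial of degree at most $n$ satisfying your homogeneous constraints, then $W:=f\tilde f'-f'\tilde f$ vanishes at each node; but $W$ has degree at most $2n-2$, so $n+1$ vanishing conditions do not force $W\equiv 0$ once $n\geq 3$. Concretely, for $f(\lambda)=\lambda^3-\lambda$ (simple roots) and $\tilde f(\lambda)=\lambda^2-2$, one has $f'\tilde f-f\tilde f'=\lambda^4-5\lambda^2+2$, so at the four real roots of this quartic both $f$ and $\tilde f$ satisfy your linear system. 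Coprimality of $f$ and $f'$ plays no role in ruling this out. The paper handles this by treating $\mathcal{D}Z_I$ and $Z_I$ as two a priori unrelated degree-$n$ polynomials and invoking the standard rational-interpolation theorem (Theorem~\ref{thm:rational-interp}), which needs $2n+2$ evaluations; coprimality is exactly the hypothesis of that theorem. Your linear-system formulation can also be salvaged, but you need at least $2n-1$ nonzero nodes so that the degree-$(2n-2)$ Wronskian $W$ is forced to vanish identically. Either way the fix is to take $O(n)$ rather than $n+1$ values of $\ell$, which your construction already accommodates.
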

Note that in the case $\lambda=1$ the mean magnetization is trivially
$|V|/2$ by symmetry.  Theorem~\ref{thm:ising-main} confirms that in
all non-trivial cases, the problem of computing the fundamental
average quantity associated with the Ising model is as hard as it
could possibly be. Furthermore, the result also holds for bounded
degree graphs, which are relevant in the statistical physics setting.
The result can also be extended to arbitrary ferromagnetic two-spin
systems and to planar graphs: the
details can be found in
\notbool{csub}{Appendix~\ref{sec:hardness-general-two}}{the full 
  version}.

We also prove a similar (but slightly weaker) result for the
\emph{susceptibility} of the Ising model.
\begin{theorem}\label{thm:ising-main-suscpet}
  For any fixed $0<\beta<1$,
  the problem of computing the susceptibility of the Ising model on
  connected graphs, when $\lambda$ is specified in unary, is
  {\#}P-hard.  Moreover, the problem remains \#P-hard even when the
  input is restricted to graphs of maximum degree at most $\Delta$ for
  any $\Delta \geq 3$. 
\end{theorem}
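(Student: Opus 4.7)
My plan is to reduce the computation of the Ising partition function $Z_I(G,\beta,\lambda^*)$, known to be \#P-hard for any fixed $\beta\in(0,1)$ and $\lambda^*>0$ on connected graphs, to an oracle for the susceptibility $\chi$ on the same graph $G$. Rewriting (\ref{eq:ensuscratio}) gives
\[
  \chi \;=\; \frac{N(\lambda)}{Z_I(\lambda)^2}, \qquad N(\lambda) \;:=\; Z_I\cdot\mathcal{D}^2 Z_I \;-\; (\mathcal{D}Z_I)^2,
\]
where both $N$ and $Z_I^2$ are polynomials in $\lambda$ of degree at most $2n$, with $n=|V|$.

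The pivotal step is to show that $N$ and $Z_I^2$ are coprime as polynomials in $\lambda$, and this is where Theorem~\ref{thm:lee-yang-ext} does all the work. Any common zero $\lambda_0$ would in particular satisfy $Z_I(\lambda_0)=0$, so by the Lee-Yang Circle Theorem $|\lambda_0|=1$; but then Theorem~\ref{thm:lee-yang-ext} (applicable since $G$ is connected and $\beta<1$) ensures $\mathcal{D}Z_I(\lambda_0)\neq 0$, whence $N(\lambda_0) = -(\mathcal{D}Z_I(\lambda_0))^2 \neq 0$, a contradiction. Without Theorem~\ref{thm:lee-yang-ext} this step collapses, so I view this as the main technical content of the reduction.

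With coprimality in hand, the rest is standard rational-function interpolation. I would call the $\chi$ oracle on $(G,\beta,\lambda_j)$ for the $4n+1$ distinct positive integers $\lambda_j = 1, 2, \dots, 4n+1$; each such $\lambda_j$ has size $O(n)$ in unary, so the oracle is invoked on polynomial-size inputs. The resulting values of $\chi$ uniquely determine both $N$ and $Z_I^2$ via the linear system $N(\lambda_j) = \chi(\lambda_j)\cdot Z_I^2(\lambda_j)$ in the unknown coefficients, together with a normalization such as $Z_I^2(0)=1$. From $Z_I^2$ I would extract $Z_I$ by taking the unique positive polynomial square root (well-defined because $Z_I$ has non-negative coefficients and constant term $\alpha_0 = 1$, corresponding to the all-minus configuration), and finally evaluate at $\lambda^*$. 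The bounded-degree refinement ($\Delta \geq 3$) is immediate: the reduction leaves $G$ unchanged, so we inherit any degree bound from the known \#P-hardness of $Z_I$ on bounded-degree graphs.

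The only real obstacle is the coprimality step, which is entirely supplied by Theorem~\ref{thm:lee-yang-ext}; the remainder is bookkeeping. The weaker conclusion compared with Theorem~\ref{thm:ising-main} — namely $\lambda$ specified in unary rather than fixed — stems from the fact that the interpolation demands polynomially many distinct values of $\lambda$, and the gadget constructions that simulate different effective activities at a fixed $\lambda$ in the magnetization case do not appear to transfer cleanly to the squared denominator $Z_I^2$ appearing in $\chi$.
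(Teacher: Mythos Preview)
Your argument is essentially identical to the paper's: express $\chi$ as $N/Z_I^2$, use Theorem~\ref{thm:lee-yang-ext} to establish coprimality (via the observation that any common root would force $Z_I$ and $\mathcal{D}Z_I$ to share a zero), then recover $Z_I^2$ by rational interpolation at polynomially many unary-sized values of~$\lambda$ and invoke the known hardness of the partition function on bounded-degree graphs. The only quibble is that the paper's interpolation theorem (Theorem~\ref{thm:rational-interp}) asks for $4n+2$ points for degree-$2n$ numerator and denominator, whereas you use $4n+1$; this is a trivial adjustment and not a substantive gap.
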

\begin{remark}
  The requirement that $\lambda$ be part of the input seems to be an
  artifact of the rational interpolation operations we use in our
  proof.  In particular, our proof of Theorem~\ref{thm:ising-main}
  shows hardness for \emph{fixed} $\lambda$ by ``simulating''
  different values of $\lambda$ by suitably modifying the graph.   To
  adapt this reduction approach to prove hardness for susceptibility
  (at fixed values of $\lambda$) seems to require the polynomial time
  computation of 
  magnetization as a subroutine.  However, we conjecture that
  computing the susceptibility should be hard even for fixed values of
  $\lambda$ (including $\lambda = 1$).  
\end{remark}

We then proceed beyond the Ising model, and ask about the hardness of
computing averages in the monomer-dimer model (i.e., weighted
matchings).  A classical result of Heilmann and
Lieb~\cite{heilmann_theory_1972} establishes an analog of the Lee-Yang
Theorem for the zeros of the monomer-dimer partition function~$Z_M$;
however, Heilmann and Lieb also present examples of (connected)
graphs~$G$ for which~$Z_M$ has repeated zeros, so we cannot hope to
prove an analog of Theorem~\ref{thm:lee-yang-ext} in this case.  On
the other hand, Heilmann and Lieb show that if $G$ contains a
Hamiltonian path then all the zeros of~$Z_M$ are simple.  We are able
to capitalize on this fact by adapting existing 
{\#}P-hardness reductions for $Z_M$ in such a way that the instances of~$Z_M$ that
appear in the reduction always contain a Hamiltonian path.
Specifically, we present a reduction from the problem \mon\ of
counting satisfying assignments of a monotone 2-CNF formula to
computing~$Z_M$ in Hamiltonian graphs~$G$.  The reduction is an
elaboration of Valiant's original {\#}P-completeness proof for the
permanent~\cite{val79a}.

This leads to our third computational complexity result.
\begin{theorem}\label{thm:monomer-dimer-main}
  For any fixed $\lambda>0$, the problem of computing the average
  number of dimers (equivalently, average size of a matching) in the
  monomer-dimer model on connected graphs with edge
  weights in the set $\{1,2,3\}$  is {\#}P-hard.  Moreover,
  the problem remains \#P-hard even when the input is restricted to
  graphs of maximum degree at most $\Delta$, for any $\Delta \geq 5$.
\end{theorem}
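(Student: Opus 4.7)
The plan is to reduce from the \#P-hard problem \mon\ of counting satisfying assignments of a monotone 2-CNF formula. The bridge is equation~\eqref{eq:matchratio}: since $\avge{u} = \mathcal{D}Z_M/Z_M$ and the average number of dimers equals $(|V|-\avge{u})/2$, computing the average dimer count is equivalent (up to a subtraction and a division) to evaluating the rational function $\mathcal{D}Z_M/Z_M$. The overall strategy is to produce, from an instance of \mon, a graph~$G$ such that the count of satisfying assignments can be read off from the coefficients of $Z_M(G,\lambda)$, and then to recover $Z_M$ from polynomially many oracle calls to the averaging problem by rational interpolation.

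For rational interpolation to succeed we need $Z_M$ and $\mathcal{D}Z_M$ to be coprime in~$\lambda$, i.e.\ $Z_M$ should have only simple zeros. This is exactly the content of a theorem of Heilmann and Lieb, which guarantees simple zeros whenever $G$ contains a Hamiltonian path. So the main technical task is to design a reduction from \mon\ whose output graph always contains a Hamiltonian path. I would take as starting point Valiant's original \#P-completeness proof for the permanent, which already reduces from \mon\ via clause- and variable-gadgets, and adapt it to the monomer-dimer setting: each gadget is replaced by a small graph whose matching polynomial (at the target $\lambda$, with edge weights in $\{1,2,3\}$) produces the local contributions required by the Valiant construction. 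Crucially, I would wire the gadgets together along a \emph{Hamiltonian backbone} that enters each gadget, threads through all of its vertices, and exits toward the next gadget, ensuring that the global graph contains a Hamiltonian path by construction.

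To upgrade the reduction to \emph{fixed} $\lambda$, I would attach auxiliary gadgets (e.g.\ pendant paths or pendant trees of varying sizes) to a designated vertex. The sizes of these gadgets become the interpolation variable: the average dimer count of the augmented graph, at the single fixed value of $\lambda$, is a rational function of (essentially) one parameter whose numerator and denominator are still coprime because the Hamiltonian path extends through the gadget. Evaluating this rational function at polynomially many gadget sizes then lets us reconstruct $Z_M(G,\lambda)$ by standard rational interpolation, from which the answer to the \mon\ instance can be extracted. The same pendant constructions can be designed to keep maximum degree at $5$.

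The main obstacle is the joint constraint of (a) faithfully implementing the Valiant-style clause and variable gadgets as matching-polynomial gadgets, (b) threading a Hamiltonian path through every such gadget, (c) capping the maximum degree at~$5$, and (d) using only edge weights in $\{1,2,3\}$. Each of these requirements on its own is mild, but together they rule out many of the most convenient gadgets; in particular, higher integer edge weights have to be simulated by short unit-weighted subgraphs, and junctions between gadgets have to be arranged so that no vertex accumulates more than five incident edges once the Hamiltonian backbone is added. The bulk of the proof will therefore be a careful, case-by-case construction of gadgets with explicit matching polynomials, together with a verification that the resulting graph satisfies all four constraints simultaneously.
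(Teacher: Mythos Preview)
Your overall architecture matches the paper's: reduce from \mon, adapt Valiant's permanent reduction so that the output graph carries a Hamiltonian path, invoke Heilmann--Lieb to guarantee that $Z_M(G,z)$ has only simple zeros, and then recover the coefficients of $Z_M(G,z)$ by rational interpolation of $U(G,z)=\mathcal{D}Z_M/Z_M$. That part is right.

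The gap is in your ``fixed $\lambda$'' step. Attaching a pendant gadget to a \emph{single} designated vertex does not let you evaluate $U(G,\cdot)$ at multiple points. If you hang a path on one vertex $v$, the partition function of the augmented graph is a combination $a_k\,Z_M^{v\text{ matched}}(G,\lambda)+b_k\,Z_M^{v\text{ unmatched}}(G,\lambda)$; varying $k$ only lets you separate these two numbers, not the full coefficient vector of $Z_M(G,z)$. Equivalently, you are modifying the activity at one vertex, which gives a polynomial that is \emph{linear} in that single activity, not the degree-$n$ polynomial in the global $z$ that you need to interpolate. The paper instead attaches the pendant structure to \emph{every} vertex of $G$, which uniformly replaces the activity $\lambda$ by an effective activity $\lambda_k$ and yields clean identities of the form $Z_M(G(k),\lambda)=c_k^{\,n}\,Z_M(G,\lambda_k)$ and a corresponding relation between $U(G(k),\lambda)$ and $U(G,\lambda_k)$. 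That is what produces $2n{+}2$ genuine evaluations of $U(G,\cdot)$ at distinct points $\lambda_k$.

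A related misplacement: the coprimality you need is that of $Z_M(G,z)$ and $\mathcal{D}Z_M(G,z)$ as polynomials in $z$, and this is exactly what the Hamiltonian path in $G$ buys via Heilmann--Lieb. You do \emph{not} need the augmented graphs $G(k)$ to have Hamiltonian paths; they serve only as a device to simulate other values of $z$. (In the bounded-degree version the paper attaches a path $P_k$ to every vertex, raising the maximum degree of $G$ from $4$ to $5$, which is why $\Delta\ge 5$ appears.) Once you fix the pendant construction to act on all vertices, the rest of your plan goes through essentially as in the paper.
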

\begin{remark}
  Note that our hardness result requires a small finite number (three)
  of different values for the edge weights.  However, this requirement
  can be removed if $G$ is allowed to have parallel edges; the theorem
  then holds for any single fixed non-zero edge weight (including the
  uniform case in which all edge weights are $1$).
\end{remark}
\subsection{Related Work}
\label{sec:related}
The study of the location of zeros of the partition function was
initiated by Yang and Lee~\cite{leeyan52} in connection with the
analysis of phase transitions.  In the follow-up
paper~\cite{leeyan52b}, they instantiated this approach for the
ferromagnetic Ising model by proving the celebrated Lee-Yang theorem
on the location of zeros of the partition function and using it to
conclude that the ferromagnetic Ising model can have at most one phase
transition.  The Lee-Yang approach has since become a
cornerstone of the study of phase transitions, and has been used
extensively in the statistical physics literature: see, e.g.,
\cite{asano_lee-yang_1970,heilmann_theory_1972,
  newman_zeros_1974,suzuki_zeros_1971} for specific examples, and Ruelle's
book~\cite{ruelle83:_statis_mechan} for background.  In a
slightly different line of work, Biskup
\etal~\cite{bisborCKK04,biskup_partition_2004B} 
studied a novel approach to Lee-Yang theorems for a general class of
spin systems on lattice graphs using asymptotic expansions of the
partition function.  Zeros of partition functions have also been
studied in a purely combinatorial setting without reference to the
physical interpretation: see, for example, Choe
\etal~\cite{choe_homogeneous_2004} for a collection of such results
about zeros of a general class of partition functions.  Another
important example is the work of Chudnovsky and
Seymour~\cite{chudnovsky_roots_2007}, who show that the zeros
of the independence polynomial of claw-free graphs lie on the real
line.  There have also been attempts to relate the Lee-Yang program to
the Riemann hypothesis~\cite{newman_ghs_1991}.  

Lee-Yang theorems have also been studied in mathematics in connection
with the theory of \emph{stability preserving} operators.  The main
problem underlying this area is the characterization of linear
operators that preserve the class of polynomials, called
\emph{$\Omega$-stable} polynomials, whose zeros lie in some fixed
closed set $\Omega$.  This research area has its origins in the work of
Laguerre~\cite{laguerre82:_sur} and of P\'{o}lya and
Schur~\cite{polya14:_uber_arten_faktor_theor_gleic}, and also has
connections to control theory~\cite{clark_routh-hurwitz_1992} and to
electrical circuit theory~\cite{brune_synthesis_1931}. 
It has also seen considerable recent activity, especially through
the breakthrough results of Borcea and Br\"{a}nd\'{e}n, who completely
characterize stability preserving operators for multivariate
polynomials in various important
settings~\cite{borcea_lee-yang_2009,borcea_polya-schur_2009}.
Although the study of stability preserving operators is closely related to our
problem, there is a crucial difference in that we require our linear
differential operator to not only preserve the stability of the
partition function, but in fact to \emph{improve} it, by restricting
the possible locus of the zeros of the derivative to the \emph{open}
interior of the locus of the zeros of the partition function itself.  

In the statistical physics literature, we are aware of
only two works which consider the multiplicity of the zeros of the
Ising partition function: Heilmann and
Lieb~\cite{heilmann_theory_1972} and Biskup
\etal~\cite{bisborCKK04,biskup_partition_2004B}.  In
~\cite{heilmann_theory_1972}, a theorem similar to our
Theorem~\ref{thm:lee-yang-ext} is proven in the special case when the
underlying graph $G$ has a Hamiltonian path and $\beta$ is close
enough to $1$ (depending upon the graph $G$).  Similarly, in
the special case of the Ising model, the results of \cite{bisborCKK04}
imply our result but only when $\beta$ is close to $0$, and only in
the special case of lattice
graphs~\cite{biskup_partition_2004B,biskup12}.  Note that neither of
these results appears to be sufficient for the purposes of our hardness
result.

The classification of counting problems associated with partition
functions (via so-called \emph{dichotomy theorems}) has also recently
been a very active area of research.  For several interesting general
classes of partition functions, these theorems characterize the
partition function as being either computable in polynomial time or 
\#P-hard~\cite{cai_complexity_2012,golgrojerthu09,cai_graph_2010,bul06,bulatov05,dyegre00}.
However, there appear to be no analogous results on the
complexity of averages such as the magnetization.

A related area that we do not deal with in this work is the problem of
approximate counting.  Recent progress in this area has shown that the
complexity of approximating the partition function, as well as that of
the related problem of approximate sampling, is closely related to the
phase transition
phenomenon~\cite{Weitz06CountUptoThreshold,Sly2010CompTransition,sly12}.
However, it is not clear whether hardness results analogous to
\cite{Sly2010CompTransition,sly12} can be proven for the 
approximate computation of the magnetization.

\section{Preliminaries}
\label{sec:preliminaries}
\subsection{The models}
\label{sec:models}
Let $G = (V,E)$ be an undirected graph.  The two models we will be
concerned with are the \emph{ferromagnetic Ising model} and the
\emph{monomer dimer model}, both of which have already been defined
in Section~\ref{sec:background}.

\noindent\textbf{Ferromagnetic Ising model.} Recall that in the
\emph{ferromagnetic Ising model}, a \emph{configuration} $\sigma: V
\rightarrow \inb{+,-}$ is an assignment of $+/-$ spins to the vertices
of $G$.  The model is characterized by an \emph{edge potential}
$0 < \beta \leq 1$, and a \emph{vertex activity} $\lambda
> 0$.  The weight function $w_I(\sigma)$ defined in (\ref{eq:32})
induces a probability distribution over configurations with an
associated partition function $Z_I(G, \beta, \lambda) := \sum_\sigma
w_I(\sigma)$.  We shall be concerned with the \emph{mean
  magnetization} $M(G,\beta, \lambda) \defeq \avge{p}$, which is the
average number of $+$-spins in a configuration, and the
\emph{susceptibility} $\chi \defeq \avge{p^2}-\avge{p}^2$, which is
the variance of the same quantity.  As in
(\ref{eq:magratio})-(\ref{eq:ensuscratio}), these quantities can be
written in terms of the derivatives of $Z_I$ with respect to
$\lambda$. %

For our discussion of the zeros of $Z_I(G,\beta,\lambda)$, we
will also need a generalization of the Ising model in which the
vertex activities can vary across vertices of G.  Suppose that the
vertex activity at vertex $v$ is $z_v$.  The weight of a
configuration $\sigma$ is then defined as
\begin{align*}
  w_I(\sigma) &\defeq \beta^{\dis{\sigma}}\prod_{v:\sigma(v) = +}z_v,
\end{align*}
and the partition function is given by
$Z_I(G,\beta,\inp{z_v}_{v\in V}) = \sum_{\sigma}w_I(\sigma)$.
Consider the linear differential operator ${\cal D}_G$ defined as
follows:
\begin{equation*}
  \D{G} \defeq \sum_{v \in V} z_v\frac{\partial}{\partial z_v}. 
\end{equation*}
As in (\ref{eq:magratio})--(\ref{eq:ensuscratio}), we can then write
the magnetization $M_I(G,\allowbreak\beta,\inp{z_v}_{v\in V})$ as 
\begin{equation*}
  M(G,\beta,\inp{z_v}_{v\in V}) = \frac{{\cal D}_GZ_I(G,\beta,\inp{z_v}_{v\in V})}{Z_I(G,\beta,\inp{z_v}_{v\in V})}.
\end{equation*}

\noindent\textbf{Monomer-dimer model.} Recall that in the
\emph{monomer-dimer model}, the configurations are matchings of $G$.
The model is characterized by \emph{edge weights} $\gamma_e > 0$ for every edge
$e$ in $E$ and a \emph{vertex activity} $\lambda > 0$.  The weight
$w_M(\sigma)$ of a matching $\sigma$ is as described in (\ref{eq:33}),
and the associated partition function is defined by $Z_M(G,
\inp{\gamma_e}_{e\in E},\lambda) \defeq \sum_\sigma w_M(\sigma)$.  

The average number of monomers $U(G,\inp{\gamma_e}_{e\in
  E},\allowbreak\lambda)  \defeq \avge{u}$ can be written (as in
(\ref{eq:matchratio})) in terms of the 
derivative of $Z$: in particular $U(G,\inp{\gamma_e}_{e\in E},
\lambda) = \frac{\mathcal{D}Z_M(G,\inp{\gamma_e}_{e\in E},
  \lambda)}{Z_M(G,\inp{\gamma_e}_{e\in E},\lambda)}$.  The
\emph{average dimer count} $D(G,\inp{\gamma_e}_{e\in E},\lambda)$
(equivalently, the average size of a matching) can be obtained from
$U$ by the simple relation 
\begin{displaymath}
  D(G,\inp{\gamma_e}_{e\in E},\lambda) = \frac{n - U(G,\inp{\gamma_e}_{e\in E},\lambda)}{2},
\end{displaymath}
where $n$ is the number of vertices in $G$. 

\begin{remark}
  In our definitions above, vertex activities are restricted to be
  positive real numbers.  Although this is the physically (and
  computationally) relevant setting, in our proofs and in our discussion
  of Lee-Yang theorems we will need to work with vertex activities
  that are arbitrary complex numbers.  The expressions for the
  quantities defined above still remain valid.
\end{remark}

\subsection{Zeros of partition functions}
\label{sec:lee-yang-heilmann}
We first consider the location of the complex zeros of the partition
function of the ferromagnetic Ising model.  In a seminal paper Lee and
Yang proved the following striking theorem~\cite{leeyan52b}.
\begin{theorem}[\cite{leeyan52b}]
  \label{thm:lee-yang-1} Let $G$ be any undirected graph and suppose
  $0 < \beta \leq 1$.  Then the complex zeros of $Z_M(G,\beta,
  z)$, considered as a polynomial in $z$, satisfy
  $\abs{z} = 1$.  
\end{theorem}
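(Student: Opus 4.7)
The plan is to establish the stronger \emph{multivariate} Lee-Yang statement, namely that the generalised Ising partition function $Z_I(G,\beta,\inp{z_v}_{v\in V})$ is nonvanishing on the open polydisc $\inb{(z_v)_{v\in V} : \abs{z_v}>1\text{ for all }v}$, and then to deduce the univariate result from it. The univariate deduction is easy: the spin-flip symmetry $\sigma\mapsto-\sigma$ yields the functional equation
\[
  Z_I(G,\beta,\inp{z_v}) \;=\; \inp{\prod_{v\in V} z_v}\,Z_I\!\inp{G,\beta,\inp{1/z_v}},
\]
so zero-freeness on $\inb{\abs{z_v}>1}$ automatically implies zero-freeness on $\inb{\abs{z_v}<1}$; specialising all variables to a common value $z$ then forces any univariate zero to satisfy $\abs{z}=1$.

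To prove the multivariate statement, I would use Asano's combinatorial construction. First, consider the \emph{exploded} version of $G$ in which every edge $e=\inb{u,v}$ has its own private copies of its two endpoints, each carrying an independent activity $z_u^{(e)}$ or $z_v^{(e)}$. The partition function of the exploded graph factorises as $\prod_{e\in E} P_e$, where $P_e \defeq 1 + \beta z_u^{(e)} + \beta z_v^{(e)} + z_u^{(e)} z_v^{(e)}$ is the single-edge partition function. A short computation confirms that each $P_e$ is nonvanishing on the open bidisc $\inb{\abs{z_u^{(e)}}>1,\,\abs{z_v^{(e)}}>1}$: the equation $P_e=0$ yields $z_v^{(e)} = -(1 + \beta z_u^{(e)})/(\beta + z_u^{(e)})$, and the identity $\abs{1+\beta w}^2 - \abs{\beta+w}^2 = (1-\beta^2)(1-\abs{w}^2)$ shows that the right-hand side has modulus strictly less than $1$ whenever $\abs{w}>1$ and $\beta<1$ (the boundary case $\beta=1$ is handled by the factorisation $(1+z_u^{(e)})(1+z_v^{(e)})$). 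Taking the product, $\prod_e P_e$ is therefore nonvanishing on the full exploded polydisc.

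The final step is to collapse the exploded polynomial back to $Z_I(G,\beta,\inp{z_v})$ by identifying, for each vertex $v$, the copies $\inb{z_v^{(e)} : e\ni v}$ into a single variable $z_v$. This is carried out using the \emph{Asano contraction lemma}: if a polynomial of the form $A + Bx + Cy + Dxy$ (with $A,B,C,D$ polynomials in other variables) is nonvanishing whenever $\abs{x}>1$ and $\abs{y}>1$, then the contracted polynomial $A+Dz$ is nonvanishing whenever $\abs{z}>1$. Applying this lemma pairwise and iteratively, I merge the copies of each vertex by successive contractions: at each step, only the ``all absent'' (constant) and ``all present'' (full product) monomials associated with that vertex survive, while every partial product is killed. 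The surviving monomials of the final polynomial are therefore in bijection with globally consistent spin assignments $\sigma : V \to \inb{+,-}$, and they sum to exactly $Z_I(G,\beta,\inp{z_v})$. Since each contraction preserves zero-freeness, $Z_I$ is nonvanishing on $\inb{\abs{z_v}>1\text{ for all }v}$, as required.

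The main obstacle is Asano's contraction lemma itself. Its proof begins by writing the solution of $P=0$ in $x$ (for fixed $y$) as $x_0(y) = -(A+Cy)/(B+Dy)$, so the hypothesis becomes the inequality $\abs{A+Cy} \le \abs{B+Dy}$ for all $\abs{y}\ge 1$. A Schur-type calculation (using the limits $\abs{y}\to\infty$ and $\abs{y}\to 1$) then forces $\abs{A/D}\le 1$, i.e., every zero of $A+Dz$ lies in the closed unit disc. This step is classical but delicate; once it is in hand, the rest of the argument is essentially bookkeeping.
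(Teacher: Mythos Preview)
The paper does not actually prove this theorem: it is stated as a classical result and cited to Lee and Yang, and the only ``proof'' given in the paper is the one-line remark that the univariate statement follows from the multivariate Theorem~\ref{thm:lee-yang-2} by setting $z_v=z$ for all $v$, with the multivariate version itself attributed (without proof) to Lee--Yang and to Asano's simplified argument. Your proposal is therefore doing strictly more than the paper: you supply the Asano-contraction proof of the multivariate statement that the paper only references, and then perform the same specialisation the paper indicates (your spin-flip functional equation cleanly handles the region $\abs{z}<1$). The overall structure is correct and is precisely the route the paper points to.

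One small correction to your sketch of the Asano lemma itself. From $\abs{A+Cy}\le\abs{B+Dy}$ for $\abs{y}\ge 1$, the limit $\abs{y}\to\infty$ gives only $\abs{C}\le\abs{D}$, and it is not clear how a ``limit $\abs{y}\to 1$'' then yields $\abs{A}\le\abs{D}$. A cleaner way to extract $\abs{A/D}\le 1$ is to specialise the hypothesis along the diagonal $x=y=t$: then $A+(B+C)t+Dt^2\ne 0$ for all $\abs{t}>1$, so both roots lie in the closed unit disc, and their product $A/D$ has modulus at most~$1$. With that in place, your argument goes through as written.
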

Actually, Lee and Yang proved the following multivariate version of
their theorem, the proof of which was later considerably simplified by
Asano~\cite{asano_lee-yang_1970}.
\begin{theorem}[\cite{leeyan52b,asano_lee-yang_1970}]
  \label{thm:lee-yang-2} Let $G=(V,E)$ be a connected undirected
  graph, and suppose $0 < \beta < 1$.  Suppose $(z_v)_{v \in V}$ is
  a set of complex valued vertex activities such that $\abs{z_v} \geq
  1$ for all $v\in V$, and $\abs{z_u} > 1$ for at least one
  $u \in V$.  Then $Z_I(G, \beta, \inp{z_v}_{v\in V}) \neq 0$.
\end{theorem}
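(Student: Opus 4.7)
The plan is to follow Asano's contraction approach, which reduces the multivariate Lee-Yang theorem to a base case (a single edge) together with a combinatorial contraction operation that preserves non-vanishing. First I would verify the theorem directly for the single-edge graph. For $G$ consisting of two vertices $u, v$ joined by a single edge, $Z_I = 1 + \beta z_u + \beta z_v + z_u z_v$, and viewing this as a polynomial in $z_u$ its unique root is the M\"obius transformation $z_u = -(1+\beta z_v)/(\beta+z_v)$. A direct computation yields
\[ |z_u|^2 - 1 = \frac{(1-\beta^2)(1 - |z_v|^2)}{|\beta+z_v|^2}. \]
Under $\beta < 1$, this gives $|z_u| < 1$ precisely when $|z_v| > 1$, and $|z_u| = 1$ precisely when $|z_v| = 1$. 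Hence the sharp conclusion holds for a single edge: $Z_I \neq 0$ whenever $|z_u|, |z_v| \geq 1$ with at least one of these inequalities strict.

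For general $G$, the inductive tool is the \emph{Asano contraction lemma}. If a polynomial $p$ is multi-affine in two of its variables $x, y$, write $p = A + Bx + Cy + Dxy$ with $A, B, C, D$ independent of $x, y$; the \emph{Asano contraction} of $p$ is $q(\zeta) = A + D\zeta$. Combinatorially, this corresponds to identifying two Ising vertices into a single vertex whose activity is the combined product $\zeta = xy$. The lemma asserts that if $p$ does not vanish on $\{|x|, |y| > 1\}$ (with its other variables held in the same region), then $q$ does not vanish on $\{|\zeta| > 1\}$. To assemble the theorem for $G$, I would introduce for each edge $e = \{u,v\} \in E$ an isolated edge gadget with fresh variables $z_{u,e}, z_{v,e}$, forming the product $P = \prod_{e \in E} (1 + \beta z_{u,e} + \beta z_{v,e} + z_{u,e} z_{v,e})$ of single-edge partition functions, which inherits non-vanishing on the polydisc exterior from the base case. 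For each $v \in V$, I then apply Asano's lemma iteratively to contract the copies $\{z_{v,e} : e \ni v\}$ into a single variable $z_v$. The final polynomial agrees, up to an explicit nonvanishing scalar, with $Z_I(G, \beta, (z_v)_{v \in V})$, yielding the weak Lee-Yang property: $Z_I \neq 0$ whenever $|z_v| > 1$ for all $v$.

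The main obstacle is upgrading this weak version to the sharp statement of the theorem, which permits $|z_v| = 1$ at all but one vertex. Here both hypotheses $\beta < 1$ and connectivity of $G$ enter essentially. The sharp form already holds at the base case thanks to the strict M\"obius inequality above (the factor $1 - \beta^2$ is what makes it strict), and the goal is to show that it is preserved under Asano contractions. The idea is that any purported zero of the contracted polynomial with some $|z_u| > 1$ would, upon unwinding the contractions, force every edge gadget along an appropriate path through $u$ to produce a zero with both endpoint activities of modulus exactly one; connectivity of $G$ would then propagate the equality $|z_v| = 1$ across the entire vertex set, contradicting $|z_u| > 1$. Making this propagation argument precise --- in particular, handling the interplay between the sharp boundary condition and the Asano contraction bookkeeping --- is the principal technical challenge of the proof.
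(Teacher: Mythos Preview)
The paper does not supply its own proof of this theorem: it is stated with a citation to Lee--Yang and Asano and then used as a black box (most notably via Lemma~\ref{lem:rec}). So there is no in-paper argument to compare against directly. That said, your base case and your derivation of the \emph{weak} multivariate Lee--Yang property (non-vanishing when all $|z_v|>1$) via the product-of-edges construction and iterated Asano contraction are correct and standard.

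The genuine gap is in your upgrade to the sharp statement. Your plan is to ``unwind the contractions'' from a hypothetical zero with $|z_u|>1$ and propagate equalities $|z_v|=1$ along paths, but Asano contraction does not invert in the way this requires. If $A+D\zeta=0$, one can of course solve $Bx+Cy=0$, $xy=\zeta$ for some $x,y$, but there is no control whatsoever on $|x|,|y|$, and no mechanism that forces a particular edge factor in your product $P$ to vanish, let alone with both endpoint activities on the unit circle. So the propagation picture, as stated, does not get off the ground. The strict inequality you isolate in the base case (the factor $1-\beta^2$) is real and important, but it is not transported through your global contraction step, because the product $P$ you contract is built from \emph{disjoint} edges and carries no connectivity information; connectivity only reappears after all contractions are done, by which point the sharp single-edge information has been lost.

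A route that does work, and which is exactly the skeleton the paper uses for its extended Lee--Yang theorem (Theorem~\ref{thm:wsglp}), is to organize the induction so that the graph is connected at every stage: start from a single edge and build $G$ by repeatedly either (i) attaching a new pendant vertex, or (ii) merging two existing non-adjacent vertices. Step~(ii) is still an Asano-type contraction, but now the inductive hypothesis is the \emph{sharp} statement on a connected graph, and one argues directly with the univariate polynomial obtained by setting $z_1=z_2=s$: sharp Lee--Yang on the pre-merge graph forces its zeros into $|s|\le 1$ (resp.\ $|s|<1$ when some other activity is strictly outside the circle), whence $|B/A|\le 1$ (resp.\ $<1$) and $Az+B\neq 0$ in the required region. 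This is precisely how Lemma~\ref{lem:add-edge} is structured. If you want to salvage your global-product approach instead, the clean fix is: first get the weak form as you do, then fix the distinguished vertex $u$, view $Z_I$ as a polynomial in $z_u$, and show its leading coefficient (the restricted partition function with $\sigma(u)=+$) is nonzero whenever the remaining $|z_v|\ge 1$ --- this is where connectivity and $\beta<1$ enter, exactly as in Lemma~\ref{lem:rec} --- so that a continuity/Hurwitz argument pins the $z_u$-zeros inside the closed disc even on the boundary in the other variables.
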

Theorem~\ref{thm:lee-yang-2} is readily seen to imply
Theorem~\ref{thm:lee-yang-1} by setting $z_v = z$ for all $v \in V$.
We now consider the partition function of the monomer-dimer model.  In
\ifbool{csub}{\cite{heilmann_theory_1972}, Heilmann and Lieb proved
  the following result}{\cite{heilmann_monomers_1970}, Heilmann and
  Lieb stated  the following result (see \cite{heilmann_theory_1972}
  for the complete proof)}.
\begin{theorem}[\ifbool{csub}{\cite{heilmann_theory_1972}}{\cite{heilmann_monomers_1970,heilmann_theory_1972}}]
  \label{thm:heilmann-lieb} Let $G=(V,E)$ be any graph, and
  $(\gamma_e)_{e\in E}$ be a collection of positive real edge weights. The
  complex zeros of $Z_M(G, (\gamma_e)_{e\in E}, z)$, considered as a
  polynomial in $z$, satisfy $\Re\inp{z} = 0$.  Further, if $G$ contains a
  Hamiltonian path, all the zeros are simple.
\end{theorem}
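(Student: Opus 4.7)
The plan is to prove a stronger multivariate version of the first claim and deduce both parts of the theorem from it. Extend $Z_M$ to allow independent vertex activities $z_v$ at each vertex, so that $Z_M(G,\{\gamma_e\},\{z_v\})=\sum_{\sigma} \prod_{v\text{ unmatched}} z_v \prod_{e\in\sigma} \gamma_e$. I claim: if $\Re(z_v)>0$ for every $v\in V$, then $Z_M\neq 0$, and in fact the ratio $R_v(G):=Z_M(G)/Z_M(G\setminus v)$ has strictly positive real part for every $v$. The second assertion is the real strengthening needed to make induction go through.

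The proof is by induction on $|V|$. The base case $|V|\leq 1$ is trivial. For the inductive step, pick any vertex $v$ and use the standard edge-expansion identity for the matching polynomial,
\begin{equation*}
  Z_M(G) \;=\; z_v\cdot Z_M(G\setminus v) \;+\; \sum_{u\sim v} \gamma_{uv}\cdot Z_M(G\setminus\{u,v\}).
\end{equation*}
Dividing through by $Z_M(G\setminus v)$ (nonzero by induction) gives $R_v(G) = z_v + \sum_{u\sim v} \gamma_{uv}/R_u(G\setminus v)$. By the inductive hypothesis applied to $G\setminus v$, each $R_u(G\setminus v)$ has positive real part; since $\Re(1/w)>0$ whenever $\Re(w)>0$ and $\gamma_{uv}>0$, each summand has positive real part, and adding $z_v$ preserves this. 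Hence $\Re(R_v(G))>0$, closing the induction. The identical argument (replacing ``positive'' by ``negative'' throughout) shows $Z_M\neq 0$ whenever all $\Re(z_v)<0$. Specializing to $z_v=z$ for all $v$ rules out zeros in both open half-planes, so every zero of $Z_M(G,z)$ lies on the imaginary axis.

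For the Hamiltonian path case, the plan is to strengthen the inductive hypothesis to carry the following interlacing statement: if $v$ is an endpoint of a Hamiltonian path of $G$, then the purely imaginary zeros of $Z_M(G,z)$ and $Z_M(G\setminus v,z)$ \emph{strictly} interlace on the imaginary axis (and in particular both polynomials have simple zeros). With $v=v_1$ an endpoint, the neighbor $v_2$ of $v$ along the path is itself an endpoint of a Hamiltonian path in $G\setminus v$; thus the inductive hypothesis applies both to $G\setminus v$ and, after a further step, to $G\setminus\{v,v_2\}$. Viewing the recursion $Z_M(G,z)=zZ_M(G\setminus v,z)+\sum_{u\sim v}\gamma_{uv}Z_M(G\setminus\{u,v\},z)$ as a linear combination of two polynomials with strictly interlacing imaginary zeros (the Hermite–Biehler type pencil), one checks that the resulting polynomial $Z_M(G,z)$ has zeros that strictly separate those of $Z_M(G\setminus v,z)$, hence are simple.

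The main obstacle is the Hamiltonian case: proving \emph{strict} rather than weak interlacing. In the recursion, the non-path edges incident to $v$ contribute additional terms $\gamma_{uv}Z_M(G\setminus\{u,v\},z)$ whose zeros one does not control a priori, so one must argue that the coefficient of $Z_M(G\setminus v,z)$ in the pencil, namely $z+\sum_{u\sim v,u\neq v_2}\gamma_{uv}Z_M(G\setminus\{u,v\},z)/Z_M(G\setminus v,z)$, behaves monotonically along the imaginary axis and never takes a value that creates a double zero. This is precisely where the Hamiltonian path hypothesis is used: it guarantees that the ``base'' interlacing between $Z_M(G\setminus v,z)$ and $Z_M(G\setminus\{v,v_2\},z)$ is already strict, and strict interlacing is preserved under the addition of further non-negative edge-weight perturbations.
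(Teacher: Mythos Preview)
The paper does not prove this theorem; it is quoted as a preliminary result from Heilmann and Lieb and used as a black box. So there is no ``paper's own proof'' to compare against, and your proposal should be read as a direct attempt at the Heilmann--Lieb argument.

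Your treatment of the first assertion is correct and is essentially the standard proof: the ratio recursion $R_v(G)=z_v+\sum_{u\sim v}\gamma_{uv}/R_u(G\setminus v)$ together with the elementary fact that $w\mapsto 1/w$ preserves the open right half-plane gives the multivariate half-plane property, and specializing $z_v\equiv z$ forces all zeros onto the imaginary axis.

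The second part, however, has a real gap. You correctly isolate strict interlacing of $Z_M(G,z)$ and $Z_M(G\setminus v,z)$ (with $v$ a Hamiltonian endpoint) as the inductive statement, and you correctly identify the difficulty: the recursion at $v$ involves \emph{all} neighbors $u$ of $v$, not just the path-neighbor $v_2$, and for $u\neq v_2$ the graph $G\setminus\{v,u\}$ need not have a Hamiltonian path, so the inductive hypothesis says nothing about $Z_M(G\setminus\{v,u\})$. Your proposed fix, that ``strict interlacing is preserved under the addition of further non-negative edge-weight perturbations,'' is asserted but not argued, and as stated it is not even a precise claim. What is missing is the observation that your own Part~1 already gives you what you need for the non-path neighbors: since $R_u(G\setminus v)=Z_M(G\setminus v)/Z_M(G\setminus\{v,u\})$ maps the open right half-plane into itself, the pair $\bigl(Z_M(G\setminus v),\,Z_M(G\setminus\{v,u\})\bigr)$ is a (weakly) interlacing pair on the imaginary axis with consistent orientation, for \emph{every} neighbor $u$. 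A nonnegative combination of polynomials each weakly interlacing $Z_M(G\setminus v)$, with at least one summand (the $u=v_2$ term) strictly interlacing, is itself strictly interlacing; then the three-term-type relation $\tilde P_G(t)=t\,\tilde P_{G\setminus v}(t)-\sum_{u}\gamma_{uv}\tilde P_{G\setminus\{v,u\}}(t)$ (after $z=it$) yields strict interlacing of $\tilde P_G$ with $\tilde P_{G\setminus v}$ by the usual sign-change count. Until you make this step explicit, the Hamiltonian-path half of the argument is incomplete.
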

In \cite{heilmann_theory_1972}, Heilmann and Lieb also gave examples
of \emph{connected} graphs $G$ in which $Z_M$ has repeated zeros.
This is in contrast to the Ising model, where, as we prove in
Section~\ref{sec:an-extended-lee}, connectedness is sufficient to
ensure that the zeros are simple.

\subsection{Rational interpolation}
\label{sec:rati-interp}

In our hardness reductions, we will need a few well known
facts about interpolation of rational functions.  While it is clear
that it is not in general possible to determine all coefficients of a
rational function given its values at any number of points, this can be
done if we impose a few simple conditions, as stated in the following
theorem.
\begin{theorem}[\cite{macon_existence_1962}]
  \label{thm:rational-interp}
  Suppose $R(x) = \frac{p(x)}{q(x)}$ where $\mathrm{gcd}(p(x),
  \allowbreak q(x)) = 1$
  and both $p(x)$ and $q(x)$ are of degree $n$.  Suppose
  $\tilde{p}(x)$ and $\tilde{q}(x)$ are polynomials of degree at most
  $n$ satisfying
    \begin{displaymath}
      \frac{\tilde{p}(x_i)}{\tilde{q}(x_i)} =  R(x_i)
    \end{displaymath}
    for $2n+2$ distinct values $x_1,x_2,\ldots x_{2n+2}$.  Then there
    is a constant $c$ such that $p(x) = c\tilde{p}(x)$ and $q(x) =
    c\tilde{q}(x)$. 
\end{theorem}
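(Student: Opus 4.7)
The plan is to reduce the problem to a standard polynomial identity argument via cross-multiplication. First, I would clear the denominators in the hypothesis $\tilde{p}(x_i)/\tilde{q}(x_i) = p(x_i)/q(x_i)$ to obtain
\[ \tilde{p}(x_i)\,q(x_i) - p(x_i)\,\tilde{q}(x_i) = 0 \qquad \text{for } i = 1, 2, \ldots, 2n+2. \]
Here I am implicitly using that the $x_i$ avoid the zeros of $q$ and $\tilde q$, which is forced by the ratios being well-defined at the sample points. Next I would introduce the auxiliary polynomial $r(x) := \tilde{p}(x)\,q(x) - p(x)\,\tilde{q}(x)$. Since $\deg p = \deg q = n$ and $\deg \tilde{p},\deg \tilde{q} \le n$, we have $\deg r \le 2n$. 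But $r$ vanishes at $2n+2$ distinct points, which exceeds its degree bound, so $r$ must be identically zero.

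The remaining step is a divisibility argument using unique factorization in the polynomial ring. From the identity $\tilde{p}\,q = p\,\tilde{q}$ together with the coprimality hypothesis $\gcd(p,q) = 1$, I would conclude that $p$ divides $\tilde{p}$. Since $\deg \tilde{p} \le n = \deg p$, this forces $\tilde{p} = c\,p$ for some constant $c$. Substituting back and cancelling the nonzero polynomial $p$ yields $\tilde{q} = c\,q$, completing the proof.

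There is no serious obstacle here; the argument is a textbook degree-counting/UFD argument. The only point worth noting is that the count $2n+2$ is sharp for the degree bound on $r$: with only $2n+1$ sample points one could not force $r \equiv 0$, which is why the hypothesis specifies exactly $2\,\deg p + 2$ evaluation points. The other subtlety, namely well-definedness of the ratios at the $x_i$, is handled by interpreting the hypothesis as the cross-multiplied equation, which is the form actually used in the proof.
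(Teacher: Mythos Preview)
Your argument is correct, and there is nothing in the paper to compare it against: the paper quotes this statement from \cite{macon_existence_1962} as background and uses it as a black box in the hardness reductions, without supplying a proof of its own.

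One small correction to your closing remark: the count $2n+2$ is \emph{not} sharp for the argument you gave. The auxiliary polynomial $r(x) = \tilde{p}(x)\,q(x) - p(x)\,\tilde{q}(x)$ has degree at most $2n$, so already $2n+1$ distinct roots force $r \equiv 0$; your proof therefore establishes a slightly stronger uniqueness statement than the one quoted. The extra sample point in the paper's formulation is presumably inherited from the \emph{existence} side of the Macon--Dupree result (where one solves a homogeneous linear system in the $2n+2$ unknown coefficients of $\tilde{p}$ and $\tilde{q}$ and wants the rank to be exactly $2n+1$), but it is not needed for the uniqueness direction you proved.
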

Notice that given the evaluations at the points $x_i$ one
can write down a system of $2n+2$ homogeneous linear equations for the
$2n+2$ unknown coefficients of $p$ and $q$.  The theorem then
guarantees that this system has rank exactly $2n+1$.  Thus, since
Gaussian elimination can be implemented to run in strongly polynomial
time (see, e.g., \cite{edmonds67:_system_distin_repres_linear_algeb}),
a polynomial time algorithm for evaluating $R$ immediately yields a
polynomial time algorithm for determining \emph{some} $\tilde{p}$ and
$\tilde{q}$ satisfying the conditions of the above theorem.  If we
know at least one non-zero coefficient of $p$ or $q$, we can then
determine the proportionality constant $c$, and hence $p$ and $q$
also, in time polynomial in $n$.

\section{An extended Lee-Yang theorem}
\label{sec:an-extended-lee}

In this section we prove Theorem~\ref{thm:lee-yang-ext}, our extension
of the classical Lee-Yang theorem.
Let $G = (V,E)$ be a connected graph with $|V| = n$ and $|E|= m$, with
vertex activity $z_i$ at the $i$th vertex.  When clear from the
context , we will write $Z(G)$ and $M(G)$ for the partition function
$Z_I(G, \beta, (z_v)_{v\in V})$ and the mean magnetization $M(G,
\beta, (z_v)_{v\in V})$ of the Ising model on $G$.   In terms of the
linear operator $\D{G}$ defined in Section~\ref{sec:models}, we
then have $M(G) = \D{G}Z(G)/Z(G)$.  

For convenience, we will use the shorthand $Y' = \D{G}Y$ (when $G$ is
clear from the context) in this section.  Notice that this is slightly
non-standard, as this shorthand is usually used for the actual
derivative.  In particular, when all the $z_i$ are equal to $z$, we
have $Y' = z\pdiff[Y]{z}$ with our notation.  Also, observe that the
operator $\D{G}$ obeys the usual product rule: $(Y_1Y_2)' = Y_1'Y_2
+ Y_1Y_2'$.

In our proof, we will also need the following generalization of the
partition function.  We call an assignment of positive integer valued
\emph{weights} $w:V \rightarrow \mathbb{Z}^+$ to the vertices of $G$ \emph{legal}
if $w(v)$ is at least equal to the degree of $v$, for all $v \in V$.  

\begin{definition}
  Let $w$ be a legal collection of weights for $G$.  The
  \emph{weighted partition function} $Z_w(G)$ is then defined as
\begin{equation}
Z_w(G) \defeq \sum_{\sigma \in \inb{+,-}^V}\beta^{d(\sigma)}\prod_{v: \sigma(v) = +}z_v^{w(v)},\label{eq:1}
\end{equation}
where, as before, $d(\sigma)$ is number of disagreeing edges in the
configuration $\sigma$. 
\end{definition}
\noindent Notice that the multivariate Lee-Yang theorem
(Theorem~\ref{thm:lee-yang-2}) holds also for the weighted partition
function, since all the weights are positive
integers and we are effectively just changing variables from $z_v$ to
$z_v^{w(v)}$.  

We will also need the following consequence of a correlation
inequality of Newman~\cite{newman_zeros_1974}, whose proof can be found
in \notbool{csub}{Appendix~\ref{sec:proof-crefthm:newman}.
}{the full version.
}
\begin{theorem}[{\cite[Theorem 3.2]{newman_zeros_1974}}]\label{thm:newman}
  Let $G$ be any graph, and let $w$ be a legal collection of weights
  for $G$.  Suppose $0 < \beta < 1$, and $\abs{z_v} \geq 1$ for all $v \in
  V$ are such that $Z_w(G) \neq 0$.  Then\footnote{Recall that we are
    using here the slightly non-standard notation $Z_w'(G) =
    \D{G}Z_w(G)$, as described at the beginning of this section.}
  \begin{displaymath}
    \Re\inp{M(G)} = \Re\inp{\frac{Z_w'(G)}{Z_w(G)}} \geq n/2. 
  \end{displaymath}
  Here, $\Re\inp{z}$ denotes the real part of $z$.
\end{theorem}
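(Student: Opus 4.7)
The plan is to reduce the bound to a one-variable computation at each vertex and invoke only the weighted multivariate Lee-Yang theorem as the nontrivial input. Fix any vertex $v\in V$ and regard $Z_w(G)$ as a polynomial $P_v(z_v)$ of degree $w(v)$ in the single variable $z_v$, with the remaining activities $(z_u)_{u\neq v}$ held fixed at complex values satisfying $|z_u|\geq 1$. I claim every zero $\alpha$ of $P_v$ satisfies $|\alpha|\leq 1$: if $|z_v|>1$, then the full activity vector has $|z_u|\geq 1$ with strict inequality at $v$, so the weighted version of Theorem~\ref{thm:lee-yang-2} (which, as noted after~(\ref{eq:1}), follows from the substitution $y_u=z_u^{w(u)}$) gives $Z_w(G)\neq 0$. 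Hence $P_v$ has no zeros outside the closed unit disk.

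Writing $P_v(z)=c\prod_{i=1}^{w(v)}(z-\alpha_{v,i})$ with $|\alpha_{v,i}|\leq 1$, and using that $Z_w(G)\neq 0$ forces $z_v\neq \alpha_{v,i}$, I would expand
\[
\frac{z_v\,\partial_{z_v}Z_w(G)}{Z_w(G)} \;=\; \frac{z_vP_v'(z_v)}{P_v(z_v)} \;=\; \sum_{i=1}^{w(v)}\frac{z_v}{z_v-\alpha_{v,i}}.
\]
A short M\"obius computation (setting $z_v=Re^{i\theta}$ and $\alpha=re^{i\phi}$ with $R\geq r$, or equivalently using that $w\mapsto 1/(1-w)$ sends the closed unit disk to the closed half-plane $\{\Re\geq 1/2\}$) shows $\Re\bigl(z_v/(z_v-\alpha)\bigr)\geq 1/2$, with equality iff $|z_v|=|\alpha|$. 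Summing the estimate over the $w(v)$ roots gives $\Re\bigl(z_v\,\partial_{z_v}Z_w(G)/Z_w(G)\bigr)\geq w(v)/2$.

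Finally, summing over $v\in V$ and using that legality forces $w(v)\geq 1$ for every vertex, one obtains
\[
\Re\!\left(\frac{\mathcal{D}_GZ_w(G)}{Z_w(G)}\right) \;=\; \sum_{v\in V}\Re\!\left(\frac{z_v\,\partial_{z_v}Z_w(G)}{Z_w(G)}\right)\;\geq\;\sum_{v\in V}\frac{w(v)}{2}\;\geq\;\frac{n}{2}.
\]
The only step requiring any care is the appeal to the weighted multivariate Lee-Yang theorem, which in turn rests on the substitution $y_u=z_u^{w(u)}$ reducing $Z_w$ to an ordinary Ising partition function in the variables $y_u$; once this is available, the remainder is the standard half-plane mapping property of $z/(z-\alpha)$ under the Lee-Yang condition $|\alpha|\leq|z|$, and I do not anticipate a genuine obstacle. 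In particular, the plan avoids any direct appeal to correlation inequalities, trading them for a one-line logarithmic-derivative computation at each vertex.
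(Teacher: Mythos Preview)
Your argument is correct and takes a genuinely different route from the paper's proof. The paper derives Theorem~\ref{thm:newman} by quoting Newman's original inequality (Theorem~\ref{thm:newman-orig}) as a black box and then reducing the weighted case to the unweighted one: it attaches to each vertex $v$ a chain of $w(v)-1$ vertices with edge potential $\gamma$, applies Newman's inequality on this auxiliary graph, and takes the limits $\gamma\to 0$ and $y_v\to z_v$ to recover the statement for $Z_w$ on the closed region $|z_v|\geq 1$. Your approach instead bypasses Newman's inequality entirely, using only the weighted Lee--Yang theorem to confine the zeros of the one-variable restriction $P_v(z_v)$ to the closed unit disk, and then the elementary half-plane property of $z/(z-\alpha)$ to bound the logarithmic derivative vertex by vertex. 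This is more self-contained (indeed, it is close in spirit to how Newman proves his own inequality) and yields the slightly sharper bound $\Re(M(G))\geq\frac{1}{2}\sum_v w(v)$.

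Two small points worth tightening. First, Theorem~\ref{thm:lee-yang-2} as stated requires $G$ connected, whereas Theorem~\ref{thm:newman} is for arbitrary $G$; you should note that for disconnected $G$ one factors $Z_w(G)$ over components, uses the standing hypothesis $Z_w(G)\neq 0$ to see that the components not containing $v$ contribute a nonzero constant to $P_v$, and applies Lee--Yang to the component containing $v$. Second, your factorization $P_v(z)=c\prod_{i=1}^{w(v)}(z-\alpha_{v,i})$ presupposes that $P_v$ has degree exactly $w(v)$, i.e., that its leading coefficient $Z_w^+(\{v\})$ is nonzero; this follows from Lemma~\ref{lem:rec} (itself a consequence of Lee--Yang), and without it you would only get $\Re\geq (\deg P_v)/2$. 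Neither point is a real obstacle.
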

\noindent In the special case of real valued activities, the above theorem is
equivalent to the well known Griffiths inequality
\cite{griffiths_correlations_1967}, which states the intuitive fact
that in a ferromagnetic Ising model where all activities favor the $+$ spin, the
magnetization must be at least $n/2$. 

For ease of reference in the inductive proof, we give a name to the
property we want to establish.  Recall that when all the
vertex activities are equal to $z$, the classical Gauss-Lucas theorem,
together with the Lee-Yang theorem, implies that the zeros of the
derivative $\D{G}(Z(G))$ lie \emph{on or inside} the unit
circle.  Our goal is to establish that they actually lie \emph{inside}
the unit circle.  Accordingly, we use the following terminology:
\begin{definition}[Strict Gauss-Lucas property]
  A graph $G=(V,E)$ has the \emph{strict Gauss-Lucas property (SGLP)}
  if for every set of activities such that $\abs{z_v} \geq 1$ for all
  $v\in V$, and every $0 < \beta < 1$, one has $\D{G}Z(G) \neq 0$.  The
  graph has the \emph{weighted strict Gauss-Lucas property (WSGLP)} if
  for all legal weights $w$, $\D{G}Z_w(G) \neq 0$ necessarily holds
  under the same conditions.
\end{definition}
Note that WSGLP easily implies SGLP: we simply choose $w(v) = \Delta$
for all $v$,
where $\Delta$ is the maximum degree of $G$.  From
WSGLP, we then have that whenever $\abs{z_v}\geq 1$ and $0< \beta <
1$, $\D{G}Z_w(G) = \Delta\D{G}Z(G) \neq 0$, and hence $\D{G}Z(G) \neq
0$.  Thus Theorem\nobreakspace \ref {thm:lee-yang-ext} is implied by the following more
general statement.
\begin{theorem}\label{thm:wsglp}
  Every connected graph has the weighted strict Gauss-Lucas property,
  and hence also the strict Gauss-Lucas property.
\end{theorem}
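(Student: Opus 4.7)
My plan is to follow Asano's inductive strategy for the classical Lee-Yang theorem (Theorem~\ref{thm:lee-yang-2}), adapted to the more complex polynomial $\mathcal{D}_G Z_w(G)$, with Newman's correlation inequality (Theorem~\ref{thm:newman}) playing a crucial role in the inductive step. The first step is to reduce the problem to the unit torus $\{|z_v| = 1\}_{v \in V}$: for activities with at least one $|z_v| > 1$, Theorem~\ref{thm:lee-yang-2} gives $Z_w(G) \ne 0$ and Newman's inequality then yields $\Re(\mathcal{D}_G Z_w(G) / Z_w(G)) \ge n/2 > 0$, forcing $\mathcal{D}_G Z_w(G) \ne 0$. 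If $z^*$ lies on the unit torus and $Z_w(G)(z^*) \ne 0$, a continuity argument (letting activities approach the torus from the strict region) extends the same bound to $z^*$. The remaining, hardest case is when $z^*$ on the unit torus is a \emph{repeated} zero of $Z_w(G)$, so that both $Z_w(G)(z^*) = 0$ and $\mathcal{D}_G Z_w(G)(z^*) = 0$.

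I would then rule out this remaining case by induction on the number of edges $m$ of $G$. The base case $m = 0$ is immediate: since $G$ is connected, $G$ is a single vertex with $w \ge 1$, and $Z_w = 1 + z^w$ has only simple roots (the $w$-th roots of $-1$). For the inductive step, pick an edge $e = uv$ and set $G' = G - e$. Expanding in the variables $z_u^{w(u)}$ and $z_v^{w(v)}$, write
\[
Z_w(G') = A + B z_u^{w(u)} + C z_v^{w(v)} + D z_u^{w(u)} z_v^{w(v)},
\]
where $A, B, C, D$ are weighted partition functions on the induced subgraphs of $G'$ obtained by fixing the spins of $u$ and $v$. A direct combinatorial computation then yields
\[
Z_w(G) = A + \beta B z_u^{w(u)} + \beta C z_v^{w(v)} + D z_u^{w(u)} z_v^{w(v)}.
\]
Since the degrees of $u$ and $v$ in $G'$ are one less than in $G$, the same weight function $w$ remains legal for $G'$ and for the smaller induced pieces, so the inductive hypothesis can be invoked at the right scale. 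A hypothetical repeated zero of $Z_w(G)$ at $z^*$ on the unit torus translates, through the identity above, into explicit algebraic relations among $A, B, C, D$ and their $\mathcal{D}$-derivatives at $z^*$.

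The main obstacle I expect is the final contradiction: combining the Asano-style constraints on $A, B, C, D$ derived from Lee-Yang applied to $G'$ (which bound their ratios) with Newman's real-part inequality applied to the smaller graphs (which controls their $\mathcal{D}$-derivatives) in just the right way to exclude the algebraic configuration forced by a repeated zero of $Z_w(G)$. A key convenience is that Theorem~\ref{thm:newman} does not require connectedness and applies to the weighted partition function, so it can be used on the smaller pieces freely, even when $e$ is a bridge and $G'$ splits into two components. Handling the disconnection case cleanly, and orchestrating the combination of Asano's bounds with Newman's real-part bound to preclude a repeated zero on the unit torus, is where I expect the argument to require the most delicacy; this is precisely the point at which, as the paper previews, the proof becomes more subtle than Asano's original induction.
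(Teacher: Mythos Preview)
Your high-level reduction to the unit torus via Lee--Yang plus Newman is exactly right and matches how the paper uses Theorem~\ref{thm:newman}: once $Z_w(G) \ne 0$, Newman forces $\Re(M(G)) \ge n/2 > 0$, so the only danger is a simultaneous zero of $Z_w(G)$ and $\mathcal{D}_G Z_w(G)$ on the torus. But your proposed inductive step has a genuine structural gap.

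The decomposition you write down,
\[
Z_w(G') = A + Bx + Cy + Dxy, \qquad Z_w(G) = A + \beta Bx + \beta Cy + Dxy
\]
(with $x = z_u^{w(u)}$, $y = z_v^{w(v)}$), is correct, but it is not clear how the inductive hypothesis---WSGLP for $G' = G - e$---transfers to $G$. The two derivatives $\mathcal{D}_G Z_w(G')$ and $\mathcal{D}_G Z_w(G)$ differ only in that the two ``middle'' terms pick up a factor $\beta$; with both endpoint variables $x,y$ still present and independent, there is no monotonicity or product-of-roots argument available. Moreover, when $e$ is a bridge, $G'$ is disconnected and WSGLP genuinely fails for it (e.g.\ two isolated vertices give $\mathcal{D}_G Z_w = w_1 z_1^{w_1}(1+z_2^{w_2}) + w_2 z_2^{w_2}(1+z_1^{w_1})$, which vanishes at $z_1^{w_1} = z_2^{w_2} = -1$), so you cannot invoke the hypothesis at all. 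Your remark that Newman applies to the components is true but not enough: it is the \emph{inductive} hypothesis, not Newman, that you need on $G'$, and that hypothesis requires connectedness.

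The paper's key structural move is to factor ``add the edge $\{u,v\}$'' into two WSGLP-preserving steps, each of which keeps the graph connected: first \emph{attach a new pendant vertex} $v'$ to $u$ (Lemma~\ref{lem:add-vertex}), then \emph{merge} the non-adjacent vertices $v'$ and $v$ (Lemma~\ref{lem:add-edge}). The merge step is where the inductive hypothesis is actually leveraged: writing $Z_w$ of the pre-merge graph in $z_1, z_2$ and substituting $z_1 = z_2 = z$ yields a \emph{univariate} polynomial of degree $w_1 + w_2$ with no roots of modulus $\ge 1$ and nonzero leading coefficient $A' + (w_1+w_2)A$ (via Lemma~\ref{lem:rec}); the product of its roots then gives $|B'/(A' + (w_1+w_2)A)| < 1$, which is precisely the inequality needed for the merged graph. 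The pendant-vertex step is where the delicate computation with Newman's inequality occurs. Your proposal has all the right ingredients but misses this two-step factorization, which is what collapses the bivariate situation to a tractable univariate one and makes the induction go through.
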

We now proceed to prove Theorem\nobreakspace \ref {thm:wsglp}, using
induction on the 
number of edges in the graph $G$.  We first consider the base case of
a connected graph with a single edge.
\begin{lemma}[Base Case]
  \label{lem:base}
  Let $G$ be the graph consisting of a single edge connecting two
  vertices.  Then $G$ has the weighted strict Gauss-Lucas property. 
\end{lemma}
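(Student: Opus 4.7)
The plan is a direct calculation, since a connected graph with a single edge leaves essentially no structural freedom. Let the two vertices be $u,v$ with legal weights $a := w(u) \geq 1$ and $b := w(v) \geq 1$ (legality forces $a,b \geq 1$ because each vertex has degree one). First I would enumerate the four configurations in $\{+,-\}^{\{u,v\}}$ and read off from (\ref{eq:1}) that
\[ Z_w(G) = z_u^a z_v^b + \beta z_u^a + \beta z_v^b + 1. \]
Applying the operator $\D{G} = z_u\partial_{z_u} + z_v\partial_{z_v}$ term by term then gives
\[ \D{G}Z_w(G) = (a+b)\, z_u^a z_v^b + a\beta\, z_u^a + b\beta\, z_v^b. \]

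The remainder is a one-line estimate. Suppose for contradiction that this polynomial vanishes at some point with $|z_u|, |z_v| \geq 1$ and $0 < \beta < 1$. Since both $z_u$ and $z_v$ are then nonzero, I would divide through by $z_u^a z_v^b$ to obtain
\[ (a+b) = -\frac{a\beta}{z_v^b} - \frac{b\beta}{z_u^a}. \]
Taking absolute values on both sides and using $|z_u|^a, |z_v|^b \geq 1$, the right-hand side has modulus at most $a\beta + b\beta = (a+b)\beta$, which contradicts $\beta < 1$. Hence $\D{G}Z_w(G) \neq 0$ throughout the region in question, establishing WSGLP for the single-edge graph.

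There is no real obstacle in this base case: the argument is essentially just the triangle inequality, exploiting the fact that both of the ``lower order'' monomials in $\D{G}Z_w$ carry a factor of $\beta$ while the top monomial $z_u^a z_v^b$ does not. The genuine difficulty of Theorem~\ref{thm:wsglp} is pushed entirely into the inductive step, where each newly added edge must be shown not to create zeros of $\D{G}Z_w$ on the closed exterior of the unit polydisc. I expect that step to require an Asano-contraction style manipulation of $Z_w$ together with the correlation inequality of Theorem~\ref{thm:newman}, whose role is precisely to rule out the borderline case where a zero of $\D{G}Z_w$ might drift onto the unit torus as edges are introduced.
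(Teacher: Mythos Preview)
Your proof is correct and is essentially identical to the paper's own argument: compute $Z_w(G)$ and $\D{G}Z_w(G)$ explicitly, divide a putative zero equation through by the top monomial, and apply the triangle inequality together with $|z_u|,|z_v|\ge 1$ to force $a+b \le \beta(a+b)$, contradicting $\beta<1$. The only differences are notational.
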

\begin{proof}
  In this case we have $Z_w(G) = z_1^{w_1}z_2^{w_2} + \beta(z_1^{w_1}
  + z_2^{w_2}) + 1$ and therefore $\D{G}Z_w(G) = (w_1 + w_2)z_1^{w_1}z_2^{w_2} +
  \beta(w_1z_1^{w_1} + w_2z_2^{w_2})$, with $w_1,w_2 \geq 1$.  When
  $\abs{z_1},\abs{z_2}\geq 1$, the latter vanishes only if
  \begin{displaymath}
   w_1 + w_2 = \beta\abs{\frac{w_2}{z_1^{w_1}} + \frac{w_1}{z_2^{w_2}}} \leq
   \beta\inp{\frac{w_2}{\abs{z_1^{w_1}}} + \frac{w_1}{\abs{z_2^{w_2}}}} \leq  \beta\inp{w_1 + w_2},
  \end{displaymath}
  which cannot hold since $0 < \beta < 1$.  
\end{proof}
For the inductive case, we require two operations: adding a new vertex
to the graph, and merging two existing vertices.  These operations are
formalized in the following lemmas.

\begin{lemma}[Adding a vertex]
  \label{lem:add-vertex}
  Suppose $G = (V, E)$ is a connected graph satisfying the weighted
  Gauss-Lucas property.  Let $u$ be a vertex not in
  $V$.  Then, the graph $G_1$ obtained by attaching the new vertex $u$
  to any vertex (say $v_1$) of $G$ also has the weighted
  Gauss-Lucas property.
\end{lemma}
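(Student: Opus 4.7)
The plan is to exploit the recursive structure of $Z_w(G_1)$ under vertex addition, reducing the problem on $G_1$ to a problem on $G$ (to which the inductive hypothesis applies) and then combining with Newman's correlation inequality (Theorem~\ref{thm:newman}).

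First I would split $Z_w(G_1)$ according to the spin at the added vertex $u$ and the spin at its unique neighbor $v_1$:
\[
Z_w(G_1) = (z_u^{w(u)} + \beta)\, Z_w^{+}(G) + (\beta z_u^{w(u)} + 1)\, Z_w^{-}(G),
\]
where $Z_w^{\pm}(G)$ denotes the weighted partition function on $G$ restricted to configurations with $\sigma(v_1) = \pm$. Setting $A = z_u^{w(u)} + \beta$, $B = \beta z_u^{w(u)} + 1$, and $\zeta = A/B$, a direct computation of $|\zeta|^{2}-1$ shows $|\zeta| \geq 1$ whenever $|z_u| \geq 1$ and $0 < \beta < 1$ (this is the M\"obius-type identity underlying all Asano-style Lee-Yang proofs). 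Hence $Z_w(G_1) = B \cdot \widetilde Z_w(G)$, where $\widetilde Z_w(G)$ is the weighted Ising partition function of $G$ with the activity at $v_1$ replaced by a complex activity $\tilde z_{v_1}$ satisfying $\tilde z_{v_1}^{w(v_1)} = \zeta\, z_{v_1}^{w(v_1)}$. This modified activity satisfies $|\tilde z_{v_1}| \geq 1$, so the inductive hypothesis \textsc{wsglp} applied to $G$ yields $\mathcal{D}_G \widetilde Z_w(G) \neq 0$.

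Next, using $\mathcal{D}_{G_1} = z_u\,\partial/\partial z_u + \mathcal{D}_G$ and the product rule, I would write
\[
\mathcal{D}_{G_1} Z_w(G_1) = (z_u\,\partial B/\partial z_u)\, \widetilde Z_w(G) + B\, \mathcal{D}_{G_1} \widetilde Z_w(G).
\]
In the \emph{interior} case $Z_w(G_1) \neq 0$, dividing by $Z_w(G_1)=B\widetilde Z_w(G)$ expresses $M(G_1)$ as the sum of the marginal ``plus-probability'' at $u$ and the magnetization of the modified Ising model on $G$; Newman's inequality applied directly to $G_1$ then gives $\Re(M(G_1)) \geq (n+1)/2 > 0$, immediately yielding $\mathcal{D}_{G_1} Z_w(G_1) \neq 0$. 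By Theorem~\ref{thm:lee-yang-2} the complementary case $Z_w(G_1) = 0$ can occur only on the distinguished torus $|z_v| = 1$ for every $v \in V(G_1)$, where $|B| \geq 1-\beta > 0$ forces $\widetilde Z_w(G) = 0$, and the displayed formula collapses to $\mathcal{D}_{G_1} Z_w(G_1) = B\cdot \mathcal{D}_{G_1}\widetilde Z_w(G)$.

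The main obstacle is showing this residual quantity does not vanish on the torus: the inductive hypothesis delivers $\mathcal{D}_G \widetilde Z_w(G) \neq 0$, but the additional term $(z_u\,\partial \zeta/\partial z_u)\, Z_w^{+}(G)$ contributing to $\mathcal{D}_{G_1}\widetilde Z_w(G)$ could, a priori, cancel it. I would dispose of this by a Hurwitz-type continuity argument: radially perturb all activities to $(1+\epsilon) z_v$ for small $\epsilon > 0$, placing them strictly outside the unit disk so that the interior analysis applies and $\mathcal{D}_{G_1} Z_w(G_1) \neq 0$ for the perturbed configuration. One then applies Hurwitz's theorem to $\mathcal{D}_{G_1} Z_w(G_1)$ as a polynomial in a single coordinate, using that this polynomial is visibly not identically zero (its leading monomial $\prod_v z_v^{w(v)}$ has positive coefficient $\sum_v w(v)$), to conclude nonvanishing persists on the closed region $\{|z_v| \geq 1\}$. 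The inductive hypothesis on $G$ is what ensures the Hurwitz limit is nondegenerate, since it rules out the polynomial $\mathcal{D}_{G_1} Z_w(G_1)$ developing an identically-zero behavior along the torus as the perturbation is removed.
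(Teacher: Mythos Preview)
Your interior case is correct and mirrors the paper's first step: if $Z_w(G_1)\neq 0$ with all $|z_v|\ge 1$, Newman's inequality (Theorem~\ref{thm:newman}) applied directly to $G_1$ gives $\Re(M(G_1))\ge (n{+}1)/2>0$, so $\mathcal D_{G_1}Z_w(G_1)\neq 0$. The paper makes exactly this reduction (``It follows from Theorem~\ref{thm:newman} that we then also have $Z_w(G_1)=0$'').

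The gap is in your boundary case. Your Hurwitz argument attempts to pass from nonvanishing on the region where at least one $|z_v|>1$ to nonvanishing on the full torus $\{|z_v|=1\}$, but no such principle exists. Concretely: $F(z_1,z_2)=z_1z_2-1$ is nonzero whenever $|z_1|,|z_2|\ge 1$ with at least one strict, yet $F(1,1)=0$. Whether you perturb all coordinates simultaneously or freeze all but one, Hurwitz (or continuity of roots) only delivers that the zeros of the limit lie in the \emph{closed} disk $\{|z_u|\le 1\}$, not the open one. Your final sentence says the inductive hypothesis ``ensures the Hurwitz limit is nondegenerate'', but the only degeneracy Hurwitz cares about is the limit being identically zero, and you already ruled that out via the leading coefficient without using the hypothesis; so as written, your argument never actually uses WSGLP for $G$, which should be a warning sign.

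The paper's proof handles the boundary by explicit algebra rather than a limiting argument. Writing $Z_w(G)=Az_1^{w_1}+B$ and assuming both $Z_w(G_1)=0$ and $Z_w(G_1)'=0$, one solves the first equation for $y=z_1^{w_1+1}$ and deduces $|z_u|=1$ and $|B/A|=1$ (else $|y|<1$). Substituting into the second equation yields an expression for $B'/(A'+w_1A)$ whose right-hand side, after a short real-part computation using $\Re(A'/A)\ge 0$ (Lemma~\ref{lem:rec}), has modulus at least~$1$. This contradicts the \emph{strict} inequality $|B'/(A'+w_1A)|<1$, which is exactly what WSGLP for $G$ supplies. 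That strictness is the content the inductive hypothesis contributes, and it is what your Hurwitz approach cannot manufacture.
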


\begin{lemma}[Merging vertices]
  \label{lem:add-edge}
  Suppose $G=(V,E)$ is a connected graph satisfying the weighted
  strict Gauss-Lucas property.  Consider any two vertices, say $v_1$
  and $v_2$, in $G$ that are not connected by an edge.  The graph $G_1$
  obtained by merging $v_1$ and $v_2$ into a single
  vertex $v$ (while making all the edges incident on $v_1$ and $v_2$
  incident on $v$) also has the weighted strict Gauss-Lucas property.
\end{lemma}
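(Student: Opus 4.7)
The plan is to exploit an Asano-contraction structure relating the partition functions of $G$ and $G_1$, together with the induction hypothesis that $G$ has WSGLP.  Given a legal weight function $\tilde{w}$ on $G_1$, I first choose legal weights $w_1, w_2$ on $v_1, v_2$ in $G$ with $w_1 + w_2 = \tilde{w}(v)$; this is feasible because $v_1, v_2$ are non-adjacent in $G$, so $\mathrm{deg}_{G_1}(v) = \mathrm{deg}_G(v_1) + \mathrm{deg}_G(v_2)$.  Viewing $Z_w(G)$ as a polynomial in $z_{v_1}, z_{v_2}$, I would write
\[
Z_w(G) = A\, z_{v_1}^{w_1} z_{v_2}^{w_2} + B\, z_{v_1}^{w_1} + C\, z_{v_2}^{w_2} + D,
\]
where $A, B, C, D$ are polynomials in the remaining activities corresponding to the four possible spin assignments to $(v_1, v_2)$.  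Since $G_1$-configurations are in bijection with $G$-configurations satisfying $\sigma(v_1) = \sigma(v_2)$, one obtains $Z_{\tilde{w}}(G_1) = A\, z_v^{w_1+w_2} + D$.  A direct computation with the differential operators then yields the Asano-contraction identity
\[
\D{G_1} Z_{\tilde{w}}(G_1) = E_0 + E_3\, z_v^{w_1+w_2}, \qquad \D{G} Z_w(G) = E_0 + E_1\, z_{v_1}^{w_1} + E_2\, z_{v_2}^{w_2} + E_3\, z_{v_1}^{w_1} z_{v_2}^{w_2},
\]
with $E_0 = \mathcal{D}' D$, $E_3 = (w_1 + w_2) A + \mathcal{D}' A$, and analogous expressions for $E_1, E_2$ in terms of $B, C$; here $\mathcal{D}' = \sum_{u \neq v_1, v_2} z_u\, \partial/\partial z_u$.

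Thus it suffices to establish $|E_0| < |E_3|$ strictly at every point with $|z_u| \geq 1$ for $u \neq v$: this forces all zeros of $E_0 + E_3\, z_v^{w_1+w_2}$ to lie strictly inside the unit disk, so that $\D{G_1} Z_{\tilde{w}}(G_1) \neq 0$ whenever $|z_v| \geq 1$.  I would derive this bound by averaging.  By WSGLP for $G$, the polynomial $F(x, y) := E_0 + E_1 x^{w_1} + E_2 y^{w_2} + E_3 x^{w_1} y^{w_2}$ has no zero with $|x|, |y| \geq 1$.  Fixing $y_0$ with $|y_0| \geq 1$ and setting $u = y_0^{w_2}$, $F(\cdot, y_0)$ is a polynomial in $x^{w_1}$ with no root of modulus $\geq 1$, yielding
\[
|E_0 + E_2 u| < |E_1 + E_3 u| \qquad \text{for all } |u| \geq 1.
\]
Squaring and integrating over $|u| = 1$ (the cross terms vanish by orthogonality of $1$ and $u$ on the unit circle) gives $|E_0|^2 + |E_2|^2 < |E_1|^2 + |E_3|^2$; the symmetric argument with the roles of $x$ and $y$ exchanged yields $|E_0|^2 + |E_1|^2 < |E_2|^2 + |E_3|^2$.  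Adding the two inequalities cancels the $|E_1|^2$ and $|E_2|^2$ terms, leaving $|E_0|^2 < |E_3|^2$, as desired.

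The main obstacle is ensuring that the pointwise inequality $|E_0 + E_2 u| < |E_1 + E_3 u|$ is indeed strict uniformly on the closed polydisk, which requires ruling out the degenerate case $E_1 + E_3 u_0 = 0$ at some $|u_0| \geq 1$---otherwise the averaging step would be vacuous or even reversed near $u_0$.  A continuity argument dispatches this: if $E_1 + E_3 u_0 = 0$, then for $y$ close to a preimage of $u_0$ with $|y| > 1$, the root-magnitude $|x|^{w_1} = |(E_0 + E_2 u)/(E_1 + E_3 u)|$ becomes arbitrarily large, producing a root of $F$ with $|x|, |y| \geq 1$ and contradicting the induction hypothesis.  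One must also verify that $E_3 \neq 0$ so that $\D{G_1} Z_{\tilde{w}}(G_1)$ is a genuinely non-trivial polynomial in $z_v$; this is where Newman's inequality (Theorem~\ref{thm:newman}) enters.  Applied to $Z_w(G)$ in the limit $|z_{v_1}|, |z_{v_2}| \to \infty$, Newman gives $\Re(E_3/A) \geq n/2$, and combined with $A \neq 0$ on the closed polydisk---a consequence of the multivariate Lee-Yang theorem for $G$---this yields $E_3 \neq 0$.  Once these degeneracies are handled, $|E_3| > |E_0|$ establishes $\D{G_1} Z_{\tilde{w}}(G_1) \neq 0$ on the closed polydisk $|z_v| \geq 1$, proving WSGLP for $G_1$.
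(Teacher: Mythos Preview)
Your proof is correct and reaches the same endpoint as the paper's---the strict inequality $|E_0|<|E_3|$ (equivalently, in the paper's notation, $|B'|<|A'+(w_1+w_2)A|$)---but the route is genuinely different.

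The paper gets there in one stroke by specializing to the diagonal $z_1=z_2=z$. Then $\D{G}Z_w(G)$ becomes a \emph{univariate} polynomial in~$z$ of degree $w_1+w_2$ with leading coefficient $E_3$ and constant term $E_0$; WSGLP for $G$ forces all its roots to lie strictly inside the unit disk, and Vieta's formula for the product of the roots immediately yields $|E_0/E_3|<1$. The nonvanishing of the leading coefficient $E_3$ is handled by the paper's Lemma~\ref{lem:rec} (which packages the Newman input you invoke via a limit). No averaging, no degeneracy analysis, no second inequality.

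Your argument instead treats the two variables separately, extracts the pair of inequalities $|E_0|^2+|E_2|^2<|E_1|^2+|E_3|^2$ and $|E_0|^2+|E_1|^2<|E_2|^2+|E_3|^2$ by $L^2$-averaging on the unit circle, and adds them. This is sound---the continuity argument ruling out $E_1+E_3u_0=0$ on $|u_0|\ge 1$ is exactly what is needed to keep the pointwise inequality strict, and your use of Newman to secure $E_3\neq 0$ is equivalent to Lemma~\ref{lem:rec}. The diagonal trick is shorter and more transparent here, but your averaging technique is a natural alternative that does not rely on the happy accident that the cross terms $E_1,E_2$ also appear on the diagonal and can be absorbed into the product of roots.
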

Before proceeding with the proofs of the above lemmas, we show
how to use them to prove Theorem\nobreakspace \ref {thm:wsglp}.
\begin{proof}[Proof of Theorem\nobreakspace \ref {thm:wsglp}]
  We will prove by induction on $m$ that any connected graph with at
  most $m$ edges satisfies WSGLP.  By Lemma\nobreakspace \ref
  {lem:base}, this statement is true when $m = 1$. Now suppose that
  the statement is true when $m = k$, and consider any connected graph
  $G$ with $k + 1$ edges.

  In case $G$ has a cycle, there exist
  vertices $u$ and $v$ such that the edge $\inb{u,v}$ can be removed
  from $G$ to obtain a connected graph $H$.  Since $H$ has at most $k$
  edges, $H$ satisfies WSGLP by the inductive hypothesis.  Let $v_1$
  be a vertex not in $G$.  By Lemma\nobreakspace \ref {lem:add-vertex}, the graph
  $H\cup\inb{\inb{u,v_1}}$ satisfies WSGLP.  We can now merge $v_1$
  and $v$ to obtain $G$, which therefore satisfies WSGLP by
  Lemma\nobreakspace \ref {lem:add-edge}.

  In case $G$ is a tree, there exists an edge
  $\inb{u,v}$ such that $v$ is of degree $1$.  Again, we obtain a
  connected graph $H$ with at most $k$ edges by removing the edge
  $\inb{u,v}$.  By the inductive hypothesis, $H$ satisfies WSGLP, and
  hence by Lemma\nobreakspace \ref {lem:add-vertex}, $G$ does too.  This completes the
  induction.
\end{proof}
\begin{remark}
  Note that the proof of Theorem~\ref{thm:wsglp} given above holds
  also when the graph $G$ is allowed to have parallel edges and
  self-loops.  This will be useful in our extension to general
  two-state ferromagnetic spin systems\notbool{csub}{ in
    Appendix~\ref{sec:hardness-general-two}.   
  }{, which can be found in the full version.}
\end{remark}

We turn now to the proofs of Lemmas\nobreakspace \ref {lem:add-vertex}
and\nobreakspace  \ref {lem:add-edge}, for which we will need the
following additional lemma.
\begin{lemma}
  \label{lem:rec} Let $G$ be a connected graph.  Fix any set $S$ of
  vertices of $G$, and let $Z_w^+(S)$ denote the partition function
  restricted to configurations on the subgraph $G-S$, with all the vertices in $S$
  fixed to have spin $+$.  Consider any set of vertex activities
  satisfying $\abs{z_v} \geq 1$ for $v\in G - S$.  Then, for $0 < \beta < 1$ and any set of
  permissible weights on the vertices of $G$, we have $Z_w^+(S) \neq 0$
  and $\Re\inp{Z_w^+(S)'/Z_w^+(S)} \geq 0$.  In particular, for any
  positive real $a$, we have $Z_w^+(S)' + aZ_w^+(S) \neq 0$.
\end{lemma}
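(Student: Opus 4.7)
The plan is to reduce $Z_w^+(S)$ to an ordinary weighted partition function on $G-S$ with suitably boosted activities, and then to apply the multivariate Lee-Yang theorem (Theorem\nobreakspace \ref{thm:lee-yang-2}) for non-vanishing and Newman's theorem (Theorem\nobreakspace \ref{thm:newman}) for the real-part bound; the third claim will then drop out automatically.

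For each $v \in V-S$, let $k_v$ denote the number of its neighbors in $S$; with the vertices of $S$ pinned to $+$, each boundary edge incident to $v$ contributes $\beta$ when $\sigma(v)=-$ and $1$ when $\sigma(v)=+$. Factoring the common $\beta^{k_v}$ out of the sum and defining boosted activities $\tilde z_v$ by $\tilde z_v^{w(v)} := z_v^{w(v)}\,\beta^{-k_v}$, a direct manipulation of (\ref{eq:1}) yields
\begin{equation*}
  Z_w^+(S) \;=\; C \cdot Z_w\bigl(G-S;\,(\tilde z_v)_{v\in V-S}\bigr), \qquad C := \prod_{v\in S} z_v^{w(v)} \cdot \prod_{v\in V-S} \beta^{k_v},
\end{equation*}
where the restriction of $w$ to $V-S$ remains legal since $\deg_{G-S}(v)\le \deg_G(v)\le w(v)$. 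Now $|\tilde z_v| = |z_v|\,\beta^{-k_v/w(v)} \ge 1$, with strict inequality exactly when $k_v\ge 1$; and since $G$ is connected, every connected component of $G-S$ contains at least one vertex adjacent to $S$ (assuming the nontrivial case $S\neq\emptyset$). Applying Theorem\nobreakspace \ref{thm:lee-yang-2} to each such component gives $Z_w(G-S;(\tilde z_v))\neq 0$, and hence $Z_w^+(S)\neq 0$.

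For the second claim, write $P := Z_w(G-S;(\tilde z_v))$. Since $\tilde z_v$ is a constant (in $z_v$) multiple of $z_v$, the chain rule gives $z_v\partial/\partial z_v = \tilde z_v\, \partial/\partial \tilde z_v$ when applied to functions of $\tilde z_v$; hence $\D{G}$ acts on $P$ as $\D{G-S}$ (in the $\tilde z_v$ variables), while on the prefactor $C$ it multiplies by $\sum_{v\in S} w(v)$. Therefore
\begin{equation*}
  \frac{Z_w^+(S)'}{Z_w^+(S)} \;=\; \sum_{v\in S} w(v) \;+\; \frac{\D{G-S} P}{P}.
\end{equation*}
Factoring $P$ over the connected components of $G-S$ and applying Theorem\nobreakspace \ref{thm:newman} to each component (activities of modulus $\geq 1$ and nonzero partition function, as just verified) yields $\Re(\D{G-S} P/P) \geq |V-S|/2 \geq 0$. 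Adding the non-negative integer $\sum_{v\in S} w(v)$ gives $\Re(Z_w^+(S)'/Z_w^+(S))\geq 0$.

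The third claim is immediate: $Z_w^+(S)' + aZ_w^+(S) = Z_w^+(S)\bigl(Z_w^+(S)'/Z_w^+(S) + a\bigr)$, and since the second factor has real part at least $a>0$ it is nonzero, while the first factor has just been shown to be nonzero. The main technical step is the boosted-activity decomposition $Z_w^+(S) = C\cdot Z_w(G-S;(\tilde z_v))$; the crucial role of connectedness of $G$ is to guarantee that every component of $G-S$ contains a boundary vertex, which is precisely what supplies the strict-inequality vertex needed to invoke Theorem\nobreakspace \ref{thm:lee-yang-2}.
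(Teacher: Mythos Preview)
Your proof is correct and follows essentially the same approach as the paper's: you express $Z_w^+(S)$ as (a constant times) a product of weighted partition functions on the components of $G-S$ with boundary activities boosted by powers of $\beta^{-1}$, invoke the multivariate Lee-Yang theorem (Theorem~\ref{thm:lee-yang-2}) on each component for non-vanishing, and then apply Newman's inequality (Theorem~\ref{thm:newman}) componentwise for the real-part bound. The paper's proof is the same argument stated more tersely; your version simply makes the boosted-activity decomposition and the action of $\D{G}$ on the prefactor explicit.
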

\begin{proof}
  Observe that $Z_w^+(S)$
  is proportional to the product of weighted partition functions on
  connected components of the graph $G-S$, where the activities on the
  vertices connected to $S$ in these components (of which there is at
  least one in each component) have \emph{increased} in magnitude by a
  factor of at least $1/\beta > 1$.  We can therefore conclude using
  Theorem\nobreakspace \ref {thm:lee-yang-2} that $Z_w^+(S) \neq 0$.
  The second condition $\Re\inp{Z_w^+(S)'/Z_w^+(S)} \geq 0$ then
  follows from Theorem\nobreakspace \ref {thm:newman} applied to $G-S$.
\end{proof}
We first prove Lemma\nobreakspace \ref {lem:add-edge}, since its proof
is somewhat simpler.
\begin{proof}[Proof of Lemma\nobreakspace \ref {lem:add-edge}]
  Consider any legal weight assignment on $G_1$.  If the weight of $v$
  in $G_1$ is $w_v$, we can write $w_v = w_1 + w_2$ such that the
  weight assignment giving weights $w_1$ and $w_2$ to $v_1$ and $v_2$
  respectively is legal for $G$.  By partitioning into four cases
  based on the spins of $v_1$ and $v_2$, we can write the
  corresponding weighted partition function $Z_w(G)$ and its
  derivative as
  \begin{align}
    Z_w(G) &= Az_1^{w_1}z_2^{w_2} + C{z_1}^{w_1} + D{z_2}^{w_2} +
    B;\label{eq:3}\\
    \notbool{csub}{Z_w(G)' &= (A' + (w_1 + w_2)A)z_1^{w_1}z_2^{w_2} + (C' +
    w_1C)z_1^{w_1} + (D' + w_2D)z_2^{w_2} + B',\label{eq:4}}{
    Z_w(G)' &= (A' + (w_1 + w_2)A)z_1^{w_1}z_2^{w_2} + (C' +
    w_1C)z_1^{w_1}\nonumber\\
    &\quad+ (D' + w_2D)z_2^{w_2} + B',\label{eq:4}
    }
  \end{align}
  for polynomials $A$, $B$, $C$, $D$ in the remaining variables $z_i$.
  Notice that in the notation of Lemma~\ref{lem:rec}, $A =
  Z_w^+(\inb{v_1,v_2})$.  Similarly, denoting the activity at the
  merged vertex by $z$, we have the following expressions for $G_1$:
  \begin{align}
    Z_w(G_1) &= Az^{w_1+w_2} + B;\label{eq:5}\\
    Z_w(G_1)' &= (A' + (w_1 + w_2)A)z^{w_1+w_2} + B',\label{eq:6}
  \end{align}
  with $A$ and $B$ as defined above.  Now consider any fixing of the
  activities such that $|z_i| \geq 1$ for $i > 2$.  Since $G$
  satisfies the weighted strict Gauss-Lucas property, we get by
  setting $z_1 = z_2$ in eq.\nobreakspace \textup {(\ref {eq:4})} that
  the (univariate) polynomial
  \begin{displaymath}
    (A' + (w_1 + w_2)A)z^{w_1 + w_2} + (C' + w_1C)z^{w_1} + (D' + w_2D)z^{w_2} + B'
  \end{displaymath}
  in $z$ has no zeros satisfying $\abs{z} \geq 1$.  Also, we know from
  Lemma\nobreakspace \ref {lem:rec} that $A' + (w_1 + w_2)A \neq 0$.  Thus, we must have
  that the product of the zeros, ${B'}/{(A' + (w_1 + w_2)A)}$, satisfies
  \begin{displaymath}
    \abs{\frac{B'}{A' + (w_1 + w_2)A}} < 1.  
  \end{displaymath}
  However, using eq.\nobreakspace \textup {(\ref {eq:6})}, this implies that if $\abs{z_i} \geq 1$
  for $i > 2$, then $Z_w(G_1)'$ can be zero only if $\abs{z} < 1$, and
  hence $G_1$ satisfies the weighted strict Gauss-Lucas
  property.
\end{proof}
Finally, we give the proof of Lemma\nobreakspace \ref
{lem:add-vertex}.
\begin{proof}[Proof of Lemma\nobreakspace \ref {lem:add-vertex}]
  Note that any legal set of weights for $G_1$ can be obtained by
  adding one to the weight $w_1$ of $v_1$ in a legal set of weights
  $w$ of $G$, and then assigning $u$ an arbitrary weight $w_0 \geq 1$.
  With a slight abuse of notation, we denote these related weight
  assignments (one on $G$ and the other on $G_1$) by the same letter
  $w$.  We now partition the terms in $Z_w(G)$ based on the spin of
  $v_1$ to get
  \begin{align*}
    Z_w(G) &= Az_1^{w_1} + B;\\
    Z_w(G)' &= (A' + w_1A)z_1^{w_1} + B',
  \end{align*}
  where $w_1$ is the weight of $v_1$ in $G$.  Here, $A$, $B$ are
  polynomials in the remaining variables $z_i$, and $A$ is of the form
  $Z_w^+(\inb{v_1})$ in the notation of Lemma~\ref{lem:rec}.  We again
  assume $0 < \beta < 1$ and $\abs{z_i} \geq 1$ for $i > 1$.  We
  now consider $G_1$.  Denoting the activity at $u$ by $z$, we can
  write
  \begin{align*}
      Z_w(G_1) &= A(\beta + z^{w_0})z_1^{w_1+1} + B(1 + \beta
      z^{w_0});\\
      \ifbool{csub}{
        Z_w(G_1)' &= (A' + w_1A)(\beta + z^{w_0})z_1^{w_1+1} 
        +  {w_0}\beta B z^{w_0}\\
        &\quad+ A(\beta + ({w_0}+1)z^{w_0})z_1^{w_1+1} 
        + B'(1 + \beta z^{w_0}).
      }{
        Z_w(G_1)' &= (A' + w_1A)(\beta + z^{w_0})z_1^{w_1+1} + A(\beta
        + ({w_0}+1)z^{w_0})z_1^{w_1+1} + B'(1 + \beta z^{w_0})\\
        &\qquad+
        {w_0}\beta B z^{w_0}.
      }
  \end{align*}
  Now suppose that $G_1$ does not satisfy the weighted strict
  Gauss-Lucas property, and hence $\abs{z}$ and $\abs{z_1}$ are both
  also at least $1$, but $Z_w(G_1)' = 0$.  It follows from
  Theorem\nobreakspace \ref {thm:newman} that we then also have
  $Z_w(G_1) = 0$.  We now proceed to derive a contradiction to the
  above observations.  For convenience, we denote $z_1^{w_1+1}$ by $y$
  in what follows.

  Using Lemma\nobreakspace \ref{lem:rec}, we know that $A \neq 0$ and that $A' + w_1A
  \neq 0$ for our setting of activities.  By Theorem\nobreakspace \ref
  {thm:lee-yang-2} applied to $Z_w(G)$ and the
  weighted strict Gauss-Lucas property applied to $Z_w(G)'$, we get
  \begin{equation}
    \abs{\frac{B}{A}} \leq 1,\quad\text{ and }\quad\abs{\frac{B'}{A' + w_1A}} < 1.\label{eq:8}
  \end{equation}
  Also, since $Z_w(G_1) = 0$, we must have
  \begin{align}
    y &= -\frac{B}{A}\frac{1 + \beta z^{w_0}}{\beta + z^{w_0}}.\label{eq:12}
  \end{align}
  Notice that $y$ is well defined since $A\neq 0$, $\abs{z} \geq 1$ and
  $\beta < 1$.  Further, since $\beta < 1$, either one of $\abs{z} > 1$,
  or $\abs{B} < \abs{A}$ would imply that $\abs{y} < 1$, which is a
  contradiction to our assumption that $\abs{z_1}\geq 1$ (since $y =
  z_1^{w_1 + 1}$).  Thus, we must
  have
  \begin{equation}
    \abs{z} = 1,\quad\text{ and }\quad\abs{\frac{B}{A}} = 1.\label{eq:13}
  \end{equation}
  Now, substituting the value of $y$ from eq.\nobreakspace \textup {(\ref {eq:12})} into $Z_w(G_1)' =
  0$, we get 
  \begin{displaymath}
    \ifbool{csub}{
      \begin{split}
        &B'(1 + \beta z^{w_0}) + \beta {w_0}Bz^{w_0} =\\
        &\quad\inp{\inp{A' +
            w_1A}(\beta + z^{w_0}) +A(\beta +
        ({w_0}+1)z^{w_0})}\frac{B}{A}\frac{1+\beta z^{w_0}}{\beta +
        z^{w_0}}.
    \end{split}
  }{
    B'(1 + \beta z^{w_0}) + \beta {w_0}Bz^{w_0} = \inp{\inp{A' +
        w_1A}(\beta + z^{w_0}) +A(\beta +
      ({w_0}+1)z^{w_0})}\frac{B}{A}\frac{1+\beta z^{w_0}}{\beta +
      z^{w_0}}.
  }
  \end{displaymath}
  Dividing through by $(A' + w_1A)(1 + \beta z^{w_0})$, setting $c =
  A/(A'+w_1A)$ and rearranging terms, we get 
  \begin{align}
    \frac{B'}{A' + w_1A} &= \frac{B}{A}\inp{1 + c + {w_0}c\inp{\frac{z^{w_0}}{\beta+z^{w_0}} +
        \frac{1}{1 + \beta z^{w_0}} - 1}}\nonumber\\
    \notbool{csub}{&= \frac{B}{A}\inb{1 + c + {w_0}c\inp{2\Re\inp{\frac{z^{w_0}}{\beta+z^{w_0}}}- 1}}\text{, since
      $\abs{z} =1$}.\label{eq:2}}{
    &= \frac{B}{A}\inb{1 + c +
      {w_0}c\inp{2\Re\inp{\frac{z^{w_0}}{\beta+z^{w_0}}}-1}}\label{eq:2}
  }
  \end{align}
  \ifbool{csub}{since $\abs{z} =1$.}{} 
  Notice that these divisions are  well defined since $A' + w_1A \neq 
  0$, and $\beta < 1$ and $\abs{z} = 1$ implies that $(1 + \beta
  z^{w_0}) \neq 0$ as well.  Note also that $c$ is of the form
    $1/(w_1 + c')$ where $\Re(c') = \Re\inp{A'/A} \geq 0$ by
    Lemma~\ref{lem:rec} and our earlier observations
    about $A$: it therefore follows that $\Re(c) \geq 0$.
  However, we then calculate that for $|z| =1$, the factor inside the
  braces in (\ref{eq:2}) has real part (and hence absolute value) at
  least $1$.  Using $\abs{B}/\abs{A} = 1$ from (\ref {eq:13}), we then
  see that the right hand side of (\ref {eq:2}) always has
  absolute value at least 1, which gives us the required contradiction
  to (\ref {eq:8}).  This shows that $G_1$ satisfies the weighted
  strict Gauss-Lucas property.
\end{proof}

\notbool{csub}{
\section{Hardness of computing the mean magnetization}}{
\section{Hardness of computing the\\mean magnetization}
}
\label{sec:hardn-comp-magn}
In this section, we use our extended Lee-Yang theorem
(Theorem~\ref{thm:lee-yang-ext}) to  prove
\notbool{csub}{Theorems\nobreakspace \ref 
{thm:ising-main} and \ref{thm:ising-main-suscpet} via reductions}{
 Theorem~\ref{thm:ising-main} via a reduction 
} from the problem of computing the partition function of the Ising
model, which is known to be \#P-hard even for bounded degree
graphs~\cite{dyegre00,bulatov05}.  More specifically, we will use the
following \#P-hardness result.
\begin{theorem}[{\cite[Theorem 1]{bulatov05}},{\cite[Theorem 5.1]{dyegre00}}]
\label{thm:ising-partition-hard}
Fix $\beta$ satisfying $0 < \beta < 1$.  The
problem of computing the partition function $Z_I(G,\beta,1)$ of
the Ising model on connected graphs of fixed maximum degree $\Delta
\geq 3$ is \#P-hard.
\end{theorem}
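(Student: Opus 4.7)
Since this statement is cited verbatim from the dichotomy theorems of Bulatov~\cite{bulatov05} and Dyer-Greenhill~\cite{dyegre00}, my plan is to sketch how it can be extracted from those general results, rather than reprove the full dichotomy machinery. The starting point is to recast $Z_I(G, \beta, 1)$ as a graph homomorphism partition function: letting $A$ denote the $2 \times 2$ symmetric matrix with diagonal entries $1$ and off-diagonal entries $\beta$, one has
\begin{displaymath}
Z_I(G, \beta, 1) \;=\; \sum_{\sigma: V \to \{+,-\}} \prod_{\{u,v\} \in E} A(\sigma(u), \sigma(v)),
\end{displaymath}
because each disagreeing edge contributes a factor of $\beta$ and each agreeing edge a factor of $1$. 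Hence, for any class of graphs, computing $Z_I(\cdot, \beta, 1)$ is polynomial-time equivalent to counting weighted homomorphisms into the target described by $A$.

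With this reformulation in hand, the Bulatov/Dyer-Greenhill dichotomy classifies the complexity of such homomorphism problems by the algebraic structure of $A$: polynomial-time computability requires $A$ to be, up to a global scaling, rank one on each of its irreducible components. For $0 < \beta < 1$, the matrix $A$ has distinct positive eigenvalues $1 + \beta$ and $1 - \beta$, is of full rank, and is irreducible, so it falls firmly on the \#P-hard side of the dichotomy. This establishes hardness on arbitrary (possibly disconnected, unbounded-degree) graphs. The restriction to connected graphs is then essentially free, because $Z_I$ factors multiplicatively across connected components, so the general problem reduces in polynomial time to the connected case.

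The more delicate ingredient is the bounded-degree restriction, which I expect to be the main obstacle. The approach of~\cite{dyegre00} is to replace each high-degree vertex $v$ of a general instance by a small connected gadget assembled from degree-$\Delta$ vertices whose spins are strongly correlated through carefully chosen internal interactions, so that the gadget acts as a single super-vertex with the incident edges of $v$ redistributed among its leaves. Running this construction over several related gadget parameters produces a family of evaluations $Z_I(G', \beta, 1)$ on degree-bounded graphs, from which the original $Z_I(G, \beta, 1)$ can be recovered by a polynomial interpolation argument in the spirit of Theorem~\ref{thm:rational-interp} (applied to a polynomial rather than rational setting). The technical heart of the argument is to verify that, at any fixed $0 < \beta < 1$ and any fixed $\Delta \geq 3$, the gadget family produces a sufficiently rich set of effective edge interactions for this interpolation step to be polynomial-time invertible; this is where the case analysis in~\cite{bulatov05} and~\cite{dyegre00} concentrates, and where one must genuinely exploit that $\beta$ is not one of the trivial values $0$ or $1$.
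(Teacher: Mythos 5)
The paper states Theorem~\ref{thm:ising-partition-hard} as a direct citation of Bulatov--Grohe and Dyer--Greenhill and does not supply a proof of its own, so there is no in-paper argument to compare yours against in detail. Your sketch correctly identifies the relevant ingredients — recasting $Z_I(G,\beta,1)$ as a graph-homomorphism partition function with the irreducible rank-$2$ matrix $A=\left(\begin{smallmatrix}1&\beta\\\beta&1\end{smallmatrix}\right)$ so that the Bulatov--Grohe dichotomy places it on the \#P-hard side, the multiplicativity of $Z_I$ over components which makes the restriction to connected graphs free, and the Dyer--Greenhill degree-reduction machinery for the bounded-degree refinement — which is exactly the provenance the paper is invoking.
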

For simplicity, we prove here a version of
Theorem~\ref{thm:ising-main} without the bounded degree constraint.
The extension to bounded degree graphs requires some more work and is proved in
\notbool{csub}{Appendix~\ref{sec:bdd-degree-hard}.
}{the full version.
}
\begin{proof}[Proof of Theorem\nobreakspace \ref
  {thm:ising-main}]
  We assume $\lambda > 1$, since the case $\lambda < 1$ is
  symmetrical.  For given $0 < \beta < 1$, suppose that we have an
  algorithm $\mathcal{A}$ which,
  given a connected graph $G$, outputs the mean magnetization
  $M(G,\beta,\lambda)$ in polynomial time.  Let $G$ be a graph of $n$
  vertices.  Notice that as a rational function in $z$, $M(G, \beta,
  z)$ is a ratio of the two polynomials, ${\cal D}Z_I(G,\beta,z)$ and
  $Z_I(G,\beta,z)$, which are both of degree $n$.  Further, since $G$
  is connected, these polynomials are co-prime by Theorem\nobreakspace
  \ref {thm:lee-yang-ext}.  Thus, if we could efficiently evaluate
  $M(G, \beta, z)$ at $2n + 2$ distinct points $z$ using algorithm
  ${\cal A}$, we could uniquely determine the coefficients of
  $Z_I(G,\beta,z)$ by Theorem\nobreakspace \ref {thm:rational-interp}
  (since we know that the constant term in $Z_I(G,\beta,z)$ is $1$).
  We could then determine $Z_I(G,\beta, 1)$ in polynomial
  time. Theorem\nobreakspace \ref {thm:ising-partition-hard} would
  then imply that computing the mean magnetization
  for the given values of the parameters $\beta$ and $\lambda$ is \#P-hard.
  
  In order to evaluate $M(G, \beta, z)$ at $2n + 2$ distinct values,
  we consider the graph $G(k)$ obtained by attaching $k$ new neighbors
  to each vertex of $v$.  We then have
  \begin{align}
    Z_I(G(k), \beta, \lambda) &= (1+\beta\lambda)^{nk}Z_I(G, \beta,
    \lambda_k)\text{, and}\label{eq:25}\\
    M(G(k), \beta, \lambda) &= \frac{kn\beta\lambda}{1 +\beta\lambda}
    + \insq{ 1 + \frac{k\lambda(1-\beta^2)}{(1+\beta\lambda)(\beta+\lambda)}}M(G,\beta,\lambda_k),\label{eq:24}
  \end{align}
  where $\lambda_k =
  \lambda\inp{\frac{\beta+\lambda}{1+\beta\lambda}}^k$.  Notice that
  when $\beta < 1$, all the $\lambda_k$ are distinct, and further,
  $M(G,\beta,\lambda_k)$ can be easily determined given
  $M(G(k),\beta,\lambda)$.  Therefore, we can evaluate $M(G(k), \beta,
  \lambda)$ for $0 \leq k \leq 2n + 1$ using the algorithm
  $\mathcal{A}$, and then using eqs.\nobreakspace \textup {(\ref
    {eq:25})} and\nobreakspace \textup {(\ref {eq:24})} we can
  determine $M(G,\beta, \lambda_k)$ in polynomial time.  Since these
  evaluations are at distinct points, the reduction is complete.
\end{proof}
\notbool{csub}{\begin{proof}[Proof of Theorem\nobreakspace \ref
  {thm:ising-main-suscpet}]
  For a given $\beta$ as specified in the theorem, suppose that there
  is a polynomial time algorithm ${\cal A}$ which, given a graph $G$
  of maximum degree $\Delta$, and a value of $\lambda$ in unary,
  outputs the susceptibility $\chi(G,\beta,\lambda)$.  Notice that as
  a rational function in $z$, $\chi(G, \beta, z)$ is a ratio of the
  two polynomials $Z_I(G,\beta,z)\cdot{\cal D}^2Z_I(G,\beta,z) -
  ({\cal D}Z_I(G,\beta,\lambda))^2$ and $Z_I(G,\beta,z)^2$, which are
  both of degree $2n$.  Further, since $G$ is connected, these
  polynomials are co-prime by Theorem\nobreakspace \ref
  {thm:lee-yang-ext}.  To see this, notice that any common complex
  zero of these two polynomials must be a common zero of
  $Z_I(G,\beta,\lambda)$ and ${\cal D}Z_I(G,\beta,\lambda)$, which is
  prohibited by Theorem~\ref{thm:lee-yang-ext}.

  To complete the reduction, we notice that we can choose $4n+2$
  distinct values of $\lambda$ in the interval $(0,1]$ all of which
  have a unary representation length of at most $5n$.  Thus, using
  ${\cal A}$, we can efficiently evaluate $\chi(G, \beta, z)$ at $4n +
  2$ distinct values of $z$.  By Theorem\nobreakspace \ref
  {thm:rational-interp} we can then use these evaluations to uniquely
  determine the coefficients of $Z_I(G,\beta,z)^2$ (since we already
  know that the constant coefficient is $1$), and hence,
  $Z_I(G,\beta, 1)$, in polynomial time.  Because of
  Theorem\nobreakspace \ref {thm:ising-partition-hard}, this implies
  that the problem of computing the susceptibility at the given value
  of $\beta$ is \#P-hard.
  \end{proof}
}{The proof of the hardness of susceptibility
  (Theorem~\ref{thm:ising-main-suscpet}) is similar in flavor and can
  be found in the full version.}%

\ifbool{csub}{
\section{Hardness of computing the\\ average dimer count}}{
\section{Hardness of computing the average dimer count}
}
\label{sec:hardn-comp-aver}
In this section we prove Theorem\nobreakspace \ref
{thm:monomer-dimer-main} by reducing the \#P-hard problem \#{\sc
  Monotone-2SAT} to the problem of computing the average dimer count.
The reduction is similar in structure to Valiant's original proof for
the \#P-hardness of the problem of counting perfect matchings.
However, since we will need to do rational interpolation, we need the
zeros of the partition function to be simple, so by
Theorem~\ref{thm:heilmann-lieb} we will need to ensure that the graph
appearing as the output of the reduction always has a Hamiltonian
path.  The formal properties satisfied by our reduction are stated in
the following theorem.
\begin{theorem}
  \label{thm:monomer-reduction-main}
  There exists a polynomial time algorithm $\mathcal{A}$ which,
  when given as input a {\sc Monotone 2-SAT} formula $\phi$,
  outputs a weighted graph $G$ with the
  following properties:
  \begin{enumerate*}
  \item The weights in $G$ are drawn from the set $\inb{1,2,3}$. \label{item:1}
  \item Suppose $\phi$ has $\nu$ variables and $\mu$ clauses. Then,
    given the total weight $W$ of perfect matchings in $G$, the number
    of satisfying assignments of $\phi$ can be determined in
    polynomial time from $W$, $\mu$, and $\nu$.  \label{item:2}
  \item $G$ contains a Hamiltonian path.  \label{item:3}
  \end{enumerate*}
\end{theorem}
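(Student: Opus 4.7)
The plan is to construct a gadget-based polynomial-time reduction from $\#$\mon{} to computing the total weight of perfect matchings, in the spirit of Valiant's classical proof of $\#$P-hardness of the permanent, but modified to meet two additional structural demands: the output graph $G$ must contain a Hamiltonian path (so that Theorem~\ref{thm:heilmann-lieb} guarantees simple zeros of $Z_M$ and the rational interpolation used later in Theorem~\ref{thm:monomer-dimer-main} can proceed), and all edge weights must lie in $\inb{1,2,3}$. The Hamiltonicity requirement is what forces us to depart from Valiant's original construction.

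The reduction proceeds in three stages. First, for each variable $x_i$ appearing in $d_i$ clauses of $\phi$, I build a \emph{variable gadget} $V_i$ with designated entry/exit vertices and $2d_i$ external half-edges partitioned into a ``true'' and a ``false'' group, so that in every perfect matching of $V_i$ (extended by its external half-edges) either all the ``true'' or all the ``false'' half-edges are saturated; this commits $x_i$ to a truth value. Second, for each clause $(x_i \vee x_k)$, I build a \emph{clause gadget} with external half-edges for its two variables, designed so that a compatible perfect matching exists if and only if at least one of $x_i, x_k$ is set to true. Third, I pair variable external half-edges with clause external half-edges using short ``connector'' paths of even length that contribute a fixed, known weight to any perfect matching. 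Each gadget is designed explicitly using only edge weights from $\inb{1,2,3}$; in particular a weight-$3$ edge seems to be needed in the clause gadget to equalize the contributions of the three satisfying truth assignments of a $2$-clause.

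To enforce Hamiltonicity, I arrange all gadgets in a linear sequence along a \emph{spine} and design each gadget so that its internal vertices admit a Hamiltonian sub-path from its entry vertex to its exit vertex (a purely structural property, independent of any perfect matching). Single spine edges joining consecutive gadgets then concatenate these sub-paths into a Hamiltonian path of the full graph $G$; the extra vertices introduced by the connector paths can be absorbed into this spine by making the connectors short even-length paths that route alongside it. A routine weight accounting then shows that the total weight $W$ of perfect matchings satisfies $W = \alpha S + \beta$ for explicit constants $\alpha, \beta$ depending only on $\mu, \nu$ and the gadget design, so that the number $S$ of satisfying assignments is recoverable from $W$, $\mu$, $\nu$ in polynomial time. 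I expect the main obstacle to be the Hamiltonicity requirement: the variable and clause gadgets used in standard $\#$P-hardness reductions are typically not Hamiltonian-path-traversable, so new gadgets must be designed that simultaneously implement the correct matching logic, keep all edge weights in $\inb{1,2,3}$, and admit a Hamiltonian entry-to-exit path. Once such gadgets are in hand, the weight accounting and the recovery of $S$ should be straightforward.
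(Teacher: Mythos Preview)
Your proposal has a genuine gap in the Hamiltonicity argument. The spine idea assumes each gadget has a Hamiltonian entry-to-exit sub-path and that these concatenate, but it does not account for the connector paths between variable gadgets and clause gadgets. A variable $x_i$ may appear in many clauses that sit far apart on the spine, so its connectors are long-range links whose internal vertices must also lie on the Hamiltonian path. The sentence about connectors that ``route alongside'' the spine is a hand-wave: once several such connectors overlap or cross (say $V_1$ connects to $C_5$ while $V_2$ connects to $C_3$), there is no evident way to absorb all their vertices into a single linear traversal without revisiting vertices. This is precisely the obstruction that makes property~\ref{item:3} nontrivial, and your plan does not address it.

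The paper resolves this by a device you did not anticipate. It first converts the \mon{} instance $\phi$ into a 3-SAT instance $\phi'$ by adjoining a fresh variable $\tau$ to \emph{every} clause. The variable gadget for $\tau$ is then connected (via an XOR-gadget) to every clause gadget, and it is $\tau$'s gadget that plays the role of the spine: the Hamiltonian path walks along it, detouring at each dotted edge into the attached clause gadget and from there into any not-yet-visited variable gadget. The whole construction is carried out in the directed cycle-cover setting and then passed to the bipartite graph $\bip{G}$, so the ``Hamiltonian path'' is an alternating path in the directed graph; making the detours compose requires specific orientations of the XOR-gadgets and a redesign of Valiant's XOR-gadget so that both of its $-1$-weight edges lie on the required alternating paths (these edges are later replaced by weight-$2$ chains, which is where the weight set $\{1,2,3\}$ and the modular recovery of $s$ come from). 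Your direct 2-SAT clause gadgets and passive even-length connectors miss this mechanism entirely; without the shared variable $\tau$ (or some equivalent hub touching every clause), there is no evident way to thread a single Hamiltonian path through all the non-local variable-to-clause links.
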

We observe here that Valiant's reduction from \#3-SAT~\cite{val79a}
can be easily
modified so that it satisfies properties \ref{item:1} and
\ref{item:2}.  However, it is property \ref{item:3} that is crucial
for our purposes, since it allows the use of Theorem\nobreakspace \ref
{thm:heilmann-lieb}.  We first show how Theorem
\ref{thm:monomer-reduction-main} can be used to immediately prove a
slightly weaker version of Theorem~\ref {thm:monomer-dimer-main},
which shows hardness only on general graphs. The proof showing
hardness for bounded degree graphs can be found in
\notbool{csub}{Appendix~\ref{sec:bdd-degree-hard}.
}{the full version.
}
  \begin{proof}[Proof of Theorem\nobreakspace \ref
    {thm:monomer-dimer-main}]
    Fix any $\lambda > 0$, and suppose that there exists a polynomial
    time algorithm $\mathcal{B}$ which, given a connected graph $H$,
    with edge weights in the set $\inb{1,2,3}$ outputs
    $D(H,(\gamma_e)_{e\in E},\lambda)$.  In the following, we suppress
    the dependence on edge weights $\inp{\gamma_e}_{e\in E}$ for
    clarity of notation.  Given a {\sc Monotone 2-SAT} formula $\phi$,
    we can then produce the graph $G = \mathcal{A}(\phi)$ in
    polynomial time.  Let $n$ be the number of vertices in $G$.  Since
    $G$ contains a Hamiltonian path, Theorem\nobreakspace \ref
    {thm:heilmann-lieb} implies that $Z_M(G,z)$ and ${\cal D}Z_M(G,z)$
    have no common zeros.  Thus, being able to use algorithm ${\cal
      B}$ to evaluate $D(G,z)$ (and hence $U(G, z)$) at $2n + 2$
    different values of $z$ would allow us to uniquely determine the
    coefficients of $Z_M(G,z)$ in polynomial time by rational
    interpolation (Theorem~\ref{thm:rational-interp}), since we
    already know that the coefficient of $z^n$ is $1$.  This would
    allow us to obtain $W$ (which is the constant term in $Z_M(G,z)$),
    and hence, by property~\ref{item:2}, also the number
    of satisfying assignments of $\phi$, in polynomial time.  This
    would show that the problem of computing ${\cal D}(G,\lambda)$ is
    \#P-hard~(since \#{\sc Monotone-2SAT} is \#P-hard~\cite{val79b}).

    However, $\mathcal{B}$ only allows us to evaluate $U(G,z)$ at $z
    = \lambda$.  In order to ``simulate'' other values of $\lambda$,
    we consider the graph $G(k)$ obtained by attaching $k$ new vertices
    to each vertex of $G$ with unit weight edges.  We then have
    \begin{align}
      Z_M(G(k), \lambda) &= \lambda^{nk}Z_M(G,\lambda_k);\label{eq:22}\\
      U(G(k), \lambda) &= nk + \frac{\lambda^2 - k}{\lambda^2 + k}U(G,\lambda_k),\label{eq:23}
    \end{align}
    where $\lambda_k = \lambda + k/\lambda$.  Thus, by choosing $2n+2$
    different values of $k$, none of which is equal to $\lambda^2$, we
    can determine $U(G,z)$ at $2n + 2$ different values of $z$ by
    running $\mathcal{B}$ on $G(k)$ and using eq.\nobreakspace
    \textup {(\ref {eq:23})}.  This completes the proof.
  \end{proof}

  In the rest of this section, we proceed to
  \notbool{csub}{prove}{sketch the steps in the proof of}
  Theorem\nobreakspace \ref
  {thm:monomer-reduction-main} \notbool{csub}{in a sequence 
  of steps}{(the full proof can be found in the full version)}.  For
simplicity, we will describe our reduction in terms 
  of cycle covers in a directed graph rather than perfect matchings in
  an undirected graph (this also allows us to directly compare our
  gadget construction with that of Valiant~\cite{val79a} at various
  steps).  Given a weighted directed graph $G = (V, E)$, we define the
  undirected bipartite graph $\bip{G} = (V \times \inb{0,1}, E')$
  where the edge $\inb{(x, 0), (y, 1)}$ is in $E'$ with weight
  $\gamma_e$ if and only if $(x,y)$ is an edge in $E$ with the same
  weight.  Note that a subset $S \subseteq E$ forms a cycle cover of
  weight $w$ in $G$ if and only if the corresponding subset of edges
  $S' = \inb{\inb{(x,0), (y,1)}| (x,y) \in S}$ forms a perfect
  matching of weight $w$ in \bip{G}.  In particular, the total weight
  of all perfect matchings in \bip{G}~is the same as the total weight
  of all cycle covers of $G$.
  
\notbool{csub}{
  Later, while arguing about the existence of Hamiltonian paths in
  graphs of the form \bip{G}, we will find it convenient to use the
  following short-hand notation for simple paths in the graph
  \bip{G}~in terms of the edges and vertices of $G$.  Consider any
  simple path $(x_1, 1)$, $(x_2, 0)$, $(x_3, 1)$, $(x_4, 0), \ldots
  (x_l, 1)$, where we have assumed for simplicity that $l$ is odd.
  The edges corresponding to this path in $G$ are $x_1 \leftarrow
  x_2$, $x_2\rightarrow x_3$, $x_3\leftarrow x_4, \ldots x_{l-1}
  \rightarrow x_l$.  Notice that alternate edges are traversed in
  reverse in this representation.  The path can therefore be
  represented as $x_1 \leftarrow x_2 \rightarrow x_3 \leftarrow x_4
  \rightarrow \ldots \rightarrow x_l$.  Similarly for a path starting
  on the other side, say $(x_1, 0), (x_2, 1), (x_3, 0), (x_4, 1)$, we
  have the representation $x_1\rightarrow x_2 \leftarrow x_3
  \rightarrow x_4$.  Notice that a path $p_2$ starting at a vertex $v$
  in this notation can be appended to a path $p_1$ ending at $v$ if
  and only if the arrows at $v$ in $p_1$ and
  $p_2$ respectively are in opposite directions.   We will refer to
  this notation as the \emph{alternating path} representation.
  Further, given an alternating path representation of a path, we will
  refer to edges going right (such as $x_1\rightarrow x_2$ in the
  last example) as \emph{forward} edges, and edges going left (such
  as $x_2\leftarrow x_3$ in the above example) as \emph{backward}
  edges.
}{
 
  We also note that paths in \bip{G}~correspond to \emph{alternating
    paths} in $G$: paths in which alternate edges are traversed in
  reverse.  For example, the path $(x_1, 1)$, $(x_2, 0), \ldots, (x_l,
  1)$ in $G$, where we assume for simplicity that $l$ is odd,
  corresponds to the alternating path $x_1 \leftarrow x_2$,
  $x_2\rightarrow x_3$, $x_3\leftarrow x_4, \ldots x_{l-1} \rightarrow
  x_l$ in $G$.  This correspondence will allow us to argue about the
  existence of a Hamiltonian path in $\bip{G}$ in terms of a (suitably
  defined) \emph{alternating} Hamiltonian path in $G$.  See the full
  version for the formal definition.  

}

  \subsection{Overview of the reduction}
  \label{sec:overv-our-reduct}
  We now look at the basic structure of our reduction, which is an
  elaboration of Valiant's reduction~\cite{val79a} as modified by
  Papadimitriou~\cite{pap94} and presented in
  \cite{arora09:_comput_compl}.  Recall 
  that given a {\sc Monotone 2-SAT} formula $\phi$, the reduction
  needs to produce in polynomial time a directed graph $G$ such that
  the number of satisfying assignments of $\phi$ can be easily
  determined from the total weight of cycle covers of $G$, and such
  that $\bip{G}$ has a Hamiltonian path.   Our first step is to
  introduce a shared variable in all the clauses of $\phi$: this
  shared variable will be useful later in showing the existence of a
  Hamiltonian path through the gadget.  
  \begin{observation}
    Let $\phi = \bigwedge_{i=1}^\mu c_i$ be a \mon~formula with $\mu$
    clauses, $\nu$ variables, and $s$ satisfying
    assignments.  Let $\tau$ be a variable not appearing in
    $\phi$ and consider the 3-SAT formula
    \begin{displaymath}
      \phi' = \mathop{\bigwedge}_{i=1}^\mu(\tau \lor c_i).
    \end{displaymath}
    The number of satisfying assignments of $\phi'$ is $s'\defeq 2^\nu + s$. 
  \end{observation}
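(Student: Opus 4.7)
The plan is a straightforward case analysis on the value of the new variable $\tau$ in any satisfying assignment of $\phi'$. Since $\tau$ does not appear in $\phi$, every assignment of $\phi'$ decomposes uniquely into a value for $\tau$ together with an assignment to the $\nu$ original variables.

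First I would handle the case $\tau = \text{true}$. In this case each clause $(\tau \lor c_i)$ is automatically satisfied, independently of the $c_i$, so every one of the $2^\nu$ assignments to the original variables extends to a satisfying assignment of $\phi'$. This contributes exactly $2^\nu$ satisfying assignments.

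Next I would handle the case $\tau = \text{false}$. Here each clause $(\tau \lor c_i)$ reduces to $c_i$, so $\phi'$ reduces to $\phi = \bigwedge_{i=1}^\mu c_i$. By definition $\phi$ has $s$ satisfying assignments, contributing $s$ to the count.

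Adding the two disjoint contributions gives the total count $2^\nu + s$, as claimed. There is no real obstacle here; the only thing to observe is that the two cases partition the satisfying assignments of $\phi'$ (since $\tau$ takes exactly one of the two values), so no double-counting occurs and the sum is correct.
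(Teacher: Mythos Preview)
Your proof is correct and is exactly the natural argument; the paper in fact leaves this as an unproven Observation, so your case analysis on $\tau$ is precisely the intended justification.
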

  Notice that each clause in $\phi'$ has exactly three variables, and
  that the number of satisfying assignments of $\phi$ can be easily
  determined given the number of satisfying assignments of $\phi'$.
    
  We start the construction of $G$ by creating a separate
  \emph{variable gadget} (see 
  Figure~\ref{fig:var-gadget-1}) for each of the variables $\tau$, $x_1$,
  $x_2$, $\ldots$, $x_\nu$ occurring in $\phi'$.  This gadget has an external
  \emph{dotted} edge for each appearance of the variable in the
  formula, and is designed so that any cycle cover must
  either use \emph{all} the dotted edges in a particular gadget, or
  none of them.   
  \begin{figure}[h]
    \centering
    \ifbool{csub}{\includegraphics[scale=0.5]{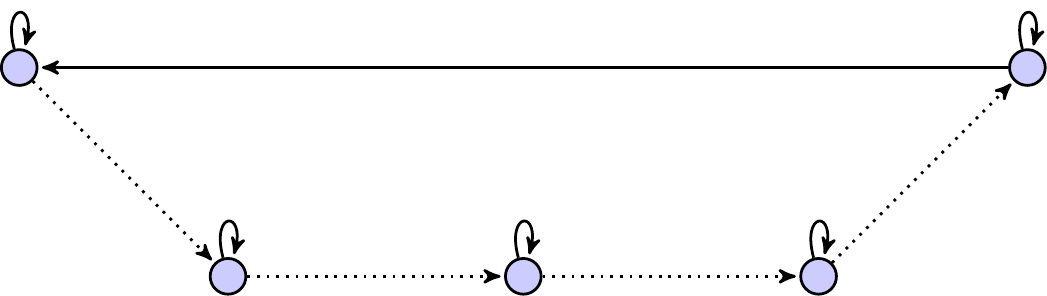}}{
    \includegraphics[scale=0.7]{tikz-var-gadget-1}}

    \caption{Variable gadget}
    \label{fig:var-gadget-1}
  \end{figure}

  As done in Valiant's reduction, we then introduce a \emph{clause gadget} (see
  Figure~\ref{fig:3-clause-gadget-m}) for each clause in $\phi'$.
  Each clause gadget has one external dotted edge for each literal in
  the clause, and is designed so that no cycle cover can include all the
  dotted edges; and so that for any other subset of the dotted edges,
  there is exactly one cycle cover including all the edges in the subset
  and no others.  For each clause gadget, we label each of
  the three dotted edges in the gadget with one of the three literals
  appearing in the clause.  However, in this step, we ensure
  that in each gadget the $b\rightarrow c$ dotted edge is the one
  labeled with the literal $\tau$, since this is needed to
  show that the final construction has a Hamiltonian path.  We now
  ``pair'' each dotted edge appearing in a clause gadget with a dotted
  edge corresponding to the same literal in a variable gadget, so that
  each dotted edge appears in exactly one pair.  
  
  We first consider cycle covers which obey the constraint that they
  must choose \emph{exactly} one edge from each such pair.  We claim
  that the number of cycle covers satisfying this ``pairing''
  constraint equals the number of satisfying assignments of $\phi'$.
  To see this, we associate a truth assignment with every cycle cover
  by setting the variable $v$ to \emph{true} if the cycle cover uses
  all the dotted edges in the variable gadget for $v$, and to
  \emph{false} if it uses none of the dotted edges.  Notice that
  because of the pairing constraint, a cycle cover is uniquely
  determined by specifying its assignment.  Further, given the above
  properties of the clause gadget, exactly those cycle covers are
  permitted whose associated assignments are satisfying assignments of
  $\phi'$.
  \begin{figure}[h]\centering
    \ifbool{csub}{
      \includegraphics[scale=0.4]{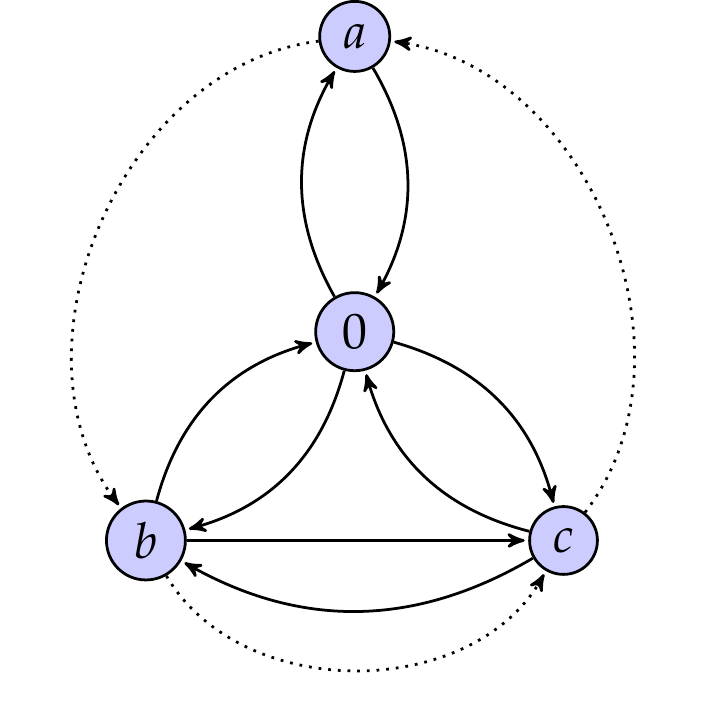}
    }{
      \includegraphics[scale=0.6]{tikz-3-clause-gadget}
    }
    \caption{3-SAT clause gadget}
    \label{fig:3-clause-gadget-m}
  \end{figure}
  
  We now enforce the ``pairing'' constraint referred to above using a
  gadget similar to Valiant's \emph{XOR-gadget}.  The XOR-gadget has
  two \emph{ports} (labeled $a$ and $d$), each of which admits one
  incoming and one outgoing edge (see
  Figure~\ref{fig:xor-gadget-main}).  To ensure the
  ``pairing'' constraint for a pair of dotted edges $e_1\rightarrow
  f_2$ and $e_2\rightarrow f_2$, we replace them by the
  incoming-outgoing pair of a single XOR-gadget (see
  Figure~\ref{fig:use-xor-gadget}).  The gadget has the property that
  after the replacement, the weight of every cycle cover which would
  have included exactly one of the two dotted edges $e_1\rightarrow
  f_1$ and $e_2\rightarrow f_2$ in the original graph gets multiplied
  by a factor of $2$ (for each replacement made), while the weight of
  any cycle covers not satisfying the pairing constraint becomes $0$
  (see \notbool{csub}{Appendix~\ref{sec:xor-gadget}}{the
    full version} for a proof). The total weight of all cycle covers
  in the final graph so obtained is therefore $2^{l}s'$, where $s'$ is
  the number of satisfying assignments and $l$ is the total
  number of literals in $\phi'$ (since one XOR-gadget is needed to
  replace the pair of dotted edges for each literal).  Further,
  replacing a pair of edges by a XOR gadget does not change the
  in-degree or out-degree of any vertex already present.

  Note that the XOR-gadget has edges of weight $-1$, which
  are not permitted in the monomer-dimer model.  This can be remedied
  by replacing the $-1$ weight edges by a large chain of edges (of
  length, say, $m^2$ where $m$ is the number of edges in the original
  graph) of weight 2, with individual vertices in the chain having
  self loops (of weight 1).  The total weight of cycle covers in the
  new graph modulo $2^{m^2} + 1$ then gives the total weight of cycle
  covers in the original graph.

  This last step of replacing the $-1$ edge by a long chain presents a
  challenge since we will need to include all the vertices in the
  chain in our Hamiltonian path (equivalently, all $-1$ weight edges
  must appear in the Hamiltonian path).  For this reason, we cannot
  use Valiant's XOR-gadget directly.  Our XOR-gadget, on the other
  hand, is such that the $-1$ weight edges can always be included in
  our Hamiltonian path.  However, we have to be careful in the
  orientation of the XOR-gadgets \notbool{csub}{in order to be able to
    construct a Hamiltonian path later}{for this to hold}: when
  replacing a pair of dotted edges one of which belongs to $\tau$'s
  variable gadget, we orient the XOR-gadget so that the incoming edge
  at vertex $a$ in the XOR-gadget comes from the variable gadget.  At
  all other pairs, 
  we orient the XOR-gadgets so that the incoming edge at the vertex
  $a$ comes from a clause gadget.
  \begin{figure}[ht]
    \centering
      \begin{subfigure}[b]{0.4\textwidth}
        \centering
        \ifbool{csub}{
          \includegraphics[scale=0.5]{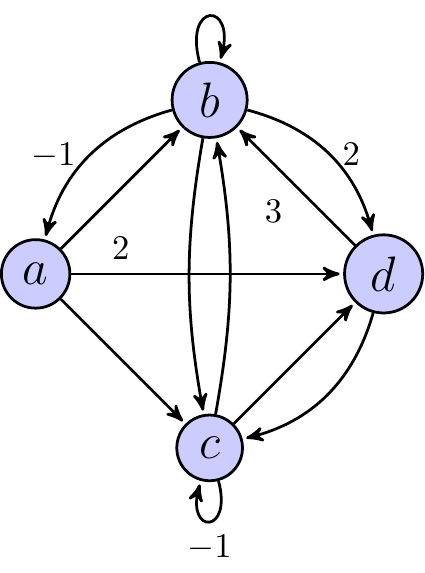}
        }{
          \includegraphics[scale=0.8]{tikz-xor-gadget}
        }
        \caption{XOR-gadget}\label{fig:xor-gadget}
      \end{subfigure}
      \begin{subfigure}[b]{0.4\textwidth}
        \centering
        \ifbool{csub}{
          \includegraphics[scale=0.37]{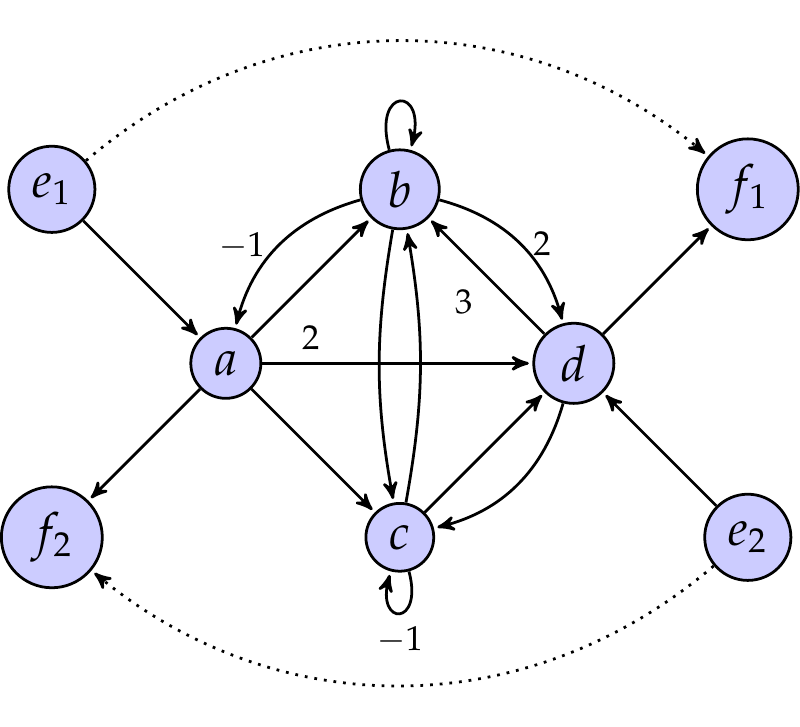}
        }{
          \includegraphics[scale=0.6]{tikz-xor-gadget-use}
        }
        \caption{Use of XOR-gadget}
        \label{fig:use-xor-gadget}
      \end{subfigure}
    \caption{Replacing dotted edges $e_1\rightarrow
          f_1$ and $e_2\rightarrow f_2$}
        \label{fig:xor-gadget-main}
  \end{figure}
  \ifbool{csub}{ 
    
    Given the above construction of $G$, our strategy for finding an
    alternating Hamiltonian path in $G$ is to start with the variable
    gadget for the special variable $\tau$, and then proceed to cover
    each of the clause gadgets in the order in which they are
    connected to this gadget.  In the course of covering a clause
    gadget, we also cover any variable gadgets attached to it that are
    not yet covered.  Since the dotted edges in all these
    gadgets have been removed, we need to take detours into the
    XOR-gadgets replacing them in order to complete the path.  The
    details of achieving this require considering appropriate partial
    Hamiltonian paths through the clause, variable and XOR-gadgets,
    and then using the gadget for special variable $\tau$ to stitch
    together these partial Hamiltonian paths to form an alternating
    Hamiltonian path in $G$.  The special orientations for the XOR and
    clause gadgets that we stipulated above are chosen to make this
    possible: see the full version for details.  }{
\subsection{Analyzing the reduction}
\label{sec:completing-reduction}
We now proceed to analyze the output of the reduction to complete the
proof of Theorem~\ref{thm:monomer-reduction-main}.  The use of
XOR-gadgets to enforce the ``pairing'' constraint as described above
introduces a factor of $2$ for each literal appearing in the clause,
and therefore the total weight of cycle covers after this step is
$2^{3\mu}s' \leq 2^{6\mu}$.  To get rid of the $-1$ weight edges in
the XOR-gadgets, 
we replace each such edge by a chain of $\kappa = 6\mu - 1$ vertices
with self-loops of weight $1$ and connecting edges of weight $2$.  We
call this final graph $G$.  Since the initial total weight of cycle
covers was at most $2^{6\mu}$, the weight of cycle covers in $G$ (and hence
the total-weight of all perfect matchings in \bip{G}), modulo
$2^{\kappa + 1} + 1$, is exactly $2^{3\mu}s'$.  Since all steps in the
construction of $\bip{G}$ starting from $\phi$ can be done in time
polynomial in the representation size of $\phi$, this proves parts 1
and 2 of Theorem\nobreakspace \ref {thm:monomer-reduction-main}.

We now proceed to prove part 3, that is, that $\bip{G}$ has a
Hamiltonian path.  We will use the alternating path notation described
above in order to keep our discussion in terms of the vertices and
edges of $G$, and we will call this representation of a Hamiltonian
path in $\bip{G}$ an \emph{alternating Hamiltonian path}.  In an
alternating Hamiltonian path, each vertex of $G$ is visited exactly
twice, and the length of the alternating path between the two visits
is odd.  This is equivalent to saying that all vertices must appear exactly
\emph{twice} in an alternating Hamiltonian path, with all vertices
except the first vertex in the path having one incoming \emph{forward}
edge, and one incoming \emph{backward} edge.
 
Our gadgets so far are designed to have alternating Hamiltonian paths
which can be pieced together to form an alternating Hamiltonian path
for $G$, and hence, we only need to list these paths and show
how to stitch them together.  We begin with alternating Hamiltonian
paths in the clause gadget.

  \begin{observation}
    \label{obv:clause-path}
    The clause gadget in Figure\nobreakspace \ref {fig:3-clause-gadget-m} has the
    alternating Hamiltonian path 
    \begin{displaymath}
      c \leftarrow 0 \rightarrow a \leftarrow c \rightarrow b \leftarrow a
      \rightarrow 0 \leftarrow b
    \end{displaymath}
    which uses all the dotted edges except the $b\rightarrow c$ dotted
    edge.  
  \end{observation}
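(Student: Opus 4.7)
The plan is to verify this observation by a direct case analysis on the structure of the clause gadget shown in Figure~\ref{fig:3-clause-gadget-m}. First I would enumerate the edges of the gadget: the internal (non-dotted) edges between $a$, $b$, $c$, and the central vertex $0$, together with the three external dotted edges, one per literal, labeled so that the $b\rightarrow c$ dotted edge carries the special shared literal $\tau$. Then I would walk along the proposed sequence
\begin{equation*}
c \leftarrow 0 \rightarrow a \leftarrow c \rightarrow b \leftarrow a \rightarrow 0 \leftarrow b
\end{equation*}
and check, edge by edge, that each directed edge actually appears in the gadget; i.e., that $0\rightarrow c$, $0\rightarrow a$, $c\rightarrow a$, $c\rightarrow b$, $a\rightarrow b$, $a\rightarrow 0$, and $b\rightarrow 0$ are all present as directed edges, while the only dotted edge that would have been traversed but is omitted is $b\rightarrow c$.

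Next I would verify that this sequence constitutes a valid alternating path in the sense defined for $\bip{G}$. By the convention introduced just before the observation, this amounts to checking two local conditions at each interior vertex in the sequence: (i) the arrows strictly alternate in direction (forward, backward, forward, backward, \ldots), so that at each intermediate vertex the incoming edge in the \bip{G}-path and the outgoing edge lie on opposite sides of the bipartition; and (ii) each vertex of the gadget appears exactly twice in the sequence, which is precisely the condition for the sequence to correspond to a Hamiltonian path on the $2\cdot|\{a,b,c,0\}|$ vertices of the bipartite double cover restricted to the clause gadget. Inspection of the string shows that $a$, $b$, $c$, and $0$ each occur exactly twice, and the arrow pattern $\leftarrow,\rightarrow,\leftarrow,\rightarrow,\leftarrow,\rightarrow,\leftarrow$ alternates correctly throughout.

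The main (and only) obstacle is purely bookkeeping: there are no clever choices to make beyond reading the gadget figure and confirming the traversal, but one has to be careful to respect the \emph{directions} of edges (since the underlying graph is directed) and to identify precisely which of the dotted edges in Figure~\ref{fig:3-clause-gadget-m} correspond to traversals in the listed sequence. Once these checks are carried out, the observation follows, together with the useful side information that the path begins with a backward edge at $c$ and ends with a backward edge at $b$, which is the arrow-compatibility needed later when this partial Hamiltonian path is stitched to the paths supplied by the XOR-gadgets attached to the remaining two dotted edges of the clause.
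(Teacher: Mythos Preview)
Your proposal is correct and matches the paper's treatment: the paper states this as an observation without proof, implicitly relying on exactly the kind of direct edge-by-edge inspection of Figure~\ref{fig:3-clause-gadget-m} that you describe. Your additional remarks about the arrow pattern, the double-occurrence of each vertex, and the compatibility of the endpoints with later stitching are all accurate and in fact anticipate how the observation is used downstream.
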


  Recall that in  the construction of the reduction, we ensured that
  the new variable $\tau$ was associated with the $b\rightarrow c$
  dotted edge in each clause gadget.  This will be used to connect
  the above alternating Hamiltonian path in different clause gadgets
  via connections to the variable gadget for $\tau$ at the
  $b\rightarrow c$ edge.  Also, in the final construction, the
  dotted edges in the alternating Hamiltonian path will be
  replaced by detours into the associated XOR-gadget.  

  We now consider the XOR-gadget in Figure\nobreakspace \ref
  {fig:xor-gadget}.  It turns out that in some cases, we will need
  to traverse the XOR-gadget partially, so that a path enters at $a$
  via a backward edge, uses the $a\rightarrow d$ edge, and then leaves
  via a backward edge at $d$.  In order to cover the rest of the
  vertices, we will then need to construct an alternating path that
  enters at $a$ and leaves at $d$ via \emph{forward} edges, and covers
  all the vertices except $a$ and $d$ twice.  Another complication
  with the XOR-gadget is the presence of two $-1$ weight edges which
  need to be replaced with chains of vertices with self loops.
  However, this will not be a problem if we can ensure that both of
  the alternating paths described above use both the $-1$ weight
  edges, since an edge in an alternating path can always be replaced
  by a chain of vertices with self-loops.  We now show that, as we
  claimed above, our modified XOR-gadget satisfies all of these conditions.

  \begin{observation}
    \label{obv:xor-paths-1}
    The XOR-gadget in Figure\nobreakspace \ref {fig:xor-gadget} has the alternating
    Hamiltonian path
    \begin{displaymath}
      a \leftarrow b \rightarrow b \leftarrow a \rightarrow d \leftarrow c
    \rightarrow c \leftarrow d. 
    \end{displaymath}
    The gadget also has the following alternating path which enters at
    $a$ and leaves from $d$ using forward edges, but which does not
    otherwise visit these vertices:
  \begin{displaymath}
    a \leftarrow b \rightarrow b \leftarrow c \rightarrow c \leftarrow d.
  \end{displaymath}
  Moreover, both these paths use the $-1$ weight edges $b\rightarrow
  a$ and $c\rightarrow c$.
  \end{observation}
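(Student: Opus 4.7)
The plan is a direct finite verification against the XOR-gadget depicted in Figure~\ref{fig:xor-gadget}. Since the gadget is a fixed directed graph on the four vertices $\{a,b,c,d\}$ (equipped with self-loops at $b$ and $c$ and the two $-1$-weight edges $b\to a$ and $c\to c$), both claims reduce to routine combinatorial checks: an alternating sequence on these vertices is a valid alternating path in $\bip{G}$ exactly when each consecutive arrow corresponds to an edge of the gadget read in the direction indicated, and is Hamiltonian precisely when each vertex appears in the sequence exactly twice (since each vertex of $G$ contributes two vertices to $\bip{G}$).

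First, for the claimed alternating Hamiltonian path $a \leftarrow b \rightarrow b \leftarrow a \rightarrow d \leftarrow c \rightarrow c \leftarrow d$, I would read off its seven arrows and check one at a time that $b\to a$, the self-loop $b\to b$, $a\to b$, $a\to d$, $c\to d$, the self-loop $c\to c$, and $d\to c$ are all edges of the gadget. Counting vertex occurrences shows that each of $a$, $b$, $c$, $d$ is visited exactly twice, so the sequence gives an alternating Hamiltonian path in $\bip{\text{XOR-gadget}}$. The alternation of $\rightarrow$ and $\leftarrow$ at each interior position is immediate from the notation.

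For the second sequence $a \leftarrow b \rightarrow b \leftarrow c \rightarrow c \leftarrow d$, I would similarly verify that its five arrows are present as edges ($b\to a$, the self-loop $b\to b$, $c\to b$, the self-loop $c\to c$, $d\to c$), and that $b$ and $c$ each occur twice while $a$ and $d$ occur only as the endpoints. The first internal arrow at $a$ is $a \leftarrow b$ and the last internal arrow at $d$ is $c \leftarrow d$, both backward; hence any external edge attached to $a$ or to $d$ must be a forward edge in order to preserve the alternating property. This is exactly the forward-entry/forward-exit condition in the statement.

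Finally, both paths explicitly contain the arrow $a \leftarrow b$ (i.e., the edge $b \to a$) as their first step and the self-loop $c \to c$ (as the segment $c \rightarrow c$) near the end, which are precisely the two $-1$-weight edges; this discharges the last clause of the observation. There is no serious obstacle anywhere in the argument — the only point requiring care is the bookkeeping of the alternating-path convention, so that ``forward'' and ``backward'' correctly track the direction of edges in $G$ as opposed to the reading order of the sequence.
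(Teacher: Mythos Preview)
Your proposal is correct and is exactly the kind of direct inspection the paper intends: the statement is labelled an \emph{Observation} and is given without proof, to be verified against Figure~\ref{fig:xor-gadget}. Your edge-by-edge reading of the two alternating sequences, your count of vertex occurrences, and your check that the endpoint arrows at $a$ and $d$ in the second path are backward (forcing external forward edges) are all accurate; the only small refinement worth noting is that ``each vertex appears exactly twice'' should be accompanied by the parity check that the two occurrences are at positions of opposite parity (equivalently, that they correspond to the $0$- and $1$-copies in $\bip{G}$), which is immediate here.
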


  \begin{remark}
    Since the XOR-gadget is connected to variable gadgets
    (except those for variable $\tau$) in $G$ via an outgoing edge at
    $a$ and an incoming edge at $d$, it will be possible to replace
    the $a\rightarrow d$ edge in the alternating Hamiltonian path
    above by a detour into the connected variable gadget when
    constructing an alternating Hamiltonian path in $G$.  Similarly,
    it will be possible to use the $a\rightarrow d$ edge as a
    replacement for the dotted edge in the variable gadget that was
    replaced by the XOR-gadget, at the cost of visiting the vertices
    $a$ and $d$ once.  The role of the second alternating path is
    to visit the remaining vertices in a XOR-gadget which has already
    been partially traversed in this way.
  \end{remark}
  \begin{remark}
    As pointed out above, it does not seem possible to include the two
    $-1$ weight edges in both the above alternating paths in Valiant's
    original construction.  This necessitated the construction of our
    new XOR-gadget in which the $-1$ weight edges are part of both the
    paths.  The edges in our construction are the same as those in
    Valiant's construction, but the weights have been chosen
    differently.
  \end{remark}
  
  We now consider the variable gadget shown in Figure\nobreakspace
  \ref {fig:var-gadget-1}.  We first work as if the dotted edges are
  present.  In this case, for any vertex $v$ in the gadget except the
  leftmost vertex, we can construct an alternating Hamiltonian path
  which covers all the vertices in the gadget, uses all the dotted
  edges except the one between $v$ and its predecessor and can be
  appended to an alternating path that enters via a forward edge at
  $v$ and leaves via a forward edge at $v$'s predecessor.  When the
  dotted edges are replaced by a XOR-gadget, this alternating path can
  still be traversed as described in the remarks following
  Observation\nobreakspace \ref{obv:xor-paths-1}, by instead following
  a forward edge to the $d$ vertex of the XOR-gadget, following the
  $a\rightarrow d$ edge in reverse, and then entering the variable
  gadget at the successor of $v$ via the outgoing edge at the $a$
  vertex of the XOR-gadget.  It is for this reason that we enforced
  above the condition that when a XOR-gadget is connected to a
  variable gadget for a variable other that $\tau$, it is oriented so
  that the incoming external edge at its $d$ vertex comes from the variable
  gadget.

  We now start constructing the alternating Hamiltonian path in $G$
  starting at the left-most vertex in the variable gadget for $\tau$.
  If all the dotted edges were present, this gadget is just a chain of
  vertices, and hence there is an alternating Hamiltonian path that
  covers all its vertices.  However, each dotted edge has been
  replaced by an outgoing edge to the $a$ vertex and an incoming edge
  from the $d$ vertex of a XOR-gadget.  Thus, instead of following the
  dotted edges, our alternating path will take a detour into the
  corresponding XOR-gadget, and after traversing several other
  vertices, return via its $d$ vertex to visit the other vertices in
  the variable gadget for $\tau$.  Thus, we need to show that
  these detours into the XOR-gadgets can be used to make the
  alternating path go through all the other vertices in $G$ twice
  while respecting the required parity constraints.

  We consider one such detour.  We suppose that the XOR-gadget in
  question connects to a clause gadget $C$ for the clause $\tau \lor
  v_1 \lor v_2$.  While following the alternating path for the
  XOR-gadget, we bypass the $a\rightarrow d$ edge of the XOR-gadget and
  instead take a detour into the $c$ vertex of $C$.  We then start
  following the alternating path in Observation\nobreakspace \ref
  {obv:clause-path}.  If the dotted edges were present, we would be
  able to complete an alternating path covering all vertices in $C$
  and then return via a forward edge from the $b$ vertex of $C$ into
  the $d$ vertex of the XOR-gadget.  We could then complete the
  alternating Hamiltonian path in the XOR-gadget, and return via a
  forward edge into the variable gadget for $\tau$.  However, since
  the dotted edges have been replaced by XOR-gadgets, we would need to
  take detours into the XOR-gadgets replacing them.  Suppose we are
  trying to replace the dotted edge $c\rightarrow a$, 
  corresponding to the literal $v_1$.  At this point there can be two
  cases:

  \paragraph{Case 1.} Suppose that the vertices of the variable gadget for
    $v_1$ have still not been covered by our growing alternating
    Hamiltonian path.  Consider the XOR-gadget $X$ replacing the
    $c\rightarrow a$ dotted edge of $C$.  We consider the alternating
    Hamiltonian path in Observation\nobreakspace \ref
    {obv:xor-paths-1} starting at the $a$ vertex of $X$.  We follow
    this path until we need to use the $a\rightarrow d$ edge.  At this
    point, we take a detour into the variable gadget for $v_1$ via a
    forward edge at vertex $a$ of $X$.  The vertex $u$ we connect to
    in the variable gadget cannot be a leftmost vertex, since its
    predecessor $u'$ is connected to vertex $d$ of $X$ via a
    $u'\rightarrow d$ edge.  As discussed above, we will therefore get
    an alternating Hamiltonian path for the vertex gadget which will
    leave the gadget through the $u'\rightarrow d$ edge (though this
    will end up using the $a\rightarrow d$ edges in all other
    XOR-gadgets corresponding to occurrences of $v_1$).  We can then
    complete the alternating Hamiltonian path for $X$, and this gives
    us an alternating path starting with a backward edge at vertex $a$
    of $X$, ending with a backward edge at vertex $d$ of $X$, and
    covering all vertices in $X$ and the variable gadget of $v_1$,
    while also using up the $a\rightarrow d$ edge in XOR-gadgets
    corresponding to all other occurrences of $v_1$.  We then use the
    forward edge from vertex $c$ of $C$ to vertex $a$ of $X$ and the
    forward edge from vertex $d$ of $X$ to vertex $a$ of $C$ to
    replace the dotted $c\rightarrow a$ edge by the above alternating
    path.
     \vspace{-1em}
  \paragraph{Case 2.} Suppose that the variable gadget for the vertex $v_1$
    has already been covered by our growing Hamiltonian path.  Then,
    as seen in Case 1, in the XOR-gadget $X$ corresponding to the
    $c\rightarrow a$ edge, the vertex $a$ has already been visited
    using a backward edge, while $d$ has already been visited via a
    forward edge.  Consider the second alternating path in
    Observation\nobreakspace \ref {obv:xor-paths-1}.  Traversing this
    alternating path from $a$ to $d$ will satisfy the remaining
    covering requirements for all the vertices in $X$.  Thus, as in
    Case 1 above, we can replace the $c\rightarrow a$ dotted edge in
    the alternating Hamiltonian path for $C$ by an edge from vertex
    $c$ of $C$ to vertex $a$ of $X$ and an edge from vertex $d$ of $X$
    to vertex $a$ of $X$.  As before, this ensures that the vertices
    of this XOR-gadget are covered while traversing the alternating
    Hamiltonian path for $C$.

    \vspace{1em}
    Observe that since each clause gadget is connected to the variable
    gadget for $\tau$, and since all other variable and XOR-gadgets are
    connected to at least one of the clause gadgets, the above
    alternating path eventually covers all of the individual gadgets.   
    This completes the proof for the existence of the alternating
    Hamiltonian path in $G$, and hence the proof of
    Theorem\nobreakspace \ref {thm:monomer-reduction-main}. \qed
}%

\section{Future Work}
\label{sec:open-problems}
This work leaves open the complexity of computing several other average
quantities; the most pertinent of which is $\avge{d}$,
the average size of cuts under the Ising measure.  The
obvious approach of attempting rational interpolation over $\beta$ via
an analog of our Theorem~\ref{thm:lee-yang-ext} does not directly work, since
Lee-Yang theorems do not hold in the same generality for the $\beta$
parameter.  A related problem is the complexity of computing the
susceptibility $\chi$ at \emph{fixed} values of $\lambda$ (in
particular, $\lambda = 1$), where, again, analyzing the partition
function and its derivatives as polynomials in $\beta$ may prove helpful.

Extensions of our results to the \emph{antiferromagnetic} Ising model
and the \emph{hard-core} (weighted independent sets)
model also remain open: again, our current approach would need to be
modified, since Lee-Yang theorems do not in general hold for
antiferromagnetic systems.  In particular, the best known analog of the
Lee-Yang theorem for the hard-core model, due to Chudnovsky and
Seymour~\cite{chudnovsky_roots_2007}, works only for claw-free
graphs.  Similarly, the complexity of computing averages in spin
systems with more than two spin values (such as the Potts model or proper
colorings) remains open.

Finally, we mention potential connections with the large literature on
stability preserving operators.  As indicated in
Section~\ref{sec:related}, this field is usually concerned with
operators that \emph{preserve} the region in which the zeros of a
polynomial lie.  Our extended Lee-Yang Theorem
(Theorem~\ref{thm:lee-yang-ext}) is an apparently rare example in which
the operator actually makes this region strictly smaller.   We
conjecture that there may be more applications of this phenomenon. 

\vspace{\baselineskip}
\noindent {\bf Acknowledgements.}
We thank Aris Anagnostopoulos for several fruitful discussions
\ifbool{csub}{and}{which led to the formulation of the problems
  considered in this paper.  We also thank} Marek Biskup, Christian
Borgs, Petter Br\"{a}nd\'{e}n, Charles Newman, David Ruelle, Alan
Sokal and Marc Thurley for helpful comments.

  %
  %
  %
  %
  %
  %
  %
  %
  %
  %
  %
  %
  %
  %
  %
  %
  %
  %
  %

\appendix

\section{An overview of complexity theoretic terminology}
\label{sec:an-overv-compl}

For the benefit of readers who may not be familiar with terminology
from complexity theory, we provide here an overview of the relevant
definitions.   For more details, we refer to the textbook of Arora and
Barak~\cite{arora09:_comput_compl}.  

The conventional notion of efficiency in complexity theory is
computability in time polynomial in the size of the input.  For
example, the problem of finding the shortest path between two
specified vertices in a graph (presented as an adjacency matrix) can be
solved in ``polynomial time'', e.g., using Djikstra's algorithm.  This
notion of efficiency is captured by the
complexity class P.  For technical reasons we consider \emph{decision
  problems}, where the answer is either `YES' or `NO'.  For example,
the decision version of the shortest path problem would be to determine,
given an integer $\ell$, if there is a path of length at most $\ell$
between the two specified vertices.  A decision problem is said to be
in the class P if there is an algorithm for it that runs in time
polynomial in the size of the input.  Examples of problems known to be
in $P$ include deciding if a graph has an Eulerian tour, deciding
whether an integer (presented in binary) is
prime~\cite{agrawal_primes_2004}, etc.

The class NP is the class of decision problems for which a polynomial
time algorithm can verify the correctness of a `YES' answer given a
\emph{certificate} of correctness.   For example, we consider SAT,
the problem of deciding whether a given Boolean formula has a satisfying
assignment.   While this problem is conjectured to not belong in the
class P, it is clearly in NP: the certificate for a formula being
satisfiable is just a satisfying assignment, if any, of the formula.
Formally, a problem is said to be in NP if there is an algorithm
$\mathcal{A}$ and a polynomial $p$ such that for any instance $I$ of
the problem:
\begin{enumerate}
\item If $I$ is a `YES' instance, then there exists a certificate $C$
  such that on input $(I,C)$, $\mathcal{A}$ runs in time at most
  $p(|I|)$ and outputs `YES'.
\item If $I$ is a `NO' instance, then for any certificate $C$, on
  input $(I,C)$, $\mathcal{A}$ runs in time at most $p(|I|)$ and outputs
  `NO'. 
\end{enumerate}

Another example of a problem in NP is deciding whether a graph $G$ has
a Hamiltonian cycle (where the certificate is a Hamiltonian cycle, if
any, in the graph).  Both of the above examples are also \emph{NP-hard}, that
is, if there is a polynomial time algorithm for either of them, then
there are polynomial time algorithms for all problems in NP.  It is
clear that P $\subseteq$ NP, but it is a long standing conjecture in
complexity theory that P $\subsetneq$ NP.  NP-hardness has long been
the standard notion of intractability for decision problems, since the
pioneering work of Cook~\cite{cook_complexity_1971},
Levin~\cite{levin73:_univer_russian} and
Karp~\cite{citeulike:3082225}.  

The class \#P, introduced by Valiant~\cite{val79a}, is a counting
analog of NP: any problem in \#P involves counting the number of
certificates (possibly zero) that would make an algorithm for an
NP-problem accept.  An example of a problem in \#P is counting the
number of satisfying assignments of a given SAT formula.  As in the
case of NP-hardness, a counting problem is called \emph{\#P-hard} if a
polynomial time algorithm for the problem implies the existence of
polynomial time algorithms for all problems in \#P.  

Counting versions of most NP-complete problems (such as counting the
number of satisfying assignments of a Boolean formula) can be shown to
be \#P-hard.  However, there are decision problems in P whose counting
analogs are still \#P-hard, and it is this fact that makes the theory
of \#P-hardness a non-trivial extension of the theory of NP-hardness.  A famous example is counting the number
of dimer coverings (perfect matchings) of a given graph:
Edmonds~\cite{edmonds_paths_1965} showed that there is a polynomial
time algorithm for checking whether a graph has a dimer
covering;\footnote{Edmond's result was, in fact, a crucial step towards
  the definition of the class P.} however, counting the number of
such coverings in general graphs is \#P-hard~\cite{val79a}.
\#P-hardness has since become the standard notion of intractability
for counting problems and for the computation of partition functions
in statistical physics:
indeed, almost all interesting classes of partition functions are
known to be \#P-hard to compute (see, e.g., \cite{bulatov05}).

\section{Proof of Theorem\nobreakspace \ref {thm:newman}}
\label{sec:proof-crefthm:newman}
As indicated earlier, Theorem\nobreakspace \ref {thm:newman} is an easy corollary of the
following result of Newman~\cite{newman_zeros_1974} (restated in our
notation).
\begin{theorem}[{\cite[Theorem 3.2 and eq. 3.5]{newman_zeros_1974}}]
\label{thm:newman-orig}
Consider the ferromagnetic Ising model on a graph $G = (V,E)$ on $n$ vertices,
where we allow the edge potentials also to be variable, with the
condition that the edge potential $\beta_{uv}$ on every edge $uv$
satisfies $0 < \beta_{uv} < 1$.  Let $\inp{z_v}_{v\in V}$ be a
collection of complex vertex activities such that $\abs{z_v} > 1$ for
all $v\in V$.  Then,
\begin{displaymath}
  \Re\inp{M(G,\inp{\beta_{uv}}_{uv\in E}, \inp{z}_{v\in V})} > \frac{n}{2}.
\end{displaymath}
\end{theorem}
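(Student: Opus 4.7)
The plan is to analyze the magnetization one vertex at a time, invoking the multivariate Lee-Yang theorem (Theorem~\ref{thm:lee-yang-2}) both in the full set of $n$ variables and in one variable at a time.

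First I would fix a vertex $v \in V$ and write $Z_I$ as a polynomial of degree one in $z_v$ with the remaining activities held fixed:
\begin{displaymath}
Z_I = z_v A_v + B_v,
\end{displaymath}
where $A_v$ (respectively $B_v$) is the partition function with vertex $v$ pinned to spin $+$ (respectively $-$) and the factor $z_v$ stripped out. The contribution of $v$ to the magnetization then reduces to
\begin{displaymath}
m_v \defeq z_v \pdiff{z_v} \log Z_I = \frac{z_v A_v}{z_v A_v + B_v} = \frac{z_v}{z_v + c_v}, \qquad c_v \defeq B_v/A_v,
\end{displaymath}
provided $A_v \neq 0$.

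To ensure $A_v \neq 0$, I would observe that $A_v$ is, up to the positive factor $\prod_{u \sim v}\beta_{uv}$, the partition function of an Ising model on $G - \inb{v}$ with modified activities $z_u^{\mathrm{eff}} \defeq z_u/\beta_{uv}$ for neighbors $u$ of $v$ and $z_u^{\mathrm{eff}} \defeq z_u$ elsewhere. Since $\abs{z_u} > 1$ for every $u$ and every $\beta_{uv} < 1$, each effective activity satisfies $\abs{z_u^{\mathrm{eff}}} > 1$, so applying Theorem~\ref{thm:lee-yang-2} to each connected component of $G - \inb{v}$ gives $A_v \neq 0$. A second application of Theorem~\ref{thm:lee-yang-2}, this time to $Z_I$ itself, shows that $Z_I \neq 0$ whenever $\abs{z_v} \geq 1$ and all other $\abs{z_u} > 1$; hence the unique root $z_v = -c_v$ of $Z_I$ viewed as a polynomial in $z_v$ must satisfy $\abs{c_v} < 1$ strictly.

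A direct computation now yields the identity
\begin{displaymath}
2\,\Re\inp{m_v} - 1 = \frac{\abs{z_v}^2 - \abs{c_v}^2}{\abs{z_v + c_v}^2},
\end{displaymath}
which is strictly positive since $\abs{z_v} > 1 > \abs{c_v}$ (and the denominator is nonzero for the same reason). Summing over all $v \in V$ then gives $\Re\inp{M(G)} = \sum_v \Re\inp{m_v} > n/2$, as desired. The main obstacle I anticipate is the pinning step: carefully identifying $A_v$ with a full Ising partition function on $G - \inb{v}$ whose modified activities remain in the Lee-Yang regime, with some bookkeeping required if $G - \inb{v}$ is disconnected. Once that reduction is in place, the rest of the argument -- a single invocation of Lee-Yang in one variable, the elementary identity for $2\,\Re\inp{m_v} - 1$, and the summation over $v$ -- is essentially mechanical.
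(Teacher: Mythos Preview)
Your argument is correct, but note that the paper does not actually prove this statement: Theorem~\ref{thm:newman-orig} is quoted verbatim from Newman~\cite{newman_zeros_1974} and then used as a black box to derive Theorem~\ref{thm:newman} by a limiting argument. So there is no ``paper's own proof'' to compare against; you have supplied a self-contained proof where the paper simply cites the literature.

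A few small remarks on your write-up. First, Theorem~\ref{thm:lee-yang-2} as stated in the paper assumes a uniform edge potential $\beta$; since Theorem~\ref{thm:newman-orig} allows variable $\beta_{uv}$, you are implicitly using the variable-potential, possibly-disconnected form of Lee--Yang (which, as the footnote in Appendix~\ref{sec:proof-crefthm:newman} notes, is covered by Asano's proof). You should say so explicitly. Second, your claim that $\abs{c_v} < 1$ strictly can fail if $v$ is an isolated vertex (then $c_v = 1$); however your conclusion only needs $\abs{c_v} \leq 1 < \abs{z_v}$, which always holds, so the identity $2\Re(m_v) - 1 = (\abs{z_v}^2 - \abs{c_v}^2)/\abs{z_v + c_v}^2$ is still strictly positive. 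Third, your identification of $A_v$ with a rescaled Ising partition function on $G - \{v\}$ is correct: pulling out $\prod_{u \sim v}\beta_{uv}$ turns the pinning into multiplying each neighbour's activity by $1/\beta_{uv}$, which keeps all effective activities strictly outside the unit circle, so Lee--Yang applies componentwise and gives $A_v \neq 0$. With these clarifications the proof goes through cleanly, and is in fact more direct than the route via Newman's paper.
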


We now proceed with the proof of Theorem\nobreakspace \ref {thm:newman}.
\begin{proof}[Proof of Theorem\nobreakspace \ref {thm:newman}]
  Let $w$ be any weight assignment (not necessarily legal) of positive
  integral weights to the vertices of $G$.  Consider the graph $H$
  obtained from $G$ by appending to each vertex $v$ of $G$ a chain
  $C_v$ of $w(v) - 1$ vertices.  Further, we let the edge potential be
  $\beta$ on all edges of $H$ which were present in $G$, and $0 <
  \gamma < 1$ on all the edges which are either part of some $C_v$, or connect
  a vertex $v$ to its associated chain $C_v$.  Let $\inp{y_v}_{v\in
    V}$ be a set of vertex activities on $V$.  Henceforth, we will
  drop the subscript and refer to this set of activities as
  $\inp{y_v}$.  With a slight abuse of notation, we also denote by
  $\inp{y_v}$ the collection of activities on $H$ such that for any vertex $x$
  in $H$ such that $x \in C_v$ for some $v \in V$, we have $y_x = y_v$.

  Now consider any collection of activities such that $\abs{y_v} > 1$ for
  all $v \in V$.  From Theorem\nobreakspace \ref {thm:newman-orig}, we get that
  for any $\gamma \in \inp{0,1}$,
  \begin{equation}
    \Re\inp{M(H, \inb{\beta,\gamma}, \inp{y_v})} >
    \frac{n}{2}.\label{eq:26}
  \end{equation}
  Since $\abs{y_v} > 1$ for all $v$, the Lee-Yang
  theorem\footnote{Although we stated Theorem\nobreakspace \ref {thm:lee-yang-2} only for
    uniform edge potentials, Asano's proof~\cite{asano_lee-yang_1970}
    in fact supports our current conclusion with variable edge
    potentials and a possibly disconnected graph.} implies that both
  $Z_I(H, \inb{\beta,\gamma}, \inp{y_v})$ and $Z_w(G, \beta,
  \inp{y_v})$ are non-zero (when $0 < \gamma < 1$).  We can
  therefore take the limit $\gamma \rightarrow 0$ in eq.\nobreakspace \textup {(\ref {eq:26})} to
  get
  \begin{equation}
    \label{eq:27}
   \frac{n}{2} \leq \lim_{\gamma\rightarrow 0} \Re\inp{M(H,
      \inb{\beta,\gamma}, \inp{y_v})}  = \Re\inp{\frac{Z_w'(G,\beta,\inp{y_v})}{Z_w(G,\beta,\inp{y_v})}} =  \Re\inp{M(G, \beta, \inp{y_v})}.
  \end{equation}
  We now take a sequence $\inp{\inp{y^\ell_v}}_{\ell=1}^{\infty}$ of activity
  assignments such that  $\abs{y^\ell_v} > 1$ for all $\ell$ and $v$ and
  such that $\lim y_v^\ell = z_v$.  Since we assume in the hypothesis of
  the theorem that $Z_w(G, \beta, (z_v)_{v\in V}) \neq 0$, we can take the limit $\ell
  \rightarrow \infty$ in eq.\nobreakspace \textup {(\ref {eq:27})} to get 
  \begin{displaymath}
    \frac{n}{2} \leq \lim_{\ell\rightarrow\infty} \Re\inp{M\inp{G, \beta,
      \inp{y^\ell_v}}} =   \Re\inp{M(G)},
  \end{displaymath}
  which completes the proof. 
\end{proof}

\section{Hardness with a smaller blowup in degree}
\label{sec:bdd-degree-hard}
Recall that, in our proofs of Theorems~\ref{thm:ising-main} and
\ref{thm:monomer-dimer-main} in Sections~\ref{sec:hardn-comp-magn} and
\ref{sec:hardn-comp-aver}, we realized different values of $\lambda$
required for the interpolation by attaching $k$ extra vertices to each
vertex of $G$.  This necessarily entails a large increase in the degree of
$G$. In this section, we give an alternative way of realizing
different values of $\lambda$ which entails an increase in degree of
exactly one, and which therefore allows us to complete the proofs of
the stronger, degree-bounded version of Theorems \ref{thm:ising-main}
and \ref{thm:monomer-dimer-main}.

We denote by $P_k$ a path of $k$ vertices. Let $p_k^+$ (respectively,
$p_k^-$) be the partition function $Z_I(P_k, \beta, \lambda)$
restricted to configurations in which the leftmost vertex of $P_k$ is
fixed to spin `$+$' (respectively, `$-$').  We also set $r_k \defeq
\frac{p_k^+}{p_k^-}$.  Similarly, we denote by $y_k$ the partition
function $Z_M(P_k,\lambda)$, where we assume that all edges in $P_k$
have weight one, and suppress the dependence on edge weights for
clarity of notation.
Notice that $p_1^- = 1$ and $p_1^+ = r_1 = y_1 =
\lambda$.  We further define $y_0 = 1$. The following recurrence
relations show that, for fixed $\beta$ and $\lambda$, $p_k^+, p_k^-$,
$r_k$ and $y_k$ can be computed in time polynomial in $k$:
\begin{align}
  p_k^+ &= \lambda(\beta p_{k-1}^- + p_{k-1}^+);\label{eq:14}\\
  p_k^- &= \beta p_{k-1}^+ + p_{k-1}^-;\label{eq:15}\\
  r_k &= \lambda\frac{\beta + r_{k-1}}{1 + \beta r_{k-1}};\label{eq:16}\\
  y_k &= \lambda y_{k-1} + y_{k-2}\label{eq:28}.
\end{align}
Notice that $p_k^+, p_k^-$, $r_k^+$ and $y_{k}$ are all functions of
$\lambda$.  We note that values of their derivatives with respect to
$\lambda$ can also be computed in time polynomial in $k$ via the
following recurrence relations: $\dot{p}_1^- = \dot{y}_0 = 0, \dot{p}_1^+ =
\dot{r}_1 = \dot{y}_1 = 1$, and
\begin{align}
  \dot{p}_k^+ &= p_k^+/\lambda + \lambda(\beta \dot{p}_{k-1}^- + \dot{p}_{k-1}^+);\label{eq:17}\\
  \dot{p}_k^- &= \beta \dot{p}_{k-1}^+ + \dot{p}_{k-1}^-;\label{eq:18}\\
  \dot{r}_k &= \frac{\dot{p}_k^+p_k^- -
    p_k^+\dot{p}_k^-}{\inp{p_k^-}^2};\label{eq:19}\\
  \dot{y_k} &= y_{k-1} + \lambda\dot{y}_{k-1} +
  \dot{y}_{k-2}.\label{eq:29}
\end{align}
Here, we use the dot notation for the derivative with respect to
$\lambda$.  Using a simple induction, one can also show that when
$\beta < 1$, $\dot{r}_k > 0$ for all $k$.

Now consider a connected graph $G$.  For $k \geq 1$, we define $G(k)$
as the graph obtained by attaching to each vertex $v$ of $G$ a
different instance of the path $P_k$, such that $v$ is connected to
the ``leftmost'' vertex of $P_k$ via an edge.  Notice that the maximum
degree of $G(k)$ is one more than the maximum degree of $G$.
\noindent We first consider the Ising model on the graphs $G(k)$.  We have
\begin{equation}
  \label{eq:20}
  Z_I(G(k), \beta, \lambda) = (p_{k+1}^-)^nZ_I(G, \beta, \lambda_k),
\end{equation}
where $n$ is the number of vertices in $G$ and $\lambda_k = r_{k+1}$.
Notice that when $0 < \beta < 1$, the sequence $\lambda_k$ is strictly
increasing and greater than $1$ (respectively, strictly decreasing and
less than $1$) when $\lambda > 1$ (respectively, when $\lambda < 1$):
this follows from the observation that the right hand side of the
recurrence (\ref{eq:16}) is a strictly increasing function of
$r_{k-1}$, and that $r_2 > r_1$ (respectively, $r_2 < r_1$) when
$\lambda > 1$ (respectively, when $\lambda < 1$).  We also have
\begin{equation}
  M(G(k), \beta, \lambda) = \frac{n\lambda\dot{p}_{k+1}^-}{p_{k+1}^-} + \frac{\lambda\dot{r}_{k+1}}{r_{k+1}}M(G,\beta, \lambda_k). \label{eq:21}
\end{equation}
\noindent We now complete the proof of Theorem\nobreakspace \ref
{thm:ising-main}.
\begin{proof}[Proof of Theorem\nobreakspace \ref {thm:ising-main}]
  As in the partial proof in Section~\ref{sec:hardn-comp-magn}, we
  assume $\lambda > 1$ (since the case $\lambda < 1$ is symmetrical)
  and suppose that we have a polynomial time algorithm $\mathcal{A}$
  which, given a connected graph $G$ of maximum degree at most $\Delta
  \geq 4$, outputs the mean magnetization $M(G,\beta,\lambda)$ in
  polynomial time.

  Now consider any connected graph $G$ of maximum degree at most
  $\Delta - 1 \geq 3$.  As shown in the partial proof in
  Section~\ref{sec:hardn-comp-magn}, Theorem~\ref{thm:lee-yang-ext}
  implies that
  if we can efficiently evaluate $M(G, \beta, z)$ at $2n + 2$ distinct
  values of $z$ using our hypothetical algorithm ${\cal A}$, we can
  uniquely determine the coefficients of $Z_I(G,\beta,z)$, and hence
  also $Z_I(G,\beta, 1)$, in polynomial time.  In view of
  Theorem\nobreakspace \ref {thm:ising-partition-hard}, this would
  imply that the problem of computing the mean magnetization in graphs
  of maximum degree at most $\Delta$ for parameter values $\beta$
  and $\lambda$ is \#P-hard.

  In order to evaluate $M(G, \beta, z)$ at $2n + 2$ distinct values,
  we evaluate $M(G(k),\allowbreak \beta,\allowbreak \lambda)$ for $1 \leq k \leq 2n
  + 2$ using our hypothetical algorithm ${\cal A}$.  Notice that this
  can be done since the construction of the $G(k)$ (as given in this
  section) implies that they have maximum degrees which are at most one
  larger than the maximum degree of $G$.  Using eqs.\nobreakspace
  \textup {(\ref {eq:14})} to\nobreakspace \textup {(\ref {eq:19})}
  and\nobreakspace \textup {(\ref {eq:21})}, and the fact that
  $\dot{r}_k> 0$ for all $k$, we can then determine $M(G,\beta,
  \lambda_k)$ in polynomial time.  Since $\lambda_k$ is a strictly
  increasing sequence, these evaluations are at distinct points, and
  hence the reduction is complete.
\end{proof}

\noindent We now consider the monomer-dimer model on the graphs
$G(k)$. We have
\begin{equation}
  \label{eq:30}
  Z_M(G(k), \lambda) = y_k^nZ_M(G, \lambda_k),
\end{equation}
where $n$ is the number of vertices in $G$ and $\lambda_k =
y_{k+1}/y_{k}$.  We also have
\begin{equation}
  U(G(k), \lambda) = n\lambda t_k + \lambda(t_{k+1}-t_{k}) U(G,\lambda_k)\label{eq:31},
\end{equation}
where $t_k = \dot{y}_k/y_{k}$.  It turns out that the sequence
$\inp{\lambda_{2k}}_{k \geq 0}$ is strictly increasing and hence
consists of distinct values, and further that $t_{2k+1} - t_{2k} > 0$
for all $k$.  This follows easily from the following explicit
solutions for the $y_k$ and the $\lambda_k$:
\begin{displaymath}
  y_k = \frac{\xi^{k+1}-\eta^{k+1}}{\xi-\eta};\qquad\lambda_k = \frac{\xi^{k+2} - \eta^{k+2}}{\xi^{k+1} - \eta^{k+1}}, 
\end{displaymath}
where
\begin{displaymath}
  \xi = \frac{1}{2}\inp{\lambda + \sqrt{\lambda^2 + 4}} > 0; \qquad 
  \eta = \frac{1}{2}\inp{\lambda - \sqrt{\lambda^2 + 4}} < 0.
\end{displaymath}
Notice that $t_{k+1} - t_{k} > 0$ for even $k$ implies that for such
$k$, we can determine $U(G,\lambda_k)$ given $U(G(k),\lambda)$, using
equations (\ref{eq:28}), (\ref{eq:29}) and (\ref{eq:31}).  We can now
complete the proof of 
Theorem~\ref{thm:monomer-dimer-main} for the bounded degree case.
\begin{proof}[Proof of Theorem~\ref{thm:monomer-dimer-main}]
  As in the partial proof in Section~\ref{sec:hardn-comp-aver}, we fix
  any $\lambda > 0$, and suppose that there exists a polynomial time
  algorithm $\mathcal{B}$ which, given a connected graph $H$ with edge
  weights in the set $\inb{1, 2, 3}$, and of maximum degree at most $\Delta
    \geq 5$, outputs $D(H,\lambda)$ (recall that we are
  suppressing explicit dependence on the edge weights for clarity of
  notation).  Given a {\sc Monotone 2-SAT} formula $\phi$, we then
  produce the graph $G = \mathcal{A}(\phi)$ in polynomial time.
  Notice that in the construction of $G$ as given in
  Section~\ref{sec:overv-our-reduct}, each vertex has degree at most $4
  \leq \Delta - 1$: this corresponds to the maximum of the in-degrees
  and the out-degrees over all vertices in the directed version of the
  reduction.

  As argued in the partial proof in Section~\ref{sec:hardn-comp-aver},
  Theorem\nobreakspace \ref {thm:heilmann-lieb} and the existence of a
  Hamiltonian path in $G$ together imply that if we could use
  algorithm ${\cal B}$ to evaluate $D(G,z)$ (and hence $U(G, z)$) at
  $2n + 2$ different values of $z$, then we can determine the number
  of satisfying assignments of $\phi$ in polynomial time.  This would
  in turn imply that computing $D(G,\lambda)$ for graphs of maximum
  degree at most $\Delta$ is \#P-hard.

  As before, in order to realize other values of $\lambda$, we
  consider the graphs $G(k)$ (as described in this section), for $k =
  0, 2, 4, \ldots 4n + 4$.  Notice that the maximum degree of $G(k)$ is
  one more than that of $G$, and hence is at most $\Delta$.
  Further, as argued in the remarks following eqs. (\ref{eq:30}) and
  (\ref{eq:31}), these choices of $k$ ensure that the values
  $\lambda_k$ are distinct, and that $U(G,\lambda_k)$ can be easily
  determined from $U(G(k), \lambda)$.  We can therefore determine
  $U(G,z)$ at $2n + 2$ different values of $z$ by running
  $\mathcal{B}$ on the $G(k)$, as required.
\end{proof}

\section{Hardness for general two state feromagnetic spin systems and\\
  planar graphs}
\label{sec:hardness-general-two}

We now show how to extend our results to general two-state
ferromagnetic spin systems.  Recall that a general two-state spin
system~\cite{goljerpat03} is parametrized by a $(+,+)$ edge potential
$\alpha_1$, a $(-,-)$ edge potential $\alpha_2$, and a vertex
activity $\lambda$.  As before, given a graph $G = (V,E)$, we define
a probability distribution over the set of configurations $\sigma : V
\rightarrow \inb{+,-}$ via the weights $w_S(\sigma)$ given by 
\begin{displaymath}
  w_S(\sigma) = \lambda^{p(\sigma)}\alpha_1^{e_+(\sigma)}\alpha_2^{e_-(\sigma)}, 
\end{displaymath}
where $e_+(\sigma)$ (respectively, $e_-(\sigma)$) denotes the number
of edges with `$+$' (respectively, `$-$') spin on both end-points,
while $p(\sigma)$ denotes the number of vertices with $+$ spin.  The
partition function $Z_S(G, \alpha_1, \alpha_2, \lambda)$ and the
magnetization $M_S(G,\alpha_1, \alpha_2, \lambda)$ are given by
\begin{align*}
  Z_S(G,\alpha_1, \alpha_2, \lambda) &\defeq \sum_{\sigma \in
    \inb{+,-}^V}w_S(\sigma);\\
  M_S(G,\alpha_1, \alpha_2, \lambda) := \avge{p} &= \frac{\sum_{\sigma}p(\sigma)w_S(\sigma)}{Z_S(G,\alpha_1,\alpha_2,\lambda)}.
\end{align*}
\begin{remark}
  The Ising model corresponds to the special case $\alpha_1 = \alpha_2
  = \beta$.
\end{remark}
It is well known that general two-state spin systems can be
represented in terms of an Ising model in which the activity at each
vertex depends upon the degree of the vertex~\cite{goljerpat03}.  In
particular, if $G$ is a $\Delta$-regular graph then all vertex
activities in the equivalent Ising model are the same, and one has
\begin{equation}
  w_S(\sigma) = \alpha_2^{|E|}w_I(\sigma)\label{eq:41}
\end{equation}
where the Ising model has an edge potential $\beta =
1/\sqrt{\alpha_1\alpha_2}$ and a vertex activity $\lambda' =
\lambda(\alpha_1/\alpha_2)^{\Delta/2}$.  A two-spin system is called
\emph{ferromagnetic} if the above translation produces a ferromagnetic
Ising model, that is, when $\alpha_1\alpha_2 \geq 1$.

However, the above translation does not allow us to directly translate
our hardness result for the ferromagnetic Ising model, since our
results were not derived for regular graphs.  We will instead do a
reduction similar to the ones done in our earlier proof, but starting
from the following somewhat stronger hardness result for the partition
function.

\begin{remark}
  In this section, we allow graphs to have parallel edges and
  self-loops.  In computing the degree of a vertex, each self-loop is
  counted twice (since it is incident twice on the vertex) and each
  parallel edge is counted separately.  As observed in the remark
  following the proof of Theorem~\ref{thm:wsglp} in
  Section~\ref{sec:an-extended-lee}, our extended Lee-Yang theorem
  (Theorem~\ref{thm:lee-yang-ext}) holds also in this setting.
\end{remark}
\begin{theorem}[{\cite[Theorem 1]{cai_dichotomy_2010}}] 
 \label{thm:hardness-regular}
  Fix $\alpha_1, \alpha_2 > 0$ with $\alpha_1\alpha_2 > 1$ and $\Delta
  \geq 3$.  The problem of computing the partition function $Z_S(G,
  \alpha_1, \alpha_2, 1)$ on $\Delta$-regular graphs is \#P-hard.
\end{theorem}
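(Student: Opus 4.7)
The plan is to reduce from the known \#P-hardness of computing the Ising partition function on bounded-degree graphs (Theorem~\ref{thm:ising-partition-hard}), exploiting the translation (\ref{eq:41}). On any $\Delta$-regular graph $G$, we have $Z_S(G,\alpha_1,\alpha_2,1) = \alpha_2^{|E|}\,Z_I(G,\beta,\lambda')$ with $\beta = 1/\sqrt{\alpha_1\alpha_2}$ and $\lambda' = (\alpha_1/\alpha_2)^{\Delta/2}$. The ferromagnetic hypothesis $\alpha_1\alpha_2 > 1$ forces $\beta \in (0,1)$, placing us squarely in the regime of Theorem~\ref{thm:ising-partition-hard}. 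Hence it suffices to show that computing $Z_I(G,\beta,\lambda')$ is \#P-hard on $\Delta$-regular graphs for these specific values of $\beta$ and $\lambda'$.

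I would split the argument by cases. In the symmetric case $\alpha_1 = \alpha_2$, we have $\lambda' = 1$, and the task reduces to lifting Theorem~\ref{thm:ising-partition-hard} from ``bounded maximum degree $\Delta$'' to ``exactly $\Delta$-regular''. This can be accomplished by a standard degree-balancing gadget: to each low-degree vertex, attach a fixed symmetric ferromagnetic tree whose partition-function contribution factors out as an explicit multiplicative constant. For the general asymmetric case $\alpha_1 \neq \alpha_2$, so that $\lambda' \neq 1$, I would then use rational interpolation in $\lambda$, in the spirit of Section~\ref{sec:hardn-comp-magn}: given an oracle for $Z_I$ at the fixed activity $\lambda'$, construct for each input $G$ to the $\lambda=1$ problem a polynomial-size family of auxiliary $\Delta$-regular graphs $\{G_k\}$ whose partition functions encode $Z_I(G,\beta,\cdot)$ at distinct shifted activity values. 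Because Theorem~\ref{thm:lee-yang-ext} shows that $Z_I(G,\beta,\lambda)$ has only simple zeros, these evaluations determine the full polynomial by rational interpolation, recovering $Z_I(G,\beta,1)$.

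The main obstacle will be engineering the family $\{G_k\}$ so that it simultaneously preserves $\Delta$-regularity, realizes enough distinct effective activity values, and has an explicit, efficiently computable multiplicative contribution to the partition function. The simple path-attachment gadgets of Appendix~\ref{sec:bdd-degree-hard} destroy regularity, so one must work harder; a natural candidate is symmetric attachments containing self-loops or parallel edges (which fit inside our framework, as noted after Theorem~\ref{thm:wsglp}) that can be tuned to produce the required range of effective activities while keeping every vertex at degree exactly $\Delta$. Pinning down a construction that is both regularity-preserving and flexible enough to produce polynomially many distinct effective $\lambda$-values is the technical crux; given such a family, the remaining steps are straightforward extensions of techniques already developed in this paper.
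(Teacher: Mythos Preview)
The paper does not prove this theorem at all: it is quoted verbatim from Cai and Kowalczyk~\cite{cai_dichotomy_2010} and used as a black-box hardness result for the partition function on regular graphs. There is therefore no ``paper's own proof'' to compare against, and your proposal is attempting to re-derive an external result that the authors simply import.

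That said, two remarks on the proposal itself. First, your appeal to Theorem~\ref{thm:lee-yang-ext} and rational interpolation is misplaced. If you had an oracle for the partition function $Z_I(\cdot,\beta,\lambda')$ and wanted to recover $Z_I(G,\beta,1)$, then evaluating $Z_I(G,\beta,\cdot)$ at $n+1$ distinct activity values already suffices by ordinary \emph{polynomial} interpolation; simplicity of the zeros is irrelevant here. The simple-zeros machinery in this paper is needed only when interpolating a genuine \emph{ratio} such as the magnetization $\mathcal{D}Z_I/Z_I$, not the partition function itself. Second, the heart of your plan---building a polynomial family of $\Delta$-regular graphs $\{G_k\}$ realizing distinct effective activities while contributing only a known multiplicative factor---is precisely the nontrivial content of the Cai--Kowalczyk result, and you have left it as an ``obstacle'' rather than a construction. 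The path gadgets of Appendix~\ref{sec:bdd-degree-hard} break regularity, as you note, and the suggestion of ``symmetric attachments with self-loops'' does not obviously yield enough distinct effective activities (or even handle odd degree deficits). So your sketch identifies the right shape of argument but defers exactly the step that constitutes the theorem.
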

We will then prove the following theorem. We will also show later in this
section that the theorem can be strengthened so that the \#P-hardness
holds even when the input is restricted to planar graphs.
\begin{theorem}
  \label{thm:hardness-general-two}
  Fix $\alpha_1, \alpha_2, \lambda > 0$ and $\Delta \geq 4$ such that
  $\alpha_1\alpha_2 > 1$ .  The problem of computing the magnetization
  $M_S(G, \alpha_1, \alpha_2, \lambda)$ on connected graphs of degree
  at most $\Delta$ is \#P-hard, except when $\alpha_1 = \alpha_2$ and
  $\lambda = 1$, in which case it can be solved in polynomial time.
\end{theorem}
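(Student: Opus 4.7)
The plan mirrors the proof of Theorem~\ref{thm:ising-main}: we reduce from the \#P-hardness of the two-spin partition function on $(\Delta-1)$-regular graphs (Theorem~\ref{thm:hardness-regular}) via rational interpolation, using the Ising correspondence (\ref{eq:41}) to import the extended Lee-Yang Theorem~\ref{thm:lee-yang-ext} into the two-spin setting. The polynomial-time case $\alpha_1 = \alpha_2$, $\lambda = 1$ is trivial: then $w_S(\sigma)$ depends only on the number of disagreeing edges of $\sigma$, so the Gibbs distribution is invariant under the global spin-flip involution $\sigma \leftrightarrow \bar\sigma$, and since $p(\sigma) + p(\bar\sigma) = n$ we obtain $\avge{p} = n/2$ exactly.

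For the remaining parameter settings, fix a connected $(\Delta-1)$-regular input graph $G$ on $n$ vertices, and set $P(z) \defeq Z_S(G, \alpha_1, \alpha_2, z)$ and $Q(z) \defeq z\,\partial_z P(z)$, both of degree $n$ in $z$, so that $M_S(G, \alpha_1, \alpha_2, z) = Q(z)/P(z)$. Since $G$ is $(\Delta-1)$-regular, (\ref{eq:41}) gives $P(z) = \alpha_2^{|E|}\,Z_I\bigl(G, \beta, z(\alpha_1/\alpha_2)^{(\Delta-1)/2}\bigr)$, where $\beta = 1/\sqrt{\alpha_1\alpha_2} < 1$ (since $\alpha_1\alpha_2 > 1$). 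Because $z \mapsto z(\alpha_1/\alpha_2)^{(\Delta-1)/2}$ is a positive rescaling, Theorem~\ref{thm:lee-yang-ext} implies that $P$ has only simple zeros, all nonzero (as $P(0) = \alpha_2^{|E|} > 0$), and hence $P$ and $Q$ are coprime. To realize many distinct values of the argument $z$ from the single fixed activity $\lambda$, form the graph $G(k)$ by attaching a pendant path of $k$ new vertices to every vertex of $G$; this raises the maximum degree to exactly $\Delta$. Summing out the pendant spins yields the identity $Z_S(G(k), \alpha_1, \alpha_2, \lambda) = (h_k^-)^n\,P(\lambda_k)$, where $h_k^\pm$ is the pendant contribution conditioned on the spin of the attachment vertex and $\lambda_k \defeq \lambda\,h_k^+/h_k^-$. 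Both $h_k^\pm$ and their $\lambda$-derivatives satisfy a two-term linear recurrence (the transfer-matrix analogue of (\ref{eq:14})--(\ref{eq:19})), so are computable in $\poly(k)$ time; differentiating the identity above writes $M_S(G(k), \alpha_1, \alpha_2, \lambda)$ as an explicit affine function of $M_S(G, \alpha_1, \alpha_2, \lambda_k)$ with polynomial-time-computable coefficients. The hypothetical magnetization oracle on $G(k)$ therefore supplies the value of $Q(\lambda_k)/P(\lambda_k)$ in polynomial time; applying rational interpolation (Theorem~\ref{thm:rational-interp}) at $2n+2$ distinct $\lambda_k$, together with the known constant term $P(0) = \alpha_2^{|E|}$, recovers $P$ exactly, and in particular the \#P-hard value $P(1) = Z_S(G,\alpha_1,\alpha_2,1)$.

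The main obstacle is verifying that the effective activities $\lambda_k$ take at least $2n+2$ distinct values over a polynomial-size set of $k$, in every non-degenerate parameter regime. The transfer matrix for the $h_k^\pm$ recurrence has characteristic polynomial $t^2 - (\alpha_1\lambda + \alpha_2)\,t + \lambda(\alpha_1\alpha_2 - 1)$, whose discriminant $(\alpha_1\lambda - \alpha_2)^2 + 4\lambda$ is strictly positive and whose constant term $\lambda(\alpha_1\alpha_2 - 1)$ is also strictly positive; hence the eigenvalues $\mu_1 > \mu_2 > 0$ are real and distinct. Expanding $h_k^\pm = c_1^\pm\mu_1^k + c_2^\pm\mu_2^k$ from the initial condition $(h_0^+, h_0^-) = (1,1)$, a short calculation shows that $\lambda_k$ is strictly monotone on an appropriate parity subsequence of $k$ (mirroring the Ising analysis in Appendix~\ref{sec:bdd-degree-hard}), unless $(1,1)$ lies in a single eigenspace of the transfer matrix; this degenerate alignment occurs precisely when $\alpha_1 = \alpha_2$ and $\lambda = 1$, which is exactly the case excluded by the theorem.
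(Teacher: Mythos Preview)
Your overall strategy matches the paper's: reduce from the hardness of $Z_S$ on $(\Delta-1)$-regular graphs, use the Ising correspondence~(\ref{eq:41}) to invoke Theorem~\ref{thm:lee-yang-ext} for coprimality, attach pendant paths to simulate distinct activities, and interpolate. However, your treatment of the degenerate case is incorrect, and this is a genuine gap.

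You claim that the initial vector $(1,1)$ lies in an eigenspace of the pendant transfer matrix precisely when $\alpha_1=\alpha_2$ and $\lambda=1$. This is false. With your characteristic polynomial, one computes directly that $\tilde{T}(1,1)^T$ is proportional to $(\alpha_1\lambda+1,\,\lambda+\alpha_2)^T$, so $(1,1)$ is an eigenvector exactly when $(\alpha_1-1)\lambda=\alpha_2-1$. This is a codimension-one family of admissible parameters, e.g.\ $\alpha_1=2$, $\alpha_2=3$, $\lambda=2$, all of which satisfy $\alpha_1\alpha_2>1$ and are \emph{not} excluded by the theorem. At any such point every $\lambda_k$ equals $\lambda$, so your interpolation collapses and the reduction fails.

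The paper encounters exactly this obstruction and repairs it with an extra gadget: when $(\alpha_1-1)\lambda=\alpha_2-1$, it replaces each edge of the pendant path by two parallel edges, which effectively substitutes $\alpha_i^2$ for $\alpha_i$ in the path recurrence. One then checks that the two equations $(\alpha_1-1)\lambda=\alpha_2-1$ and $(\alpha_1^2-1)\lambda=\alpha_2^2-1$ hold simultaneously only when $\alpha_1=\alpha_2$ and $\lambda=1$ (or $\alpha_1=\alpha_2=1$, already ruled out by $\alpha_1\alpha_2>1$). So after this doubling, the modified $\lambda_k$ are strictly monotone in every remaining case. Your argument needs this second construction (or an equivalent device) to close the gap. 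As a minor aside, since both eigenvalues are positive the sequence $\lambda_k$ is in fact strictly monotone on all of~$k$ in the non-degenerate case; restricting to a parity subsequence is unnecessary here.
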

\begin{remark}
  Notice that when $\alpha_1\alpha_2 =1$, the problem reduces to the
  case of a graph consisting of isolated vertices, and hence can be
  solved in polynomial time.  Similarly, in the case $\alpha_1 =
  \alpha_2$ and $\lambda = 1$, the two spins are symmetric, and the
  magnetization is therefore $n/2$, where $n$ is the number of
  vertices in $G$.
\end{remark}
Before proceeding with the proof of
Theorem~\ref{thm:hardness-general-two}, we will need to analyze the
model on graphs $G(k)$ defined in Appendix~\ref{sec:bdd-degree-hard}.
As before, we begin by analyzing the model on the path $P(k)$.  We
denote by $p_k^{+}$ (respectively, $p_k^-$) the partition function
$Z_S(P_k, \alpha_1, \alpha_1, \lambda)$ restricted to configurations
in which the leftmost vertex is fixed to be `$+$' (respectively,
`$-$').  We also define the ratio $r_k = p_k^+/p_k^-$.  Similarly, we
denote by $m_k^+$ (respectively, $m_k^-$) the average magnetization of
the path $P_k$ conditioned on the leftmost vertex being fixed to `$+$'
(respectively, `$-$').  We have $p_1^+ = r_1 = \lambda, p_1^- = 1$ and
$m_1^+ = 1, m_1^- = 0$, and the following recurrences for $k \geq 1$:
\begin{align}
  p_{k}^+ &= \lambda(\alpha_1p_{k-1}^+ + p_{k-1}^-)\label{eq:34}\\
  p_{k}^- &= \alpha_2p_{k-1}^- + p_{k-1}^+\label{eq:35}\\
  r_{k} &= \lambda\frac{\alpha_1 r_{k-1} + 1}{\alpha_2 + r_{k-1}}\label{eq:38}\\
  m_{k}^+ &= 1 + \frac{\alpha_1m_{k-1}^+p_{k-1}^+ +
    m_{k-1}^-p_{k-1}^-}{\alpha_1p_{k-1}^+ + p_{k-1}^-}\label{eq:36}\\
  m_{k}^- &= \frac{\alpha_2m_{k-1}^-p_{k-1}^- +
    m_{k-1}^+p_{k-1}^+}{\alpha_2p_{k-1}^- + p_{k-1}^+}\label{eq:37}
\end{align}
Under the condition $\alpha_1\alpha_2 > 1$, one can prove using a
simple induction that for all $k \geq 1$, $m_k^+ - m_k^- > 0$, and
that when $(\alpha_1 - 1)\lambda - (\alpha_2 - 1) > 0$ (respectively,
when $(\alpha_1 - 1)\lambda - (\alpha_2 - 1) < 0$), the $r_k$ form a
strictly increasing (respectively, strictly decreasing) sequence, and
hence are all distinct.

For reasons that will become clear shortly, we need the $r_k$ to be
distinct, and hence we will need to handle the remaining case
$(\alpha_1 - 1)\lambda - (\alpha_2 - 1) = 0$ specially.  We observe
that unless $\alpha_1=\alpha_2 = 1$, or $\alpha_1 = \alpha_2$ and
$\lambda = 1$, both of which are excluded in the hypothesis of the
theorem, we cannot have both $(\alpha_1 - 1)\lambda - (\alpha_2 - 1) =
0$ and $(\alpha_1^2 - 1)\lambda - (\alpha_2^2 - 1) = 0$.  To take
advantage of this, we will modify $P(k)$ by replacing each edge in
$P(k)$ by two parallel edges.  We call the resulting graph $P(k)'$,
and again define the quantities $p_k^+, p_k^-. r_k, m_k^+$ and $m_k^-$
by recursion on $P(k)'$.  Notice that for $k = 1$, these quantities
are the same as those for $P(k)$; however for $k \geq 2$, we now need
to modify the recurrences above by replacing $\alpha_1$ and $\alpha_2$
by $\alpha_1^2$ and $\alpha_2^2$ respectively.  As before, we have
$m_k^+ - m_k^- > 0$ for all $k \geq 1$.  Further, by our observation,
the $r_k$ form a strictly monotone sequence.  Thus, in the case
$(\alpha_1 - 1)\lambda - (\alpha_2 - 1) = 0$, we redefine $G(k)$ to
use the paths $P(k)'$ in place of $P(k)$.  In what follows, we will
assume that $G(k)$ are appropriately defined taking into account the
values of $\lambda, \alpha_1$ and $\alpha_2$, and will not explicitly
keep track of the above modification.  Notice that the maximum degree
of $G(k)$ is still at most $\max(\Delta + 1, 3)$, where $\Delta$ is
the maximum degree of $G$.

Given the above definition of $G(k)$, we have the relations
\begin{align}
  Z_S(G(k), \alpha_1, \alpha_2, \lambda) &= (\alpha_2p_k^- +
  p_k^+)^nZ_S(G, \alpha_1,
  \alpha_2, \lambda_k)\label{eq:39}\\
  M_S(G(K), \alpha_1, \alpha_2, \lambda) &= nm_k^- + (m_k^+ -
  m_k^-)M_S(G(k), \alpha_1, \alpha_2, \lambda_k),\label{eq:40}
\end{align}
where $\lambda_k = \lambda(\alpha_1r_k + 1)/(\alpha_2 + r_k)$.  Since
the $r_k$ form a strictly monotone sequence, it follows that (since
$\alpha_1\alpha_2 > 1)$ so do the $\lambda_k$.  In particular, all 
the $\lambda_k$ are distinct. 

\begin{proof}[Proof of Theorem~\ref{thm:hardness-general-two}]
  Proceeding as in the proof of
  Theorem~\ref{thm:ising-partition-hard}, we fix any $\lambda > 0$ and
  $\alpha_1, \alpha_2 > 0$ satisfying $\alpha_1\alpha_1 > 1$, and
  suppose that there exists a polynomial time algorithm $\mathcal{B}$
  which, given a connected graph $H$ of maximum degree at most $\Delta
  \geq 4$, outputs $M_S(G,\alpha_1,\alpha_2, \lambda)$.

  Now consider any connected regular graph $G=(V,E)$ of degree $d :=
  \Delta - 1 \geq 3$ on $n$ vertices.  From the translation in
  eq. (\ref{eq:41}), we see that for any $\lambda > 0$,
   \begin{align*}
     Z_S(G,\alpha_1, \alpha_2,\lambda) &=
     \alpha_2^{|E|}Z_I\inp{G,\beta,\lambda\inp{\frac{\alpha_1}{\alpha_2}}^{d/2}}\text{, and}\\
     M_S(G, \alpha_1,\alpha_2,\lambda) &= M_I\inp{G, \beta, \lambda\inp{\frac{\alpha_1}{\alpha_2}}^{d/2}},
   \end{align*}
   where $\beta = 1/\sqrt{\alpha_1\alpha_2} < 1$.
   Theorem~\ref{thm:rational-interp} along with our main
   Theorem~\ref{thm:lee-yang-ext} then implies that if we can efficiently
   evaluate $M_S(G, \alpha_1, \alpha_2, z)$ at $2n + 2$ distinct
   values of $z$ using our hypothetical algorithm ${\cal B}$, we can
   uniquely determine the coefficients of $Z_S(G, \alpha_1, \alpha_2,
   z)$, and hence also the value of $Z_S(G,\alpha_1, \alpha_2, 1)$, in
   polynomial time.  In view of Theorem~\ref{thm:hardness-regular},
   this would imply that the problem of computing the mean
   magnetization in graphs of maximum degree at most $\Delta$ for
   parameter values $\alpha_1, \alpha_2$ and $\lambda$ is \#P-hard.

   In order to evaluate $M_S(G, \alpha_1, \alpha_2, z)$ at $2n + 2$
   distinct values of $z$, we instead compute
   $M_S(G(k),\alpha_1,\alpha_2, \lambda)$, for $1 \leq k \leq 2n + 2$,
   using our hypothetical algorithm ${\cal B}$.  Notice that this can
   be done since the construction of the $G(k)$ implies that they
   have maximum degrees which are at most one larger than the maximum
   degree of $G$.  Using eqs.\nobreakspace \textup{(\ref {eq:34})}
   to\nobreakspace\textup{~(\ref{eq:37})} and\nobreakspace\textup
   {~(\ref{eq:40})}, and the fact that $m_k^+ - m_k^- > 0$ for all
   $k$, we can then determine $M_S(G,\alpha_1,\alpha_2,\lambda_k)$ in
   polynomial time.  Since $\lambda_k$ is a strictly monotone sequence
   as shown in the discussion above, these evaluations are at distinct
   values of $z$, and hence the reduction is complete.
\end{proof}

\subsection*{Extension to planar graphs}
Cai and Kowalczyk~\cite{cai_dichotomy_2010} also proved the following
planar graph version of Theorem~\ref{thm:hardness-regular}.
\begin{theorem}[{\cite[Theorem 1]{cai_dichotomy_2010}}]
  \label{thm:cai-kowalczyk-planar}
  Fix $\alpha_1, \alpha_2 > 0$ with $\alpha_1\alpha_2 > 1$, $\alpha
  \neq \alpha_2$ and $\Delta \geq 3$.  The problem of computing the
  partition function $Z_S(G, \alpha_1, \alpha_2, 1)$ on
  planar $\Delta$-regular graphs is \#P-hard.
\end{theorem}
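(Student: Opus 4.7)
The statement is cited verbatim from Cai and Kowalczyk, so my ``proof proposal'' really amounts to sketching the general strategy one would use to establish a planar \#P-hardness result of this type, since the paper itself will simply invoke it as a black box. The plan is to reduce from a known \#P-hard partition function problem on planar graphs. Counting perfect matchings is ruled out as a source problem by the FKT algorithm, so a natural starting point is a planar variant of counting satisfying assignments (e.g., planar \#3-SAT or planar \mon), or alternatively the planar partition function of another ferromagnetic two-state system at some ``easy'' parameter point where hardness is already known.

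The core technical ingredients I would use are: (i) \emph{planar gadgets} that simulate effective edge and vertex weights while preserving both planarity and $\Delta$-regularity --- for instance, replacing an edge by a small planar cluster that acts as an edge of an effectively different potential, and using degree-padding gadgets (planar trees of bounded depth attached to each vertex) to restore $\Delta$-regularity; (ii) polynomial interpolation as in Valiant's reductions, using the gadgets in step (i) to evaluate the partition function at sufficiently many distinct effective parameter values so that the coefficients of the partition polynomial of the original instance can be recovered; and (iii) holographic reductions / signature manipulations in the style of Cai, Lu and Xia to move between the partition functions of different planar two-state systems while staying inside the ferromagnetic regime $\alpha_1\alpha_2>1$.

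The main obstacle is the tension between the planarity constraint and the \emph{tractable islands} of the two-state problem on planar graphs. Unlike in the non-planar case, FKT-type Pfaffian methods make a nontrivial part of the two-state landscape solvable in polynomial time when the graph is planar, which is exactly why the hypothesis $\alpha_1\ne\alpha_2$ is needed (the symmetric Ising case at $\lambda=1$ falls into or near a Pfaffian-tractable regime). Navigating around this island, while simultaneously maintaining planarity and $\Delta$-regularity through every gadget and every interpolation step, is the delicate part of the argument; this is precisely what Cai and Kowalczyk accomplish via their dichotomy machinery, and it is the step I would expect to require the most care in any self-contained proof attempt.
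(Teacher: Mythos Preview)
Your assessment is correct: the paper does not give its own proof of this statement but simply quotes it from Cai and Kowalczyk and uses it as a black box in the planar extension of Theorem~\ref{thm:hardness-general-two}. There is therefore nothing in the paper to compare your sketch against; your outline of the Cai--Kowalczyk strategy (planar gadgets, interpolation, holographic reductions, and the need for $\alpha_1\neq\alpha_2$ to avoid the Pfaffian-tractable regime) is a reasonable summary of how such dichotomy results are typically obtained, but it goes well beyond anything the present paper attempts.
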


In order to extend Theorem~\ref{thm:hardness-general-two} to planar
$\Delta$-regular graphs, we consider the cases $\alpha_1 \neq
\alpha_2$ and $\alpha_1 = \alpha_2 = \alpha$ separately.  In case
$\alpha_1 \neq \alpha_2$, we proceed exactly as in the proof of
Theorem~\ref{thm:hardness-general-two} given above, except that we
start with a \emph{planar} $d$-regular graph $G$ in the reduction, and use
Theorem~\ref{thm:cai-kowalczyk-planar} instead of
Theorem~\ref{thm:hardness-regular} as our starting hardness result.
Since $G$ is planar, so are the $G(k)$, and hence we see that
computing $M_S(H, \alpha_1,\alpha_2,\lambda)$ on planar graphs $H$,
for $\alpha_1, \alpha_2$ and $\lambda$ satisfying the condition
$\alpha_1 \neq\alpha_2$ in addition to the conditions of
Theorem~\ref{thm:hardness-general-two} is \#P-hard. 

We now turn to the case $\alpha_1 = \alpha_2 = \alpha > 1$ (with
$\lambda \neq 1$).  In this case, we start with the fact that computing
$Z_S(G, 2\alpha, \frac{\alpha}{2}, 1)$ on planar $\Delta$-regular
graphs is \#-P hard (this is a direct corollary of
Theorem~\ref{thm:cai-kowalczyk-planar}) .  We again proceed exactly as
in proof of Theorem~\ref{thm:hardness-general-two}, starting with an
arbitrary \emph{planar} $d$-regular graph $G$, and noting that the
$G(k)$ are planar too.  Notice that the proof then shows that assuming
the existence of a polynomial time algorithm to compute the
magnetization in planar graphs of degree at most $d+1$, we can
evaluate the coefficients of the polynomial $Z_s(G, \alpha, \alpha,
z)$, and hence also the quantity $Z_S(G, \alpha, \alpha, 2^d)$.
However we then use the translation to the Ising model given above to
see that
\begin{displaymath}
  Z_S(G,\alpha, \alpha, 2^d) =
  \inp{\frac{\alpha}{2}}^{|E|}Z_S\inp{G, 2\alpha, \frac{\alpha}{2}, 1},
\end{displaymath}
which shows that we can also evaluate $Z_S(G, 2\alpha,
\frac{\alpha}{2}, 1)$ in polynomial time.  This establishes the
\#P-hardness in the remaining case $\alpha_1 = \alpha_2$ (with
$\lambda \neq 1$).  

We thus see that in Theorem~\ref{thm:hardness-general-two}, the input
graphs can be restricted to be planar, and the same hardness result
still holds.

\section{XOR-gadget}
\label{sec:xor-gadget}
\begin{claim}
  The total weight of cycle covers of the XOR-gadget in
  Figure\nobreakspace \ref{fig:xor-gadget} is $2$ when either
  \begin{itemize}
  \item $a$ is connected to an external incoming edge and $d$ is
    connected to an external outgoing edge;
  \end{itemize}
  or
  \begin{itemize}
  \item $a$ is connected to an external outgoing edge and $d$ is
    connected to an external incoming edge.
  \end{itemize}
  For all other external connections of $a$ and $d$, the total weight
  of cycle covers of the gadget is $0$.  
\end{claim}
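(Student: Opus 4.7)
The claim is verified by direct enumeration, a standard technique for analyzing XOR gadgets in the style of Valiant~\cite{val79a}. My approach is to treat the XOR-gadget as a small weighted directed graph on the four vertices $\{a,b,c,d\}$ and, for each possible external configuration at the ports $a$ and $d$, sum the weights of all internal cycle covers consistent with that configuration.

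First, I would read off the complete list of internal edges of the XOR-gadget from Figure~\ref{fig:xor-gadget}, together with their weights. Two edges, namely $b\to a$ and the self-loop $c\to c$, carry weight $-1$; all remaining edges carry positive weights. This structure is what drives the whole argument: the two $-1$ weights are precisely what force the cancellations in the ``forbidden'' external configurations.

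Next, I would translate each external connection pattern into boundary degree constraints for an internal cycle cover. Since every vertex in a cycle cover has in-degree and out-degree exactly $1$, an external incoming edge used at $a$ forces $a$ to have $0$ internal incoming and $1$ internal outgoing edge; an external outgoing edge used at $a$ forces $a$ to have $1$ internal incoming and $0$ internal outgoing; and an absence of either external edge at $a$ forces $a$ to have exactly one of each internally. The same holds for $d$. The vertices $b$ and $c$ always require exactly one internal incoming and one internal outgoing edge.

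With these boundary conditions in hand, the proof splits into a constant number of cases according to the external configurations at $a$ and $d$. In the two ``pass-through'' configurations described in the claim, I would enumerate the internal edge subsets satisfying the appropriate degree constraints at all four vertices and verify that the signed weight sum equals exactly $2$, coming from two surviving monomials of weight $+1$ each. In every other external configuration, I would exhibit an involution on the set of admissible internal cycle covers that pairs covers differing by inclusion of a single $-1$-weight edge, so that each pair contributes $0$ and the total is $0$.

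The only real obstacle is bookkeeping: although the gadget has only four vertices, it has several edges including two self-loops, so the enumeration must be carried out carefully to avoid omissions or double counting, and the signs contributed by $b\to a$ and $c\to c$ must be tracked accurately. Since the total number of cases and of internal cycle covers per case is small, this is ultimately a finite, mechanical verification; the content of the claim is that the specific choice of $-1$ weights in the modified XOR-gadget has been engineered precisely so that the desired XOR behaviour emerges from these cancellations while simultaneously keeping the two $-1$-weight edges available for inclusion in the alternating Hamiltonian paths required by Observation~\ref{obv:xor-paths-1}.
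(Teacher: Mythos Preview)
Your approach is essentially the same as the paper's: a finite case enumeration over external configurations at $a$ and $d$. Two small points of difference are worth noting.

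First, the paper disposes of all external configurations in which the total number of incoming external edges at $\{a,d\}$ differs from the total number of outgoing external edges by a single parity observation: passing to the bipartite representation $\bip{G}$, such a configuration leaves the two sides unbalanced, so no perfect matching (and hence no cycle cover) exists at all. This handles most of the ``other'' cases in one line, with nothing to enumerate or cancel.

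Second, for the remaining balanced configurations the paper simply invokes brute-force enumeration; it does not construct a sign-reversing involution. Your proposed involution, as phrased (``pairs covers differing by inclusion of a single $-1$-weight edge''), does not quite work: toggling a single edge of a cycle cover violates the in/out-degree constraints, so the two sides of such a pair cannot both be cycle covers. You would need instead an edge-swap involution (exchanging a $-1$-weight edge for another edge with matching endpoints), or---as the paper does---just list the handful of covers and check the sum directly. Since the gadget has only four vertices, the direct listing is the path of least resistance.
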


\begin{proof}
  When the total number of incoming external edges at $a$ and $d$ is
  not equal to the total number of outgoing external edges, the
  XOR-gadget cannot admit a cycle cover due to parity constraints, and
  thus, the total weight of all cycle covers in these cases is
  trivially zero.  A simple way to see this is that a cycle cover
  corresponds to a perfect matching in the natural undirected
  bipartite representation of the gadget discussed above.  When the
  numbers of external incoming and outgoing edges are not equal, the
  bipartite graph does not remain balanced and hence cannot have a
  perfect matching.  For all other configurations, in which the number
  of external incoming and outgoing edges are equal, the weights of
  all cycle covers can be shown to have the claimed value by
  exhaustive enumeration.
\end{proof}

\end{document}